\newcommand{\deftheorem}[3]{\@ifundefined{#1}{\newtheorem{#1}{#2}[#3]}{}}
\newcommand{\HistoryNameColumn}[1]{\vspace{-1.2cm} #1}
\newcommand{\HistoryDiagramColumn}[1]{\scalebox{0.85}{#1}}%
\newcolumntype{H}{>{\collectcell\HistoryNameColumn} >{\centering\arraybackslash}m{0.25in}  <{\endcollectcell}}
\newcolumntype{D}{>{\collectcell\HistoryDiagramColumn}{l}<{\endcollectcell}}
\newcommand\figurescale{1.0}    
\newcommand{\todo}[1]{}
\newcommand\etal{\emph{et al.}\xspace}
\definecolor{yellow-green}{rgb}{0.6, 0.8, 0.2}
\newcommand{\algcolor}[2]{%
  \hskip-\ALG@thistlm\colorbox{#1}{\parbox{\dimexpr\linewidth-2\fboxsep}{\hskip\ALG@thistlm\relax #2}}%
}
\renewcommand\mathsf{\mathit}
\begin{document}
%
\title{On Mixing Eventual and Strong Consistency: Acute Cloud Types}
%
%
%
%

\author{Maciej~Kokoci\'nski,
        Tadeusz~Kobus, 
        Pawe{\l}~T.~Wojciechowski
\IEEEcompsocitemizethanks{\IEEEcompsocthanksitem 
The authors are with the Institute of Computing Science, Poznan University of Technology, 
60-965 Pozna\'n, Poland. \protect\\
E-mail: \{Maciej.Kokocinski,Tadeusz.Kobus,Pawel.T.Wojciechowski\}{\hfil\break}@cs.put.edu.pl 
\IEEEcompsocthanksitem 
This work was supported by the Foundation for Polish Science, within the 
TEAM programme co-financed by the European Union under the European 
Regional Development Fund (grant No. POIR.04.04.00-00-5C5B/17-00). 
Kokoci\'nski and Kobus were also supported by the Polish National Science 
Centre (grant No. DEC-2012/07/B/ST6/01230) and partially by the internal 
funds of the Faculty of Computing, Poznan University of Technology.

%
 }}

\IEEEtitleabstractindextext{%
\begin{abstract}
In this article we study the properties of distributed systems that mix eventual 
and strong consistency. We formalize such systems through \emph{acute cloud 
types} (ACTs), abstractions similar to conflict-free replicated data types 
(CRDTs), which by default work in a highly available, eventually consistent 
fashion, but which also feature strongly consistent operations for tasks which
require global agreement. Unlike other mixed-consistency solutions, ACTs can 
rely on efficient quorum-based protocols, such as Paxos. Hence, ACTs 
gracefully tolerate machine and network failures also for the strongly 
consistent operations. We formally study ACTs and demonstrate phenomena 
which are neither present in purely eventually consistent nor strongly 
consistent systems. In particular, we identify \emph{temporary operation 
reordering}, which implies interim disagreement between replicas on the relative 
order in which the client requests were executed. When not handled carefully, 
this phenomenon may lead to undesired anomalies, 
including circular causality. We 
prove an impossibility result which states that temporary operation reordering 
is unavoidable in mixed-consistency systems with sufficiently 
complex semantics. Our result is startling, because it shows 
that apparent \emph{strengthening} of the semantics of a system (by introducing 
strongly consistent operations to an eventually consistent system) results in 
the weakening of the guarantees on the eventually consistent operations. 



\end{abstract}

\begin{IEEEkeywords}
eventual consistency, mixed consistency, fault-tolerance, acute cloud types, 
ACT
\end{IEEEkeywords}}

\maketitle

\IEEEdisplaynontitleabstractindextext

%
\IEEEpeerreviewmaketitle

   \algnewcommand\algorithmicoperation{\textbf{operation}}
\algdef{SE}[MESSAGE]{Operation}{EndOperation}                                   
   [2]{\algorithmicoperation\ \textproc{#1}\ifthenelse{\equal{#2}{}}{}{(#2)}}%
   {\algorithmicend\ \algorithmicoperation}%

\algnewcommand\algorithmicempty{}
\algdef{SE}[MESSAGE]{Empty}{EndEmpty}                                   
   [2]{\algorithmicempty\ \textproc{#1}\ifthenelse{\equal{#2}{}}{}{(#2)}}%
   {\algorithmicend\ \algorithmicempty}%

\algnewcommand\algorithmicoperator{\textbf{operator}}
\algdef{SE}[OPERATOR]{Operator}{EndOperator}                                   
   [2]{\algorithmicoperator\ \textproc{#1}\ifthenelse{\equal{#2}{}}{}{(#2)}}%
   {\algorithmicend\ \algorithmicoperator}%

\algnewcommand\algorithmicmessage{\textbf{message}}
\algdef{SE}[MESSAGE]{Message}{EndMessage}                                   
   [2]{\algorithmicmessage\ \textproc{#1}\ifthenelse{\equal{#2}{}}{}{(#2)}}%
   {\algorithmicend\ \algorithmicmessage}%
              
\algnewcommand\algorithmicreceive{\textbf{receive}}
\algdef{SE}[MESSAGE]{Receive}{EndReceive}                                   
   [2]{\algorithmicreceive\ \textproc{#1}\ifthenelse{\equal{#2}{}}{}{(#2)}}%
   {\algorithmicend\ \algorithmicreceive}%

\algnewcommand\algorithmicupon{\textbf{upon}}
\algdef{SE}[UPON]{Upon}{EndUpon}                                   
   [2]{\algorithmicupon\ \textproc{#1}\ifthenelse{\equal{#2}{}}{}{(#2)}}%
   {\algorithmicend\ \algorithmicupon}%
   
\algnewcommand\algorithmicperiodically{\textbf{periodically}}
\algdef{SE}[PERIODICALLY]{Periodically}{EndPeriodically}                                   
   [2]{\algorithmicperiodically\ \textproc{#1}\ifthenelse{\equal{#2}{}}{}{(#2)}}%
   {\algorithmicend\ \algorithmicperiodically}%

\algnewcommand\senddesc{\textbf{send}}
\algnewcommand\Send{\senddesc{} }

\algnewcommand\rbcastdesc{\textbf{rbcast}}
\algnewcommand\Rbcast{\rbcastdesc{} }

\algnewcommand\tobcastdesc{\textbf{tobcast}}
\algnewcommand\Tobcast{\tobcastdesc{} }

\algnewcommand\structdesc{\textbf{struct}}
\algnewcommand\Struct{\structdesc{} }

\algnewcommand\vardesc{\textbf{var}}
\algnewcommand\Var{\vardesc{} }

\algnewcommand\lockstartdesc{\textbf{lock \{}}
\algnewcommand\LockStart{\lockstartdesc{} }

\algnewcommand\lockenddesc{\textbf{\}}}
\algnewcommand\LockEnd{\lockenddesc{} }

\algnewcommand\algindentdesc{\hspace{2.8em}}
\algnewcommand\AlgIndent{\algindentdesc{} }

\algnewcommand\algindentsmalldesc{\hspace{1.3em}}
\algnewcommand\AlgIndentSmall{\algindentsmalldesc{} }

\algnewcommand\algindentindentdesc{\hspace{4.0em}}
\algnewcommand\AlgIndentIndent{\algindentindentdesc{} }

\algnewcommand{\IIf}[1]{\State\algorithmicif\ #1\ \algorithmicthen}
\algnewcommand{\EElse}[1]{\algorithmicelse}
\algnewcommand{\EndIIf}{}

\algnotext{EndFor}
\algnotext{EndIf}
\algnotext{EndUpon}
\algnotext{EndOperator}
\algnotext{EndMessage}
\algnotext{EndOperation}
\algnotext{EndReceive}
\algnotext{EndFunction} 
\algnotext{EndProcedure} 
\algnotext{EndWhile}
\algnotext{EndEmpty}
\algnotext{EndPeriodically}

\newcommand{\algrule}[1][.2pt]{\par\vskip.5\baselineskip\hrule height
#1\par\vskip.5\baselineskip}

\newcommand{\LineComment}[1]{\hfill\textit{// #1}}

\newcommand{\myseg}[2]{
   \draw[-] ($ (#1) - (0, 0.07) $) -- ($ (#1) + (0, 0.07) $) -- ($ (#2) + (0, 0.07) $) -- ($ (#2) - (0, 0.07) $) -- ($ (#1) - (0, 0.07) $) ;
}

\newcommand{\mysegfill}[2]{
   \draw[-, fill=black, draw=black] ($ (#1) - (0, 0.06) $) -- ($ (#1) + (0, 0.06) $) -- ($ (#2) + (0, 0.06) $) -- ($ (#2) - (0, 0.06) $) -- ($ (#1) - (0, 0.06) $) ;
}

\newcommand{\mysegdraw}[2]{
   \draw[fill=white,draw=black] ($ (#1) - (0, 0.06) $) -- ($ (#1) + (0, 0.06) $) -- ($ (#2) + (0, 0.06) $) -- ($ (#2) - (0, 0.06) $) -- ($ (#1) - (0, 0.06) $) ;
}

\newcommand{\mysegfillcustom}[3]{
   \draw[-, draw=black, fill=white] ($ (#1) - (0, 0.06) $) -- ($ (#1) + (0, 0.06) $) -- ($ (#2) + (0, 0.06) $) -- ($ (#2) - (0, 0.06) $) -- ($ (#1) - (0, 0.06) $) ;
   \draw[-, draw=black, pattern=#3] ($ (#1) - (0, 0.06) $) -- ($ (#1) + (0, 0.06) $) -- ($ (#2) + (0, 0.06) $) -- ($ (#2) - (0, 0.06) $) -- ($ (#1) - (0, 0.06) $) ;
}

\newcommand{\abseg}[2]{
   \node[label=above:#2] (midlabel) at ($ (#1) $) {};
   \draw[|-|, thick] ($ (#1) - (0.4, 0) $) -- ($ (#1) + (0.4, 0) $);
   \fill [black] ($ (#1) - (0.4, 0.035) $) rectangle ($ (#1) + (0.4, 0.035) $);
}

\newcommand{\belseg}[2]{
   \node[label=below:#2] (midlabel) at ($ (#1) $) {};
   \draw[|-|, very thick] ($ (#1) - (0.4, 0) $) -- ($ (#1) + (0.4, 0) $);
   \fill [black] ($ (#1) - (0.4, 0.035) $) rectangle ($ (#1) + (0.4, 0.035) $);
}

\newcommand{\abseglong}[2]{
   \node[label=above:#2] (midlabel) at ($ (#1) $) {};
   \draw[|-|, thick] ($ (#1) - (0.7, 0) $) -- ($ (#1) + (0.7, 0) $);
   \fill [black] ($ (#1) - (0.7, 0.035) $) rectangle ($ (#1) + (0.7, 0.035) $);
}

\newcommand{\belseglong}[2]{
   \node[label=below:#2] (midlabel) at ($ (#1) $) {};
   \draw[|-|, very thick] ($ (#1) - (0.7, 0) $) -- ($ (#1) + (0.7, 0) $);
   \fill [black] ($ (#1) - (0.7, 0.035) $) rectangle ($ (#1) + (0.7, 0.035) $);
}

\newcommand{\abdot}[2]{
   \node[label=above:#2, fill=black, circle, inner sep=1.5pt] (midlabel) at ($ (#1) $) {};
}

\newcommand{\beldot}[2]{
   \node[label=below:#2, fill=black, circle, inner sep=1.5pt] (midlabel) at ($ (#1) $) {};
}

\newcommand{\xcross}[1]{
   \node[cross out, draw=black, inner sep=3pt, thick] at ($ (#1) $) {};
}

\newcommand{\xcrosssmall}[1]{
   \node[cross out, draw=black, inner sep=1.5pt, thick] at ($ (#1) $) {};
}

\newcommand{\abdotname}[3]{
   \node [label=above:#3, fill=black, circle, inner sep=1.5pt] (#1) at ($ (#2) $) {};
}

\newcommand{\beldotname}[3]{
   \node [label=below:#3, fill=black, circle, inner sep=1.5pt] (#1) at ($ (#2) $) {};
}

\newcommand{\emptydot}[2]{
   \node (#1) at ($ (#2) $) {};
}

\newcommand{\abemptydotname}[3]{
   \node [label=above:#3] (#1) at ($ (#2) $) {};
}

\newcommand{\belemptydotname}[3]{
   \node [label=below:#3] (#1) at ($ (#2) $) {};
}

\newcommand{\absegarb}[4]{
   \node[label=above:#3] (midlabel) at ($ (#4) $) {};
   \draw[|-|, thick] ($ (#1) $) -- ($ (#1) + (#2) $);
}

\newcommand{\absegarbdashed}[4]{
   \node[label=above:#3] (midlabel) at ($ (#4) $) {};
   \draw[|-|, thick, dashed] ($ (#1) $) -- ($ (#1) + (#2) $);
}

\newcommand{\belsegarb}[4]{
   \node[label=below:#3] (midlabel) at ($ (#4) $) {};
   \draw[|-|, thick] ($ (#1) $) -- ($ (#1) + (#2) $);
}

\newcommand{\belsegarbdashed}[4]{
   \node[label=below:#3] (midlabel) at ($ (#4) $) {};
   \draw[|-|, thick, dashed] ($ (#1) $) -- ($ (#1) + (#2) $);
}

\newcommand{\belleftsegarbdashed}[4]{
   \node[label=below:#3] (midlabel) at ($ (#4) $) {};
   \draw[|-, dashed] ($ (#1) $) -- ($ (#1) + (#2) $);
}

\newcommand{\osolid}[1]{
    \tikz[baseline=(todotted.base)]{
        \node[inner sep=1pt,outer sep=0pt] (todotted) {#1\!};
        \draw (todotted.north west) -- (todotted.north east);
    }%
}%

\newcommand{\odashed}[1]{
    \tikz[baseline=(todotted.base)]{
        \node[inner sep=1pt,outer sep=0pt] (todotted) {#1\!};
        \draw[densely dashed] (todotted.north west) -- (todotted.north east);
    }%
}%

\newcommand{\odotted}[1]{
    \tikz[baseline=(todotted.base)]{
        \node[inner sep=1pt,outer sep=0pt] (todotted) {#1\!};
        \draw[dotted] (todotted.north west) -- (todotted.north east);
    }%
}%

\newcommand{\usolid}[1]{
    \tikz[baseline=(todotted.base)]{
        \node[inner sep=1pt,outer sep=0pt] (todotted) {#1};
        \draw (todotted.south west) -- (todotted.south east);
    }%
}%

\newcommand{\udashed}[1]{
    \tikz[baseline=(todotted.base)]{
        \node[inner sep=1pt,outer sep=0pt] (todotted) {#1};
        \draw[densely dashed] (todotted.south west) -- (todotted.south east);
    }%
}%

\newcommand{\udotted}[1]{
    \tikz[baseline=(todotted.base)]{
        \node[inner sep=1pt,outer sep=0pt] (todotted) {#1};
        \draw[dotted] (todotted.south west) -- (todotted.south east);
    }%
}%

\newcommand{\udashedsolid}[1]{%
    \tikz[baseline=(todotted.base)]{
        \node[inner sep=1pt,outer sep=0pt] (todotted) {#1};
        \draw[densely dashed] (todotted.south west) -- (todotted.south east);
        \draw ([yshift=-1.5pt]todotted.south west) -- ([yshift=-1.5pt]todotted.south east);
    }%
}%

\newcommand{\usoliddotted}[1]{%
    \tikz[baseline=(todotted.base)]{
        \node[inner sep=1pt,outer sep=0pt] (todotted) {#1};
        \draw (todotted.south west) -- (todotted.south east);
        \draw[dotted] ([yshift=-1.5pt]todotted.south west) -- ([yshift=-1.5pt]todotted.south east);
    }%
}%

\newcommand{\udashedsoliddotted}[1]{%
    \tikz[baseline=(todotted.base)]{
        \node[inner sep=1pt,outer sep=0pt] (todotted) {#1};
        \draw[densely dashed] (todotted.south west) -- (todotted.south east);
        \draw ([yshift=-1.5pt]todotted.south west) -- ([yshift=-1.5pt]todotted.south east);
        \draw[dotted] ([yshift=-3.0pt]todotted.south west) -- ([yshift=-3.0pt]todotted.south east);
    }%
}%

\NewEnviron{myeq}{%
\begin{equation}\begin{split}
  \BODY
\end{split}\nonumber\end{equation}}

\newcommand{\Mod}[1]{\ (\mathrm{mod}\ #1)}

\mathchardef\mhyphen="2D 

\newcommand{\RO}{RO\xspace}
\newcommand{\RWW}{RW\xspace}

\renewcommand{\qedsymbol}{\rule{0.7em}{0.7em}}
\newcommand{\instruction}{\mathsf{instruction}\xspace}

\newcommand\mydash[1]{\overline{#1}}

\newcommand{\eqdef}{\stackrel{\text{def}}{=}}
\newcommand{\iffdef}{\stackrel{\text{def}}{\Longleftrightarrow}}
\renewcommand{\iff}{\Leftrightarrow}

\newcommand{\NCC}{\textsc{NCC}\xspace}
\newcommand{\BEC}{\textsc{BEC}\xspace}
\newcommand{\EV}{\textsc{EV}\xspace}
\newcommand{\RVAL}{\textsc{RVal}\xspace}
\newcommand{\FEC}{\textsc{FEC}\xspace}
\newcommand{\CPERC}{\textsc{CPar}\xspace}
\newcommand{\SA}{\textsc{Sess\-Arb}\xspace}
\newcommand{\SC}{\textsc{Seq}\xspace}
\newcommand{\SO}{\textsc{Sin\-Ord}\xspace}

\newcommand{\ar}{\mathsf{ar}}
\newcommand{\blockingops}{\mathsf{blockingops}}
\newcommand{\context}{\mathsf{context}}
\newcommand{\fcontext}{\mathsf{fcontext}}
\newcommand{\corr}{\mathsf{corr}}
\newcommand{\csh}{\mathsf{crash}}
\newcommand{\Events}{\mathsf{Events}}
\providecommand{\false}{}
\renewcommand{\false}{\mathsf{false}}
\newcommand{\hb}{\mathsf{hb}}
\newcommand{\id}{\mathsf{id}}
\newcommand{\lvl}{\mathsf{lvl}}
\newcommand{\lvlmap}{\mathsf{lvlmap}}
\newcommand{\Operations}{\mathsf{Ope\-ra\-tions}}
\newcommand{\ok}{\mathsf{ok}}
\newcommand{\op}{\mathsf{op}}
\newcommand{\proc}{\mathsf{proc}}
\newcommand{\rb}{\mathsf{rb}}
\newcommand{\readonlyops}{\mathsf{readonlyops}}
\newcommand{\rela}{\mathsf{rel}}
\newcommand{\rank}{\mathsf{rank}}
\newcommand{\rval}{\mathsf{rval}}
\newcommand{\sib}{\mathsf{sib}}
\providecommand{\so}{}
\renewcommand{\so}{\mathsf{so}}
\renewcommand{\sp}{\mathsf{sp}}
\newcommand{\sss}{\mathsf{ss}}
\providecommand{\true}{}
\renewcommand{\true}{\mathsf{true}}
\newcommand{\updateops}{\mathsf{updateops}}
\newcommand{\pureupdateops}{\mathsf{pureupdateops}}
\newcommand{\Values}{\mathsf{Values}}
\newcommand{\vis}{\mathsf{vis}}
\newcommand{\words}{\mathsf{words}}

\newcommand{\crap}{\mathsf{cr-agnostic}}
\newcommand{\arflipped}{\mathsf{ar{\mhyphen}flipped}}
\newcommand{\desc}{\mathsf{desc}}

\newcommand{\symmetric}{\mathsf{symmetric}}
\newcommand{\reflexive}{\mathsf{reflexive}}
\newcommand{\irreflexive}{\mathsf{irreflexive}}
\newcommand{\transitive}{\mathsf{transitive}}
\newcommand{\acyclic}{\mathsf{acyclic}}
\newcommand{\total}{\mathsf{total}}

\renewcommand{\natural}{\mathsf{natural}}
\newcommand{\partialorder}{\mathsf{partialorder}}
\newcommand{\totalorder}{\mathsf{totalorder}}
\newcommand{\enumeration}{\mathsf{enumeration}}
\newcommand{\equivalencerelation}{\mathsf{equivalencerelation}}

\newcommand{\eqclasses}{\slash\!\approx}

\renewcommand{\AA}{\mathcal{A}}
\newcommand{\CC}{\mathcal{C}}
\newcommand{\EE}{\mathcal{E}}
\newcommand{\FF}{\mathcal{F}}
\newcommand{\HH}{\mathcal{H}}
\newcommand{\PP}{\mathcal{P}}
\newcommand{\RR}{\mathcal{R}}

\newcommand\sx{\kern-1ex}
\newcommand\pf{\sx$(\FF)$\xspace}

\newcommand{\RA}{\rightarrow}

\newcommand{\raar}{\xrightarrow{\ar}}
\newcommand{\radel}{\xrightarrow{\del}}
\newcommand{\rakd}{\xrightarrow{\kd}}
\newcommand{\raeo}{\xrightarrow{\eo}}
\newcommand{\raeog}{\xrightarrow{\eo_G}}
\newcommand{\rarb}{\xrightarrow{\rb}}
\newcommand{\raro}{\xrightarrow{\ro}}
\newcommand{\rarog}{\xrightarrow{\ro_G}}
\newcommand{\raso}{\xrightarrow{\so}}
\newcommand{\ravis}{\xrightarrow{\vis}}
\newcommand{\rahb}{\xrightarrow{\hb}}
\newcommand{\rarela}{\xrightarrow{\rela}}
\newcommand{\raaa}[1]{\xrightarrow{\mathsf{#1}}}
\newcommand{\rapar}[1]{\xrightarrow{\perc(\mathsf{#1})}}

\newcommand{\app}{\mathsf{append}}
\newcommand{\rd}{\mathsf{read}}

\newcommand{\FFreg}{\FF_\mathit{reg}}
\newcommand{\FFseq}{\FF_\mathit{seq}}
\newcommand{\FFamvr}{\FF_\mathit{AMVR}}
\newcommand{\FFmvr}{\FF_\mathit{MVR}}
\newcommand{\FFmvrs}{\FF_\mathit{MVRS}}
\newcommand{\FForset}{\FF_\mathit{orset}}
\newcommand{\FFkvs}{\FF_\mathit{KVS}}
\newcommand{\FFnnc}{\FF_\mathit{NNC}}
\newcommand\pfseq{\sx$(\FFseq)$\xspace}

\newcommand{\updateContext}{\mathsf{updateContext}}

\newcommand{\SEC}{\textsc{SEC}\xspace}

\newcommand{\CONV}{\textsc{CONV}\xspace}

\newcommand{\CA}{\textsc{CA}\xspace}
\newcommand{\CV}{\textsc{CV}\xspace}
\newcommand{\CY}{\textsc{Causality}\xspace}
\newcommand{\CAC}{\textsc{CC}\xspace}
\newcommand{\NUBEC}{\textsc{NUBEC}\xspace}
\newcommand{\NUCAC}{\textsc{NUCC}\xspace}
\newcommand{\FACAC}{\textsc{FACC}\xspace}
\newcommand{\SNUBEC}{\textsc{SNUBEC}\xspace}
\newcommand{\OCC}{\textsc{OCC}\xspace}
\newcommand{\FAOCC}{\textsc{FAOCC}\xspace}
\newcommand{\NAT}{\textsc{Nat}\xspace}
\newcommand{\NatCC}{\textsc{NatCC}\xspace}
\newcommand{\NOCC}{\textsc{NOCC}\xspace}
\newcommand{\FANOCC}{\textsc{FANOCC}\xspace}

\newcommand{\PAEV}{\textsc{PAEV}\xspace}
\newcommand{\PABEC}{\textsc{PABEC}\xspace}
\newcommand{\CABEC}{\textsc{CABEC}\xspace}
\newcommand{\FABEC}{\textsc{FABEC}\xspace}
\newcommand{\XBEC}{\textsc{XBEC}\xspace}

\newcommand{\ECONV}{\textsc{Even\-tu\-al\-Con\-ver\-gen\-ce}\xspace}
\newcommand{\ERVAL}{\textsc{ERVal}\xspace}
\newcommand{\FRVAL}{\textsc{FRVal}\xspace}
\newcommand{\LIN}{\textsc{Lin}\xspace}

\newcommand{\COMPLETE}{\textsc{complete}}
\newcommand{\RW}{\textsc{Re\-le\-vant\-Writes}\xspace}
\newcommand{\rw}{\mathit{rw}}
\newcommand{\MRshort}{\textsc{MR}\xspace}
\newcommand{\EMRshort}{\textsc{EMR}\xspace}
\newcommand{\RMW}{\textsc{Read\-My\-Writes}\xspace}
\newcommand{\RMWshort}{\textsc{RMW}\xspace}
\newcommand{\ERMWshort}{\textsc{ERMW}\xspace}
\newcommand{\RYWshort}{\textsc{RYW}\xspace}
\newcommand{\ERYWshort}{\textsc{ERYW}\xspace}
\newcommand{\MW}{\textsc{Mo\-no\-to\-nic\-Writes}\xspace}
\newcommand{\MWshort}{\textsc{MW}\xspace}
\newcommand{\MWA}{\textsc{Mo\-no\-to\-nic\-Writes\-In\-Ar\-bi\-tra\-tion}\xspace}
\newcommand{\MWAshort}{\textsc{MWA}\xspace}
\newcommand{\MWV}{\textsc{Mo\-no\-to\-nic\-Writes\-In\-Vi\-si\-bi\-li\-ty}\xspace}
\newcommand{\MWVshort}{\textsc{MWV}\xspace}
\newcommand{\WFR}{\textsc{Writes\-Follow\-Reads}\xspace}
\newcommand{\WFRshort}{\textsc{WFR}\xspace}
\newcommand{\WFRA}{\textsc{Writes\-Follow\-Reads\-In\-Ar\-bi\-tra\-tion}\xspace}
\newcommand{\WFRAshort}{\textsc{WFRA}\xspace}
\newcommand{\WFRV}{\textsc{Writes\-Follow\-Reads\-In\-Vi\-si\-bi\-li\-ty}\xspace}
\newcommand{\WFRVshort}{\textsc{WFRV}\xspace}
\newcommand{\SG}{\textsc{SG}\xspace}
\newcommand{\RT}{\textsc{RT}\xspace}
\newcommand{\TQC}{\textsc{True\-Quie\-scent\-Con\-si\-st\-en\-cy}\xspace}
\newcommand{\QEC}{\textsc{Quie\-scent\-Even\-tu\-al\-Con\-si\-st\-en\-cy}\xspace
}
\newcommand{\QO}{\textsc{Quie\-scent\-Or\-der}\xspace}
\newcommand{\STD}{\textsc{Stan\-dard\-Even\-tu\-al\-Con\-sis\-ten\-cy}\xspace}

\newcommand{\CPRE}{\textsc{CP}\xspace}

\newcommand{\Overwrites}{\textsc{Overwrites}\xspace}
\newcommand{\Irrelevant}{\textsc{Irrelevant}\xspace}
\newcommand{\ConcHide}{\textsc{ConcHide}\xspace}

\newcommand{\para}{\mathsf{parallel}}
\newcommand{\perc}{\mathsf{par}}

\newcommand{\Abot}{$A_\bot$\xspace}
\newcommand{\Aeconv}{$A_\textsc{EC}$\xspace}
\newcommand{\Aqec}{$A_\textsc{QEC}$\xspace}
\newcommand{\Astd}{$A_\textsc{STD}$\xspace}
\newcommand{\Asec}{$A_\textsc{SEC}$\xspace}
\newcommand{\Abec}{$A_\textsc{BEC}$\xspace}
\newcommand{\Abecrmw}{$A_\textsc{BECR}$\xspace}
\newcommand{\Afec}{$A_\textsc{FEC}$\xspace}
\newcommand{\Afecrmw}{$A_\textsc{FECR}$\xspace}
\newcommand{\Asc}{$A_\textsc{Seq}$\xspace}
\newcommand{\Alin}{$A_\textsc{LIN}$\xspace}

\newcommand{\multlinesplit}{\vspace{-1.0cm}}

\newcommand{\clock}{\mathit{clock}\xspace}
\newcommand{\crash}{\mathit{crash}\xspace}
\newcommand{\callret}{\mathit{callret}\xspace}
\newcommand{\receive}{\mathit{receive}\xspace}
\newcommand{\delivered}{\mathit{delivered}\xspace}

\newcommand{\eo}{\mathsf{eo}\xspace}
\newcommand{\tr}{\mathsf{tr}\xspace}

\newcommand{\act}{\mathsf{act}\xspace}
\newcommand{\decode}{\mathsf{decode}\xspace}
\newcommand{\del}{\mathsf{del}\xspace}
\newcommand{\getUpdate}{\mathsf{getUpdate}\xspace}
\newcommand{\kd}{\mathsf{kd}\xspace}
\newcommand{\key}{\mathsf{key}}
\newcommand{\mangle}{\mathsf{mangle}\xspace}
\newcommand{\maxUpdate}{\mathsf{maxUpdate}\xspace}
\newcommand{\maxx}{\mathsf{max}\xspace}
\newcommand{\minx}{\mathsf{min}\xspace}
\newcommand{\msg}{\mathsf{msg}\xspace}
\newcommand{\ops}{\mathsf{ops}}
\newcommand{\ppar}{\mathsf{par}}
\newcommand{\pre}{\mathsf{pre}}
\newcommand{\post}{\mathsf{post}}
\newcommand{\sspre}{\mathsf{sspre}}
\newcommand{\sspost}{\mathsf{sspost}}
\newcommand{\pos}{\mathsf{pos}}
\newcommand{\rcv}{\mathsf{rcv}\xspace}
\newcommand{\replica}{\mathsf{replica}\xspace}
\newcommand{\rep}{\mathsf{rep}\xspace}
\newcommand{\req}{\mathsf{req}\xspace}
\newcommand{\ro}{\mathsf{ro}}
\newcommand{\prop}{\mathsf{prop}}
\newcommand{\snd}{\mathsf{snd}\xspace}
\newcommand{\totalize}{\mathsf{totalize}\xspace}
\newcommand{\trace}{\mathsf{trace}\xspace}

\newcommand{\ctx}{\mathit{ctx}}
\newcommand{\fair}{\mathit{fair}}
\newcommand{\opn}{\mathit{op}}
\newcommand{\rid}{\mathit{rid}}
\newcommand{\suf}{\mathit{suf}}

\newcommand{\Rep}{\mathsf{Replica}}
\newcommand{\rall}{\mathsf{r_{all}}}

\newcommand{\NEV}{\textsc{NEV}\xspace}
\newcommand{\NUEV}{\textsc{NUEV}\xspace}
\newcommand{\SNUEV}{\textsc{SNUEV}\xspace}
\newcommand{\FAEV}{\textsc{FAEV}\xspace}
\newcommand{\CAEV}{\textsc{CAEV}\xspace}
\newcommand{\SFAEV}{\textsc{SFAEV}\xspace}

\newcommand{\obs}{\mathit{obs}}
\newcommand{\notobs}{\mathit{notobs}}

\newcommand{\OpRecord}{\mathsf{OpRec}}
\newcommand{\OpRecords}{\mathsf{OpRecs}}
\newcommand{\makeContext}{\mathsf{makeContext}}
\newcommand{\updates}{\mathsf{updates}}
\providecommand{\visible}{}
\renewcommand{\visible}{\mathsf{visible}}

\newcommand{\aapp}{\mathsf{append}}
\newcommand{\rrd}{\mathsf{read}}
\newcommand{\kvput}{\mathsf{put}}
\newcommand{\kvget}{\mathsf{get}}
\newcommand{\wwr}{\mathsf{write}}
\newcommand{\uupd}{\mathsf{update}}
\newcommand{\aadd}{\mathsf{add}}
\newcommand{\rrem}{\mathsf{remove}}
\newcommand{\ook}{\mathsf{ok}}
\newcommand{\kvundef}{\mathsf{undef}}

\newcommand{\Perform}{\mathsf{Perform}}

\newcommand{\clockk}{\mathsf{clock}}
\newcommand{\ridd}{\mathsf{rid}}

\newcommand{\Notify}{\mathsf{Notify}}

\newcommand{\nat}{\mathsf{nat}\xspace}
\newcommand{\oo}{\mathsf{o}\xspace}
\newcommand{\sett}{\mathsf{set}\xspace}
\newcommand{\opFunction}{\mathsf{opFun}\xspace}
\newcommand{\recUpdates}{\mathsf{recUpdates}\xspace}
\newcommand{\recVisible}{\mathsf{recVisible}\xspace}


\newcommand{\boxx}{\mathsf{state}}
\newcommand{\db}{\mathsf{db}}
\newcommand{\undoLog}{\mathsf{undoLog}}
\newcommand{\readsetMap}{\mathsf{readsetMap}}
\newcommand{\writesetMap}{\mathsf{writesetMap}}
\newcommand{\currentCommitted}{\mathsf{currentCommitted}}
\newcommand{\currentEventNumber}{\mathsf{currEventNo}}
\newcommand{\currentTime}{\mathsf{currTime}}
\renewcommand{\dot}{\mathsf{id}}
\newcommand{\ctxsf}{\mathsf{ctx}}
\newcommand{\Request}{\mathrm{Req}}
\newcommand{\client}{\mathrm{client}}
\newcommand{\VersionId}{\mathrm{VersionId}}
\newcommand{\CommittedVersionId}{\mathrm{CommittedVersionId}}
\newcommand{\vid}{\mathsf{vid}}
\newcommand{\Id}{\mathrm{Id}}
\newcommand{\Value}{\mathrm{Value}}
\newcommand{\response}{\mathsf{response}}
\newcommand{\res}{\mathsf{res}}
\newcommand{\Response}{\mathrm{Resp}}
\newcommand{\Entry}{\mathrm{Entry}}
\newcommand{\rr}{\mathsf{r}}
\newcommand{\strongOp}{\mathsf{strongOp}}
\newcommand{\timestamp}{\mathsf{timestamp}}

\newcommand{\committed}{\mathsf{committed}}
\newcommand{\tentative}{\mathsf{tentative}}

\newcommand{\executed}{\mathsf{executed}}
\newcommand{\toBeExecuted}{\mathsf{toBeExecuted}}
\newcommand{\toBeExecutedOrReexecuted}{\mathsf{toBeExecutedOrReexecuted}}
\newcommand{\toBeRolledBack}{\mathsf{toBeRolledBack}}
\newcommand{\possibleConflicts}{\mathsf{possibleConflicts}}
\newcommand{\tentativelyExecuted}{\mathsf{tentativelyExecuted}}
\newcommand{\requestsAwaitingResponse}{\mathsf{reqsAwaitingResp}}
\newcommand{\lastResponse}{\mathsf{lastResp}}
\newcommand{\strongOpsToCheck}{\mathsf{strongOpsToCheck}}
\newcommand{\contextNotReadyOps}{\mathsf{missingContextOps}}
\newcommand{\readyToScheduleOps}{\mathsf{readyToScheduleOps}}

\newcommand{\adjustExecution}{\mathrm{adjustExecution}}
\newcommand{\promoteToCommitted}{\mathrm{promoteToCommitted}}

\newcommand{\readset}{\mathrm{readset}}
\newcommand{\writeset}{\mathrm{writeset}}
\newcommand{\invoke}{\mathrm{invoke}}
\newcommand{\responserm}{\mathrm{response}}

\newcommand{\bcast}{\mathrm{BE{\text-}cast}}
\newcommand{\bdeliver}{\mathrm{BE{\text-}deliver}}

\newcommand{\rbcast}{\mathrm{RB{\text-}cast}}
\newcommand{\rbdeliver}{\mathrm{RB{\text-}deliver}}

\newcommand{\tobcast}{\mathrm{TOB{\text-}cast}}
\newcommand{\tobdeliver}{\mathrm{TOB{\text-}deliver}}

\newcommand{\frbcast}{\mathrm{FIFO{\text-}RB{\text-}cast}}
\newcommand{\frbdeliver}{\mathrm{FIFO{\text-}RB{\text-}deliver}}

\newcommand{\cabcast}{\mathrm{CAB{\text-}cast}}
\newcommand{\cabdeliver}{\mathrm{CAB{\text-}deliver}}

\newcommand{\pput}{\mathrm{put}}
\newcommand{\gget}{\mathrm{get}}
\newcommand{\rremove}{\mathrm{remove}}
\newcommand{\ccontains}{\mathrm{contains}}
\newcommand{\incAndGet}{\mathrm{incAndGet}}

\newcommand{\reverse}{\mathrm{reverse}}
\newcommand{\event}{\mathsf{event}}
\newcommand{\tob}{\mathsf{tob}}
\newcommand{\tobNo}{\mathsf{tobNo}}
\newcommand{\rbdel}{\mathsf{rbdel}}
\newcommand{\tobdel}{\mathsf{tobdel}}
\newcommand{\RBdel}{\mathsf{RBdel}}
\newcommand{\TOBdel}{\mathsf{TOBdel}}
\newcommand{\sent}{\mathsf{sent}}
\newcommand{\exec}{\mathsf{exec}}
\newcommand{\ret}{\mathsf{ret}}
\newcommand{\TOB}{\mathsf{TOB}}
\newcommand{\RB}{\mathsf{RB}}

\newcommand{\newOrder}{\mathsf{newOrder}}

\newcommand{\execute}{\mathrm{execute}}
\newcommand{\rollback}{\mathrm{rollback}}
\newcommand{\primaryCommit}{\mathrm{primaryCommit}}

\newcommand{\mgs}{\mathsf{mgs}}

\newcommand{\append}{\mathsf{append}}
\newcommand{\duplicate}{\mathsf{duplicate}}
\newcommand{\replace}{\mathsf{replace}}
\newcommand{\getRoom}{\mathsf{reserveRoom}}
\newcommand{\getStable}{\mathsf{getStable}}

\newcommand{\causal}{\mathsf{causal}}
\newcommand{\eventual}{\mathsf{eventual}}
\newcommand{\weak}{\mathsf{weak}}
\newcommand{\strong}{\mathsf{strong}}

\newcommand{\val}{\mathsf{value}}
\newcommand{\RegisterRecord}{\mathsf{RegRec}}
\newcommand{\writefull}{\mathsf{write}}
\newcommand{\readfull}{\mathsf{read}}
\newcommand{\current}{\mathsf{myReg}}
\newcommand{\update}{\mathsf{update}}

\newcommand{\StateObject}{StateObject\xspace}

\newcommand{\causalContext}{\mathsf{causalCtx}}
\newcommand{\dvv}{\mathrm{dvv}}
\newcommand{\hasBeenAlreadyExecuted}{\mathsf{hasBeenAlreadyExecuted}}
\newcommand{\rollbackRequired}{\mathsf{rollbackRequired}}
\newcommand{\rs}{\mathsf{rs}}
\newcommand{\ws}{\mathsf{ws}}
\newcommand{\executeAsCommitted}{\mathsf{executeAsCommitted}}
\newcommand{\isCommitted}{\mathsf{isCommitted}}
\newcommand{\pendingStrongRequests}{\mathsf{pendingStrongReqs}}
\newcommand{\unconfirmed}{\mathsf{unconfirmed}}
\newcommand{\committedExtension}{\mathsf{committedExt}}

\newcommand{\tobcastProposal}{\mathrm{tobcastProposal}}
\newcommand{\voteOK}{\mathrm{voteOK}}
\newcommand{\checkUnconfirmed}{\mathrm{checkUnconfirmed}}
\newcommand{\adjustTentativeOrder}{\mathrm{insertIntoTentative}}
\newcommand{\checkPendingStrongRequests}{\mathrm{checkPendingStrongReqs}}
\newcommand{\commit}{\mathrm{commit}}
\newcommand{\remove}{\mathrm{remove}}
\newcommand{\cleanUp}{\mathrm{cleanUp}}
\newcommand{\getTentative}{\mathrm{getTentative}}
\newcommand{\getCommitted}{\mathrm{getCommitted}}
\newcommand{\insertTentative}{\mathrm{insertTentative}}
\newcommand{\insertCommitted}{\mathrm{insertCommitted}}

\newcommand{\ISSUE}{\mathrm{ISSUE}}
\newcommand{\COMMIT}{\mathrm{COMMIT}}
\newcommand{\primary}{\mathrm{primary}}

\newcommand{\front}{\mathsf{front}}
\providecommand{\last}{}
\renewcommand{\last}{\mathsf{last}}
\newcommand{\listt}{\mathsf{list}}

\newcommand{\previous}{\mathsf{previous}}
\newcommand{\subsequent}{\mathsf{subsequent}}

\newcommand{\newTentative}{\mathsf{newTentative}}
\newcommand{\inOrder}{\mathsf{inOrder}}
\newcommand{\outOfOrder}{\mathsf{outOfOrder}}

\newcommand{\head}{\mathsf{head}}
\newcommand{\tail}{\mathsf{tail}}

\newcommand{\undoMap}{\mathsf{undoMap}}
\newcommand{\idd}{\mathsf{id}}


\newcommand{\msgs}{\mathsf{msgs}}

\newcommand{\propose}{\mathsf{propose}}
\newcommand{\decide}{\mathsf{decide}}

\newcommand{\idSeq}{\mathsf{idSeq}}
\newcommand{\idSet}{\mathsf{idSet}}
\newcommand{\ids}{\mathsf{ids}}
\newcommand{\ordered}{\mathsf{ordered}}
\providecommand{\received}{}
\renewcommand{\received}{\mathsf{received}}
\newcommand{\test}{\mathsf{test}}
\newcommand{\unordered}{\mathsf{unordered}}

\providecommand{\Message}{}
\renewcommand{\Message}{\mathrm{Message}}
\newcommand{\Predicate}{\mathrm{Predicate}}

\newcommand{\AB}{AB\xspace}
\newcommand{\CAB}{CAB\xspace}

\newcommand{\checkDependencies}{\mathsf{checkDep}}
\newcommand{\cc}{\mathsf{cc}}
\newcommand{\find}{\mathsf{find}}


\newcommand{\ACT}{$\mathsf{ACT}$\xspace}
\newcommand{\ACTc}[1]{$\mathsf{ACT}_\mathsf{#1}$\xspace}
\newcommand{\ACTBayou}{AcuteBayou\xspace}
\newcommand{\ACTnnc}{ANNC\xspace}

\newcommand{\sd}{\mathsf{strongSub}}
\newcommand{\si}{\mathsf{strongAdd}}
\newcommand{\wi}{\mathsf{weakAdd}}
\newcommand{\resp}{\mathsf{resp}}
\newcommand{\type}{\mathsf{type}}

\newcommand{\add}{\mathrm{add}}
\newcommand{\subtract}{\mathrm{subtract}}
\newcommand{\get}{\mathrm{get}}
\newcommand{\addit}{\mathit{add}}
\newcommand{\subtractit}{\mathit{subtract}}
\newcommand{\getit}{\mathit{get}}

\newcommand{\ord}{\mathsf{ord}}
\newcommand{\foldr}{\mathsf{foldr}}
\newcommand{\sort}{\mathsf{sort}}
\newcommand{\fnnc}{\mathit{f_{NNC}}}
\newcommand{\deliveredAdds}{\mathit{rbDeliveredAdds}}

\newcommand{\invokeit}{\mathit{invoke}}
\newcommand{\executeit}{\mathit{execute}}
\newcommand{\rollbackit}{\mathit{rollback}}
\newcommand{\rbcastit}{\mathit{rbcast}}
\newcommand{\tobcastit}{\mathit{tobcast}}
\newcommand{\rbdeliverit}{\mathit{rbdeliver}}
\newcommand{\tobdeliverit}{\mathit{tobdeliver}}

\newcommand{\blueapp}{\mathsf{blueAppend}}
\newcommand{\redapp}{\mathsf{redAppend}}
\newcommand{\bluerd}{\mathsf{blueRead}}

\newcommand{\ts}{\mathsf{ts}}
\newcommand{\seq}{\mathsf{seq}}
\newcommand{\delReqs}{\mathsf{deliveredReqs}}
\newcommand{\lc}{\mathsf{lc}}

\newcommand{\Rubis}{RUBiS\xspace}

\IEEEraisesectionheading{\section{Introduction}\label{sec:introduction}}

The massive scalability and high availability of the complex (geo-replicated) 
distributed systems that power today's Internet often hinges on the 
use of
eventually consistent data stores. These systems extensively 
employ specialized 
data structures, e.g., last-write-wins registers (LWW-registers), multi-value 
registers (MVRs), observed-remove sets (OR-sets) or other 
conflict-free replicated data types (CRDTs) \cite{DHJK+07} \cite{SPBZ11} 
\cite{SPBZ11a}.
These data structures are replicated on multiple machines (\emph{replicas}) and 
can be read or modified independently on each site without prior 
synchronization with other replicas. It means that replicas can promptly 
respond to the clients. The communication between the replicas happens solely 
using a gossip protocol. By design replicas are guaranteed to be able to 
converge to a single state, automatically resolving any inconsistencies between 
them.

Unfortunately, the semantics of such data structures are very limited. 
To provide high availability, low-latency responses and eventual state 
convergence, these data structures require either that all operations commute 
or that there exist commutative, associative, and idempotent procedures for 
merging replica states. This is why these mechanisms are not suitable for all 
use cases. For example, consider a simple non-negative integer counter. The 
addition operation can be trivially implemented in a conflict-free 
manner, as the addition operations are commutative. However, implementing the 
subtraction operation requires global agreement to ensure that the value of 
the counter never drops below 0. In a similar way, in an auction system 
concurrent bids can be considered independent operations and thus their 
execution does not need to be synchronized. However, the operation that closes 
the auction requires solving distributed consensus to select the single winning 
bid \cite{PBS18}.

Due to the inherent shortcomings of CRDTs and solutions similar to them, 
recently there have been several attempts, both at academia (e.g., 
\cite{SPAL11} \cite{LPC+12} \cite{BGY13} \cite{TPKB+13} \cite{BDFR+15} 
\cite{GYF+16} \cite{LPR18}) and in the industry (e.g., 
\cite{AmazonDynamoDBConsistentReads} \cite{CosmosDBConsistencyLevels} 
\cite{CassandraLWTs} \cite{RiakConsistencyLevels}) to enrich the semantics of 
eventually consistent systems by allowing some operations to be performed with 
stronger consistency guarantees or by introducing (quasi) transactional 
support. Crucially, none of the \emph{mixed-consistency} approaches we are 
aware of is flexible enough to: (a) account for very weak consistency models 
(weaker than causal consistency, which is known to be costly to achieve in 
practice \cite{BFGH+12}), (b) admit strongly consistent operations which do not 
require all replicas to be operational in order to complete (so to gracefully 
tolerate failures), and (c) provide clearly stated semantics that enables easy 
reasoning about the system-wide guarantees. The latter trait is especially 
important when the same data can be accessed at the same time both in a 
strongly and eventually consistent fashion, what is notoriously difficult to 
implement. For example, in Apache Cassandra using the \emph{light weight 
transactions} on data that are accessed at the same time in the regular, 
eventually consistent fashion leads to undefined behaviour \cite{CASS11000}.

In this article we introduce \emph{acute cloud types} (ACTs), a family of 
specialized mixed-consistency data structures designed primarily for high 
availability and low latency, but which also seamlessly integrate on-demand 
strongly consistent semantics. ACT feature two kinds of operations: 
\begin{itemize}
\item \emph{weak operations}--targeted for unconstrained scalability and low 
latency responses (as operations in CRDTs), 
\item \emph{strong operations}--used when eventually consistent guarantees are 
insufficient, require consensus-based inter-replica synchronization prior to 
execution.
\end{itemize}

Weak operations are guaranteed to progress, and are handled in such a way that 
the replicas eventually converge to the same state within each network 
partition, even when strongly consistent operations cannot complete due to 
network and process failures. On the other hand, strong operations can provide 
guarantees even as strong as linearizability \cite{HW90} wrt. the already 
completed strong operations and a precisely defined subset of completed weak 
operations. Crucially, strong operations are \emph{non-blocking}: they can 
leverage efficient, quorum-based synchronization protocols, such as Paxos 
\cite{Lam98}, and thus gracefully tolerate machine and network failures. 
Both weak and strong operations can be arbitrarily complex, but they must be 
deterministic. 

Our approach is more robust than other mixed-consistency solutions. Most 
notably, unlike classic \emph{cloud types} \cite{BFLW12} and \emph{global 
sequence protocol} (GSP) \cite{BLPF15}, ACTs are \emph{symmetrical} in the 
sense that they do not assume the existence of a server or servers that mediate 
all communication between remote replicas. This has several advantages: a 
failure of a replica or a group of replicas cannot impede the ability of other 
ACT replicas to execute weak operations and propagate the resulting updates. 
Also, ACTs can better tolerate network splits by allowing the replicas in the 
minority partitions to execute weak operations and exchange resulting updates. 
Furthermore, unlike the RedBlue consistency model \cite{LPC+12} and approaches 
similar to it (e.g., \cite{LLSG92} \cite{GYF+16} \cite{LPR18}), ACTs support 
consistency guarantees weaker than causal consistency, so account for a wider 
range of systems. Crucially, ACTs do not require all replicas to be operational 
in order for the strong operations to complete, contrary to the approaches 
mentioned above. This latter trait has been fundamental to the design of ACTs.

In order to provide an easy to understand yet flexible consistency model that 
allows weak operations to be executed in a highly available and scalable 
manner, we require that in any run of an ACT, logically, there always exists a 
single global order $S$ of all operations. During execution, strong operations 
are guaranteed to observe the prefix of $S$ up to their position in $S$. 
A weak operation may observe a serialization $S'$ of operations 
that diverges from $S$, but only by a finite number of elements. Thus weak and 
strong operations are interconnected in a non-trivial way, which intuitively 
ensures \emph{write stabilization}: once a strong operation, during its 
execution, observes some weak operations $\opn_i$, $\opn_j$ in that order, all 
subsequent strong operations, and eventually all weak operations, will also 
observe $\opn_i$, $\opn_j$ in that order. It is so even though
weak operations never have to directly synchronize with strong 
operations (e.g., by blocking on the completion of strong operations).

We propose a framework that enables formal reasoning about ACTs and their 
guarantees. We express the dependencies between operations through the 
\emph{visibility} and \emph{arbitration} relations, similarly to \cite{B14}, 
but we allow each operation to observe the arbitration in a temporarily 
inconsistent (but eventually convergent) form. In order to capture the unique 
properties of ACTs and write stabilization in particular, we define a novel 
correctness condition called \emph{fluctuating eventual consistency} (\FEC) 
that is strictly weaker than Burckhardt's Basic Eventual Consistency (\BEC) 
\cite{BGY13}.

By formally specifying ACTs, we uncovered several interesting phenomena unique 
to mixed-consistency systems (they are never exhibited by popular NoSQL 
systems, which only guarantee eventual consistency, nor by strongly consistent 
solutions). Crucially, some ACTs exhibit a phenomenon that we call 
\emph{temporary operation reordering}, which happens when the replicas 
temporarily disagree on the relative order in which the requests (modelled as 
operations) submitted to the system were executed. When not handled carefully, 
temporary operation reordering may lead to all kinds of undesired situations, 
e.g., circular causality among the responses observed by the clients. As we 
formally prove, temporary operation reordering is not present in all ACTs but 
in some cases cannot be avoided. This impossibility result is startling, 
because it shows that apparent \emph{strengthening} of the semantics of a 
system (by introducing strong operations to an eventually-consistent system) 
results in the weakening of the guarantees on the eventually-consistent 
operations.

In order to illustrate our concepts and analysis, we present an ACT for a 
non-negative counter and also revisit Bayou \cite{TTPD+95}, a seminal, always 
available, eventually consistent data store. Bayou combines timestamp-based 
eventual consistency \cite{V09} and serializability \cite{P79} by speculatively 
executing transactions submitted by clients and having a primary replica to 
periodically \emph{stabilize} the transactions (establish the final transaction 
execution order). We show how Bayou can be improved to form a general-purpose
ACT.

\subsection{Contribution summary}

\begin{enumerate}
\item We define \emph{acute cloud types}, a family of specialized 
mixed-consistency data structures designed primarily for high availability and 
low latency, but which also seamlessly integrate on-demand strongly consistent 
semantics achieved through quorum-based consensus protocols. Weak and strong 
operations in ACTs are interconnected in a non-trivial way, which intuitively 
ensures \emph{write stabilization}.

\item We identify a range of traits unique to some ACTs. Most importantly, we 
define \emph{temporary operation reordering}, a situation in which there is an 
interim disagreement between replicas on the relative order in which the client 
requests were executed. 

\item We propose a framework that enables formal reasoning about ACTs and their 
guarantees. In particular, our framework allows us to formalize temporary 
operation reordering and propose a correctness condition called 
\emph{fluctuating eventual consistency} which adequately captures the 
guarantees provided by ACTs that exhibit this phenomenon.

\item We use our framework to prove a number of formal results regarding ACTs. 
Crucially, we show an impossibility result that states that temporary operation 
reordering is not present in all ACTs, but in some cases cannot be avoided. 

\item We revisit the seminal Bayou system, study its consistency guarantees, 
and show how it can be improved to form a general-purpose ACT.

\end{enumerate}

\subsection{Article structure}

The article is organized as follows. In Section~\ref{sec:bayou} 
we explain ACTs through examples: an acute non-negative counter and 
adaptation of Bayou that forms a general-purpose ACT. 
We formally define ACTs in 
Section~\ref{sec:act}, and introduce the formal framework for reasoning 
about their correctness in Section~\ref{sec:framework}. In 
Section~\ref{sec:guarantees} we define FEC, our new correctness criterion and 
prove the correctness of our example ACTs. Next, in 
Section~\ref{sec:impossibility}, we give our impossibility result. We 
discuss related work in Section~\ref{sec:related_work}, and 
conclude in Section~\ref{sec:conclusions}.

A brief announcement of this article appeared in \cite{KKW19a}.


\section{Acute cloud types by examples} \label{sec:bayou}

\subsection{Acute non-negative counter}

{\renewcommand\small{\footnotesize}%
    \begin{algorithm}[t] 
\caption{Acute non-negative counter (\ACTnnc) for replica $R_i$}
\label{alg:actnnc}  
  \footnotesize
\begin{algorithmic}[1]
\State{\Struct $\Request$($\type$ : ${\{}$ADD, SUBTRACT${\}}$, $\val$ : int, $\idd$ : pair$\langle$int, int$\rangle$)}
\State{\Var $\wi$, $\si$, $\sd$, $\currentEventNumber$ : int} \label{alg:actpc:vars}
\State{\Var $\requestsAwaitingResponse$ : set$\langle$pair$\langle$int, int$\rangle\rangle$}
\State{\Var $\deliveredAdds$ : set$\langle$pair$\langle$int, int$\rangle\rangle$}
\Procedure{$\add$}{$\val$ : int} \LineComment{weak operation} \label{alg:actpc:add}
    \State{$\currentEventNumber = \currentEventNumber + 1$}
    \State{$\wi = \wi + \val$}
    \State{$r = \Request($ADD, $\val, (i, \currentEventNumber))$}
    \State{$\rbcast(r)$} \label{alg:actpc:addrb}
    \State{$\tobcast(r)$} \label{alg:actpc:addtob}
\EndProcedure
\Function{$\subtract$}{$\val$ : int} \LineComment{strong operation} \label{alg:actpc:subtract}
    \State{$\currentEventNumber = \currentEventNumber + 1$}
    \State{$r = \Request($SUBTRACT, $\val, (i, \currentEventNumber))$}
    \State{$\tobcast(r)$}
    \State{$\requestsAwaitingResponse = \requestsAwaitingResponse \cup \{ \idd \}$}
\EndFunction
\Upon{$\rbdeliver$}{$r$ : $\Request($ADD, $\val, \idd)$} 
    \If{$r.\dot.\mathit{first} = i \vee r.\dot \in \deliveredAdds$} \LineComment{$r$ issued locally} \label{alg:actpc:rbretstart}
        \State{\Return} \LineComment{or TOB-deliver happened before RB-deliver}
    \EndIf \label{alg:actpc:rbretend}
    \State{$\deliveredAdds = \deliveredAdds \cup \{ r.\dot \}$}
    \State{$\wi = \wi + \val$}
\EndUpon
\Upon{$\tobdeliver$}{$r$ : $\Request($ADD, $\val, \idd)$} 
    \If{$r.\dot \not\in \deliveredAdds$} \label{alg:actpc:tobrbinvstart}
        \State{invoke $\rbdeliver(r)$} \LineComment{RB-deliver always before TOB-deliver}
    \EndIf \label{alg:actpc:tobrbinvend}
    \State{$\si = \si + \val$} \label{alg:actpc:tobinc}
\EndUpon
\Upon{$\tobdeliver$}{$r$ : $\Request($SUBTRACT, $\val, \idd)$} 
    \State{\Var $\res = \si \geq \sd + \val$} \label{alg:actpc:subcalcstart}
    \If{$\res$} \label{alg:actpc:tobdecstart}
        \State{$\sd = \sd + \val$} \label{alg:actpc:subcalcend}
    \EndIf \label{alg:actpc:tobdecend}
    \If{$\idd \in \requestsAwaitingResponse$}
        \State{$\requestsAwaitingResponse = \requestsAwaitingResponse \setminus \{ \idd \}$}
        \State{return $\res$ to client} 
    \EndIf
\EndUpon
\Function{$\get$()}{} \LineComment{read-only, weak operation} \label{alg:actpc:get}
    \State{\Return $\wi - \sd$} \label{alg:actpc:getc}
\EndFunction
\end{algorithmic}
\end{algorithm}

}

As we mentioned in Section~\ref{sec:introduction}, a non-negative integer 
counter cannot be implemented as a classic CRDT because the subtraction 
operation requires global coordination to ensure 
that the value of the counter never drops below 0. In 
Algorithm~\ref{alg:actnnc} we present an \emph{acute non-negative integer 
counter} (\ACTnnc), a simple ACT implementing such a counter. The $\add$ (line~\ref{alg:actpc:add}) and $\get$ (line~\ref{alg:actpc:get}) operations 
can be weak and thus always ensure low latency responses, whereas $\subtract$ 
(line~\ref{alg:actpc:subtract}) must be a strong operation to ensure
the semantics of a non-negative counter. The crux of \ACTnnc lies in using 
two complementary protocols for exchanging updates (a gossip one and 
one that establishes the ultimate operation serialization), and calculating
the state of the counter by liberally counting $\add$ operations and 
conservatively counting the $\subtract$ operations.


To track the execution of weak and strong operations, each \ACTnnc 
replica maintains three variables (line~\ref{alg:actpc:vars}): one 
for subtraction operations ($\sd$) and two for the addition operations ($\wi$ 
and $\si$). The replicas exchange the information about new ADD requests (weak 
updating operations) using a gossip protocol (modelled using 
\emph{reliable broadcast, RB} \cite{CGR11}) as well as a protocol that involves 
inter-replica synchronization (modelled using \emph{total order broadcast, TOB} 
\cite{PGS98}, which can be efficiently implemented using quorum-based 
protocols, such as Paxos \cite{Lam98}; lines 
\ref{alg:actpc:addrb}-\ref{alg:actpc:addtob}). 
The $\subtract$ operation, which does not commute unlike the $\add$ 
operation, solely uses TOB. Upon receipt of a $\tobcast$ SUBTRACT message, the 
subtract operation completes successfully only if we are certain that the value 
of the counter does not drop below 0, i.e., when the aggregated value of all 
confirmed addition operations ($\wi$) is greater or equal to the aggregated 
value of all subtract operations ($\sd$) increased by $\val$ (lines 
\ref{alg:actpc:subcalcstart}-\ref{alg:actpc:subcalcend}). 


We ensure that on any replica and for any ADD request $r$, the $\rbdeliver(r)$ 
event always happens before the $\tobdeliver(r)$ event (lines 
\ref{alg:actpc:rbretstart}--\ref{alg:actpc:rbretend} and 
\ref{alg:actpc:tobrbinvstart}--\ref{alg:actpc:tobrbinvend}). This way $\wi \ge 
\si$. Hence, we solely use $\wi$ as the approximation of the total value added 
to \ACTnnc when calculating the return value for the $\get$ operations.

Using a gossip protocol allows us to achieve propagation of weak updating 
operations within network partitions, when synchronization involving solving 
distributed consensus is not possible. On the other hand, when solving 
distributed consensus is possible, replicas can agree on the final order in 
which operations will be visible. This way weak operations $\add$ and $\get$ 
are highly available, i.e., they always execute in a constant number of steps
and do not depend on waiting on communication with other replicas. Crucially, 
the return value of the $\get$ operation always reflects all the 
$\add$ operations performed locally and, eventually, all $\add$ 
operations performed within the network partition to which the replica belongs, 
if such a partition exists. On the other hand, the strong $\subtract$ 
operation is applied only if the replicas agree that it is safe to do so.

\ACTnnc guarantees a property which is a conjunction
of \emph{basic eventual consistency} (\BEC) \cite{BGY13} \cite{B14} for weak 
operations ($\add$ and $\get$) and \emph{linearizability} (\LIN) \cite{HW90} 
for strong operations ($\subtract$). We formalize \BEC and \LIN
in Sections \ref{sec:guarantees:bec} and \ref{sec:guarantees:sc},  
and prove the correctness of \ACTnnc in Section~\ref{sec:guarantees:bayou}.

\subsection{Bayou} 

Bayou was an experimental system, so was never optimized for performance. 
However, due to its unique approach 
to speculative execution of transactions and their later \emph{stabilization} 
(establishing the final transaction execution order by a primary replica), 
examining Bayou allows us to discuss various problematic phenomena that stem 
from having both weak and strong semantics in a single system. 
We improve Bayou to form a general-purpose, albeit not 
performance-optimized ACT.


\subsubsection{Protocol overview}

\todo{Add a sentence about read-only operations that do need to be
broadcast to primary and the rest of the replicas.}
Below we give a high-level description of the Bayou protocol. An interested
reader may find a detailed description of Bayou (together with a pseudocode) in 
Appendix~\ref{sec:bayou:details}.

In order to make our analysis more general, we abstract certain aspects 
of the original protocol. Crucially, we allow clients to submit to Bayou 
replicas deterministic, arbitrarily complex (also as complex as, e.g., SQL 
transactions) operations that can provide the clients with a return value. Each 
operation is either \emph{weak} or \emph{strong}, similarly to operations in 
\ACTnnc. Any weak operation is non-blocking with respect to network 
communication, because it is executed locally without any coordination with 
other replicas, but its ultimate impact on the system's state might differ from 
what the client can infer from the return value (if the stabilized execution 
happened differently than the speculative one). On the other hand, the return 
value of a strong operation results from a prior inter-replica synchronization 
and thus can be trusted to never change.

In Bayou, each server speculatively total-orders all received operations
using a simple timestamp-based mechanism and without prior agreement with other 
servers. A unique timestamp is generated by the replica upon receipt of an 
operation from the client. An operation tagged with the timestamp is then 
sent
to all other replicas using some gossip protocol. When a replica 
has a new operation $\opn$ ($\opn$ was directly submitted by a client or it has 
been received 
from other replica), firstly the 
replica  determines the suitable execution order for $\opn$. If $\opn$ 
has the highest timestamp of all operations executed so far by the replica, it 
simply executes $\opn$ (and, if $\opn$ was submitted to the replica by a client 
and $\opn$ is a weak operation, the replica provides the client with a return 
value). Otherwise, the replica rolls back all operations that have higher 
timestamps than $\opn$ (starting from the one with the highest timestamp), 
executes $\opn$ and reexecutes the rolled back operations according to their 
timestamps. This way a single total order consistent with operation timestamps 
is always maintained by all replicas.

The above approach has two major downsides. The first one concerns the 
performance: every time a replica receives an operation with a relatively low 
timestamp (compared to the timestamps of the operations executed most 
recently), 
to maintain the correct execution order, many operations 
need to be rolled back and reexecuted. The second downside is related to the 
provided guarantees: a client that submitted an operation $\opn$ and already 
received a response can never be sure that there will be no other operation 
$\opn'$ with a lower timestamp than $\opn$, which will eventually cause $\opn$ 
to be reexecuted, thus producing possibly a different return value.

In order to mitigate the above two problems, one of the replicas, called the 
\emph{primary}, periodically communicates to the other replicas the \emph{final 
operation execution order}, which is a growing prefix of the operations already 
executed by the primary. Other replicas always honour the decision made by the 
primary, which may force them to adjust their local operation execution orders 
by rolling back and reexecuting some operations. When there are no major 
communication delays between replicas, the final operation execution order 
established by the primary does not deviate much from the order resulting from 
operation timestamps. Hence, replicas do not need to perform many rollbacks and 
reexecutions. Moreover, the operation values obtained during speculative 
execution are mostly correct, i.e., the same as the return values obtained 
during execution of the operations according to the final operation execution 
order. Once the final operation execution for some $\opn$ (weak or 
strong) is established, it will never be reexecuted again. If $\opn$ is a 
strong operation, it is now safe to provide the client with the return value.

Intuitively, the replicas converge to the same state, which is reflected by the 
prefix of operations established by the primary (called the $\committed$ 
list of operations) and the sequence of other operations ordered according to 
their timestamps (the $\tentative$ list of operations). More precisely, when 
the stream of operations incoming to the system ceases and there are no network 
partitions (the replicas can 
reach with the primary), the $\committed$ 
lists at all replicas will be the same, whereas the $\tentative$ lists will be 
empty. On the other hand, when there are partitions, some operations might not 
be successfully committed by the primary, but will be disseminated within a 
partition using a gossip protocol. Then all replicas within the same partition 
will have the same $\committed$ and (non-empty) $\tentative$ lists.

\subsubsection{Anomalies} \label{sec:bayou:anomalies}

\begin{figure*}[t]
\vspace{-0.2cm}
\newcommand{\cwidth}{2.7cm}
\vspace{-0.2cm}
\center

\scalebox{0.85}{
\begin{tabular}{D p{\cwidth} p{\cwidth} p{\cwidth}}
\begin{minipage}{13cm}
\scalebox{1.2}{
{\begin{tikzpicture}[font=\sf\footnotesize] 

\tikzstyle{time line} = [->, thick];
\tikzstyle{bcast line} = [->, dashed, thick];
\tikzstyle{tobcast line} = [->];
\tikzstyle{dot} = [circle, fill, inner sep=1pt];

\newcommand{\dist}{1.15}
\newcommand{\procdist}{2.3}
\newcommand{\tlength}{0.4}
\newcommand{\mypattern}{north west lines}

\node (r1) at (0, 1 * \procdist) [left] {$R_1$};
\node (r2) at (0, 0.5 * \procdist) [left] {$R_2$};
\node (r3) at (0, 0.0 * \procdist) [left] {$\mathit{(primary)}\ R_3$};

\draw [time line] (r1) -- (7.15 * \dist, 1 * \procdist);
\draw [time line] (r2) -- (7.15 * \dist, 0.5 * \procdist);
\draw [time line] (r3) -- (7.15 * \dist, 0.0 * \procdist);


\abdotname{it1}{1.0 * \dist, 1.0 * \procdist}{$\invoke(u_1)$};

\emptydot{r1rect2}{1.3 * \dist, 1.0 * \procdist};

\emptydot{r1t2}{1.6 * \dist, 1.0 * \procdist};
\emptydot{r1t2e}{$(r1t2) + (\tlength, 0)$};
\mysegfill{r1t2}{r1t2e};

\emptydot{r1t1}{2.1 * \dist, 1.0 * \procdist};
\emptydot{r1t1e}{$(r1t1) + (\tlength, 0)$};
\mysegdraw{r1t1}{r1t1e};

\beldotname{iq}{2.8 * \dist, 1.0 * \procdist}{$\invoke(q_1)$};

\emptydot{r1q}{2.9 * \dist, 1.0 * \procdist};
\emptydot{r1qe}{$(r1q) + (\tlength, 0)$};
\mysegfillcustom{r1q}{r1qe}{\mypattern};

\abdotname{responseQ}{3.4 * \dist, 1.0 * \procdist}{$\responserm(q_1{,} 1)$};

\emptydot{r1recct1}{3.8 * \dist, 1.0 * \procdist};
\emptydot{r1recct2}{5.0 * \dist, 1.0 * \procdist};

\emptydot{r1t1r}{3.95 * \dist, 1.0 * \procdist};
\emptydot{r1t1re}{$(r1t1r) + (\tlength, 0)$};
\mysegdraw{r1t1r}{r1t1re};

\emptydot{r1t2r}{4.45 * \dist, 1.0 * \procdist};
\emptydot{r1t2re}{$(r1t2r) + (\tlength, 0)$};
\mysegfill{r1t2r}{r1t2re};


\beldotname{it2}{0.5 * \dist, 0.5 * \procdist}{$\invoke(u_2)$};

\emptydot{r2rect1}{1.25 * \dist, 0.5 * \procdist};

\emptydot{r2recct1}{2.8 * \dist, 0.5 * \procdist};
\emptydot{r2recct2}{4.0 * \dist, 0.5 * \procdist};

\emptydot{r2t1}{4.5 * \dist, 0.5 * \procdist};
\emptydot{r2t1e}{$(r2t1) + (\tlength, 0)$};
\mysegdraw{r2t1}{r2t1e};

\emptydot{r2t2}{5.0 * \dist, 0.5 * \procdist};
\emptydot{r2t2e}{$(r2t2) + (\tlength, 0)$};
\mysegfill{r2t2}{r2t2e};

\beldotname{invokeQ}{5.7 * \dist, 0.5 * \procdist}{$\invoke(q_2)$};

\emptydot{r2q}{5.8 * \dist, 0.5 * \procdist};
\emptydot{r2qe}{$(r2q) + (\tlength, 0)$};
\mysegfillcustom{r2q}{r2qe}{\mypattern};

\abdotname{responseQ}{6.3 * \dist, 0.5 * \procdist}{$\responserm(q_2{,} 2)$};


\emptydot{r3t1}{2.0 * \dist, 0.0 * \procdist};
\emptydot{r3t1e}{$(r3t1) + (\tlength, 0)$};
\mysegdraw{r3t1}{r3t1e};

\beldotname{ct1}{2.5 * \dist, 0.0 * \procdist}{};
\belemptydotname{ct1d}{2.3 * \dist, 0.0 * \procdist}{$\commit(u_1)$};

\emptydot{r3t2}{3.2 * \dist, 0.0 * \procdist};
\emptydot{r3t2e}{$(r3t2) + (\tlength, 0)$};
\mysegfill{r3t2}{r3t2e};

\beldotname{ct2}{3.7 * \dist, 0.0 * \procdist}{};
\belemptydotname{ctd2}{3.9 * \dist, 0.0 * \procdist}{$\commit(u_2)$};


\draw [tobcast line] (it1.center) -- (r2rect1.center);
\draw [tobcast line] (it1.center) -- (r3t1.center);

\draw [tobcast line] (it2.center) -- (r1rect2.center);
\draw [tobcast line] (it2.center) -- (r3t2.center);

\draw [tobcast line] (ct1.center) -- (r1recct1.center);
\draw [tobcast line] (ct1.center) -- (r2recct1.center);

\draw [tobcast line] (ct2.center) -- (r1recct2.center);
\draw [tobcast line] (ct2.center) -- (r2recct2.center);


\begin{scope}[shift={(-0.1, 0.3)}]
\emptydot{a1}{-0.8, -1};
\emptydot{a1e}{-0.4, -1};
\mysegdraw{a1}{a1e};
\node (t1) at (0.9, -1.0) {$\mathit{executions\ of\ u_1}$};

\emptydot{a1}{2.4, -1};
\emptydot{a1e}{2.8, -1};
\mysegfill{a1}{a1e};
\node (t1) at (4.1, -1) {$\mathit{executions\ of\ u_2}$};

\emptydot{a1}{5.6, -1};
\emptydot{a1e}{6.0, -1};
\mysegfillcustom{a1}{a1e}{\mypattern};
\node (t1) at (7.6, -1) {$\mathit{executions\ of\ q_1/q_2}$};
\end{scope}

\end{tikzpicture}} 
}
\end{minipage}
& 
\begin{minipage}{\cwidth}
\begin{algorithmic}
\Empty{$u_1\ \{$}{}
\State{x = 1}
\If{y = 1}
    \State{z = 1}
\EndIf
\EndEmpty
\State{$\ \}$}
\end{algorithmic}
\end{minipage}
& 
\begin{minipage}{\cwidth}
\begin{algorithmic}
\Empty{$u_2\ \{$}{}
\State{y = 1}
\If{x = 1}
    \State{z = 2}
\EndIf
\EndEmpty
\State{$\ \}$}
\end{algorithmic}
\end{minipage}
&
\begin{minipage}{\cwidth}
\begin{algorithmic}
\Empty{$q_1/q_2\ \{$}{}
\State{\Return z}
\EndEmpty
\State{$\ \}$}
\State{}
\State{}
\end{algorithmic}
\end{minipage}
\end{tabular}
}
\caption{Example execution of Bayou showing temporary operation reordering and 
circular causality.}
\label{fig:tor}
\end{figure*}

\todo{Say that $u_1$ and $u_2$ are weak.} 
Consider the example in Figure~\ref{fig:tor}, which shows an execution of a 
three-replica Bayou system. Initially, replica $R_1$ executes updating 
operations $u_1$ and $u_2$ in order $u_2, u_1$, which corresponds to $u_1$'s 
and $u_2$'s timestamps. This operation execution order is observed by the 
client who issues query $q_1$. On the other hand, $R_2$ executes the 
operations according to the final execution order ($u_1, u_2$), as 
established by the primary replica $R_3$. Hence, the client who issued query 
$q_2$ observes a different execution order than the client who issued $q_1$. 
Note that replicas execute the operations with a delay (e.g., due to CPU 
being busy) and that $R_1$ reexecutes the operations once it gets to 
know the final order.

Clearly, the clients that issued the operations can infer from the return 
values the order in which Bayou executed the operations. The observed operation 
execution orders differ between the clients accessing $R_1$ and $R_2$. This 
kind of anomaly, which we call \emph{temporary operation reordering}, cannot be 
avoided since Bayou uses two, inconsistent with each other, ways in which 
operations are ordered (the timestamp order and the order established by the 
primary, which may occasionally differ from the timestamp order). 
This behaviour is not present in strongly consistent systems, which ensure that 
a single global ordering of operation execution is always respected (e.g., 
\cite{Lam78} \cite{CBPS10}). The majority of eventually consistent systems 
which trade consistency for high availability are also free of this anomaly, 
as they only use one method for ordering concurrent operations (e.g., 
\cite{Tho79} \cite{BGY13}), or support only commutative operations (as in 
\emph{strong eventual consistency} \cite{SPBZ11}, e.g. \cite{SPBZ11a} 
\cite{AEM15}). There are also protocols that allow some operations to perceive 
the past events in different (but still legal) orders (e.g., \cite{HP86} 
\cite{PRR+12} \cite{LPC+12}). But, unlike Bayou, they do not require the 
replicas to eventually agree on a single execution order for all operations.
Interestingly, temporary operation reordering is not present in \ACTnnc, 
because weak updating operations ($\add$) commute and do not provide clients 
with the return values. 

Bayou exhibits another anomaly, which comes as very non-intuitive, i.e., 
\emph{circular causality}. By analysing the return values of queries $q_1$ and 
$q_2$ one may conclude that there is a circular dependency between $u_1$ 
and $u_2$: $u_1$ depends on $u_2$ as evidenced by $q_1$'s response, while $u_2$ 
depends on $u_1$ as evidenced by $q_2$'s response 
(the cycle of 
causally related operations can contain more operations). Interestingly, 
as we show later, circular causality does not directly follow from temporary 
operation reordering but is rather a result of the way Bayou rolls back and 
reexecutes some operations. 

In the original Bayou protocol, application-specific conflict detection and 
resolution is accomplished through the use of \emph{dependency checks} and 
\emph{merge procedure} mechanisms. Since we allow operations 
with arbitrary complex semantics, the dependency checks and the merge 
procedures can be emulated by the operations themselves, by simply 
incorporating \emph{if-else} statements: the dependency check as 
the \emph{if} condition, and the merge procedure in the \emph{else} 
branch (as suggested in the original paper \cite{TTPD+95}). Hence, these 
mechanisms do not alleviate the anomalies outlined above.

\subsubsection{Correctness guarantees} \label{sec:bayou:guarantees}

Because of the phenomena described above, the guarantees provided by Bayou 
cannot be formalized using the correctness criteria used for contemporary 
eventually consistent systems based on CRDTs. E.g., \emph{basic eventual 
consistency} (\BEC) by Burckhardt \etal \cite{BGY13} \cite{B14} (mentioned 
briefly when discussing \ACTnnc's guarantees) directly 
forbids circular causality (see Section~\ref{sec:guarantees:bec} for 
definition of \BEC). \BEC also requires the relative order of any two 
operations, as perceived by the client, to be consistent and to never change. 
Similarly, \emph{strong eventual consistency} (\SEC) by Shapiro \etal 
\cite{SPBZ11} requires any two replicas that delivered the same updates to have 
equivalent states.\footnote{\BEC can be seen as a refinement of \SEC, which 
abstracts away from CRDTs implementation details and ensures that no return 
value is constructed out of thin air.} Obviously, Bayou neither satisfies \BEC 
nor \SEC (as evidenced by Figure~\ref{fig:tor}). On the other hand informal 
definitions of eventual consistency which admit temporal reordering, such as 
\cite{V09}, involve only liveness guarantees, which is insufficient. Bayou  
fulfills the operational specification in \cite{FGL+96}. However, we are 
interested in declarative specifications, similar in the style to popular 
consistency criteria, such as \emph{sequential consistency} \cite{L79}, or 
\emph{serializability} \cite{P79}, through which we can concisely define the 
behaviour of a wide class of systems. Hence we introduce a new 
correctness criterion, \emph{fluctuating eventual consistency} (\FEC), 
which can be viewed as a generalization of \BEC (see 
Section~\ref{sec:guarantees:fec} for definition). \FEC relaxes \BEC, so 
different operations can perceive different operation orders. However, we 
require that the different perceived operation orders converge to one final 
execution order. Hence, \FEC is suitable for systems that feature temporary
operation reordering.

Similarly to \ACTnnc, Bayou also ensures linearizability for strong operations 
(a response of a strong operation $\opn$ always reflects the serial execution 
of all stabilized operations up to the point of $\opn$'s commit). 
In Section~\ref{sec:guarantees:bayou} we formally prove that the Bayou-derived 
general-purpose ACT satisfies the above correctness criteria.

In Appendix~\ref{sec:bayou:liveness}, an interested reader may find a brief 
analysis of Bayou's liveness guarantees.

\subsubsection{Fault-tolerance}

Bayou's reliance on the primary means that it provides only limited 
fault-tolerance. Even though the primary may recover, when it is down, 
operations do not stabilize, and thus no strong operation can complete. Hence, 
the primary is the single point of failure. Alternatively, the primary could be 
replaced by a distributed commit protocol. If two-phase-commit (2PC) 
\cite{BHG87} is used, the phenomena illustrated in Figure~\ref{fig:tor} are not
possible. However, in this approach, a failure of any replica blocks the 
execution of strong operations (in 2PC all the replicas need to be 
operational in order to reach distributed agreement). 
On the other hand, if a \emph{non-blocking} commit protocol, e.g., one that 
utilizes a quorum-based implementation of TOB is used (as in \ACTnnc), the 
system may stabilize operations despite (a limited number of) 
failures.\footnote{Sharded 2PC \cite{CG18} can be considered non-blocking, 
if within each shard at least one process remains 
operational at all times. Then, in such a scheme not every process needs to be 
contacted to commit a transaction, thus it falls under the quorum-based 
category.} As we prove later, ACTs (which do not depend on synchronous 
communication with all the replicas and thus can operate despite failures of 
some of them) with general-purpose semantics similar to Bayou, are necessarily 
prone to the phenomena described above.

\subsubsection{The improved Bayou protocol} \label{sec:bayou:modified}

Bayou can be improved to make it more fault-tolerant 
and free of some of the phenomena described above.

Firstly, we use TOB in place of the primary to establish the final operation 
execution order. More precisely, each time a replica receives an operation 
$\opn$ from a client, it still disseminates $\opn$ using a gossip protocol (so 
it can reach at least all replicas within the same network partition) but it 
also broadcasts the operation using TOB (so in a similar way in which weak 
updating operations are handled in \ACTnnc). Since TOB guarantees that all 
replicas deliver the same set of messages in the same order, all replicas will 
stabilize the same set of operations in the same order. As we argued earlier, 
TOB can be implemented in a way that avoids a single point of failure 
\cite{Lam98}.

The second modification is aimed at eliminating circular causality in Bayou.
To this end (1) strong operations are broadcast using TOB and never a gossip 
protocol, and (2) upon being submitted, a weak operation $\opn$ is executed 
immediately on the current state to produce the return value; if $\opn$'s 
timestamp is not the highest timestamp of all already executed operations, 
$\opn$ is then rolled back and eventually executed in the order consistent with 
its timestamp. In Appendix~\ref{sec:bayou_modified} we formally prove that 
above changes to the protocol allow us to avoid circular causality. 

The modified variant of Bayou does not ensure that subsequent 
operations invoked by the same client observe the effects of previous ones, 
even if 
they are issued on the same replica (the \emph{read-your-writes} 
session guarantee \cite{TDPS+94}).\footnote{In the original Bayou system, 
clients were colocated with replicas and the \emph{read-your-writes} guarantee 
was naturally provided. In our approach, such guarantees can be provided on 
the client side.} 

With the above modifications the improved Bayou protocol becomes the 
general-purpose ACT called \ACTBayou.

\subsection{ANNC vs \ACTBayou}

\ACTnnc and \ACTBayou greatly differ in the offered semantics and complexity.
Note that we could trivially implement a non-negative integer counter using 
\ACTBayou by executing each counter operation as a separate \ACTBayou 
operation, albeit such an implementation would be suboptimal: in some cases 
the operations would have to be rolled back and temporary operation reordering 
would be possible again. Still, we can consider \ACTBayou as a generic ACT, 
capable of executing any set of weak and strong operations. 

Despite the many differences between \ACTnnc and \ACTBayou, they share several 
design assumptions, which are common to all ACT implementations. Firstly, 
in order to facilitate high availability and low latency responses
(which are essential in geo-replicated environments), 
frequently invoked operations should be defined as weak operations and replicas 
should process them similarly to operations in CRDTs (automatically resolve 
conflicts between concurrent updates; converge to the same state within a 
network partition). To enforce this behaviour without resorting to distributed 
agreement, we impose the same assumptions as Attiya \etal for highly available 
eventually consistent data stores in \cite{AEM15} (see also 
Section~\ref{sec:act:restrictions}).
Secondly, when weak consistency guarantees are insufficient, strong operations 
can be used. Strong operations use a global agreement protocol for 
inter-replica synchronization, e.g., TOB. We require that strong operations do 
not block the execution of weak operations and that they do not require all 
replicas to be operational at all times in order to complete (as 
in 2PC).

ACTs are meant to provide the programmer with a modular abstraction layer that 
handles all the complexities of replication, while enabling flexibility, high 
performance and clear mixed-consistency semantics. In the next section we 
specify ACTs formally.

\section{Acute Cloud Types} \label{sec:act}

\subsection{Definition}

\emph{An acute cloud type} is an abstract data type, implemented as a 
replicated data structure, that offers a precisely defined set of operations, 
divided into two groups: weak and strong. The operations can be either updating 
or read-only (\RO), and all operations are allowed to provide a return value 
(in Section~\ref{sec:framework} we show how the semantics of operations can be 
specified formally). 
We impose the following \emph{implementation restrictions} over ACTs:
\emph{invisible reads}, \emph{input-driven processing}, 
\emph{op-driven messages}, \emph{highly available weak operations} and 
\emph{non-blocking strong operations}. The first four, are adapted from the 
definition of \emph{write-propagating data stores} \cite{AEM15}. They 
guarantee \emph{genuine}, low-latency, eventually-consistent processing for 
weak operations (as in, e.g., CRDTs \cite{SPBZ11}). The last restriction 
guarantees that strong operations are implemented using a non-blocking 
agreement protocol, instead of a non-fault-tolerant approach requiring all the 
replicas to be operational. In Sections \ref{sec:act:model} and 
\ref{sec:act:restrictions} we formalize the system model and provide precise 
definitions of the implementation restrictions.

\subsection{System model} \label{sec:act:model}

\subsubsection{Replicas and clients}

We consider a system consisting of $n \ge 2$ processes called \emph{replicas}, 
which maintain full copies of an ACT\footnote{Partial 
replication is orthogonal to our work. We assume full replication for 
simplicity.} and to which external clients submit requests in the form of 
operations to be executed. Each operation invoked by a client is marked either 
\emph{weak} or \emph{strong}. Replicas communicate with each other through 
message passing. We assume the availability of a gossip protocol, which is used 
when ordering constraints are not necessary, and some global agreement 
protocol, used for tasks that require solving distributed consensus. For 
simplicity, as in Algorithm~\ref{alg:actnnc}, we formalize these protocols 
using \emph{reliable broadcast (RB)} \cite{CGR11}, and TOB, respectively. 
Replicas can implement point-to-point communication simply by ignoring messages 
for which they are not the intended recipient. We model replicas as 
deterministic state machines, which execute atomic steps in reaction to 
external events (e.g., operation invocation or message delivery), and
can execute internal events (e.g., scheduled processing of rollbacks). 
A specific event is \emph{enabled} on a replica, if its preconditions are 
met (e.g., an $\rbdeliver(m)$ event is enabled on a replica $R$, if $m$ was 
previously $\rbcast$ and $R$ has not yet delivered the message $m$). Replicas 
have access to a local clock, which advances monotonically, but we 
make no assumptions 
on the bound on clock drift between replicas.

We model crashed replicas as if they stopped all computation (or compute 
infinitely slowly). We say that a replica is \emph{faulty} if it crashes (in an 
infinite execution it executes only a finite number of steps). Otherwise, it 
is \emph{correct}.

\subsubsection{Network properties} 
\label{sec:act:model:network}

In a fully asynchronous system, a crashed replica is indistinguishable to its 
peers from a very slow one, and it is impossible to solve the distributed 
consensus problem \cite{FLP85}. Real distributed systems which exhibit some 
amount of \emph{synchrony} can usually overcome this limitation. For example, 
in a quasi-synchronous model \cite{VA95}, the system is considered to be 
synchronous, but there exist a non-negligible probability that timing 
assumptions can be broken. We are interested in the behaviour of protocols, 
both in the fully \emph{asynchronous} environment, when timing assumptions are 
consistently broken (e.g. because of prevalent network partitions), and in a 
\emph{stable} one, when the minimal amount of synchrony is available so that 
consensus eventually terminates. Thus, we consider two kinds of \emph{runs}: 
\emph{asynchronous runs} and \emph{stable runs}. Replicas are not aware which 
kind of a run they are currently executing. In stable runs, we augment the 
system with the failure detector $\Omega$ (which is an abstraction for the 
synchronous aspects of the system). We do so implicitly by allowing the 
replicas to use TOB through the $\tobcast$ and $\tobdeliver$ primitives. 
Since, TOB is known to require a failure detector at least as strong as 
$\Omega$ to terminate \cite{CHT96}, we guarantee it achieves progress only in 
stable runs.

In both asynchronous and stable runs we guarantee the basic 
properties of reliable message passing \cite{CGR11}, i.e.: 
\begin{itemize}
\item if a message is $\rbdeliver$ed, or $\tobdeliver$ed, then it was, 
respectively, $\rbcast$, or $\tobcast$, by some replica, 
\item no message is $\rbdeliver$ed, or $\tobdeliver$ed, more than once by 
the same replica, 
\item if a correct replica $\rbcast$s some message, then eventually it 
$\rbdeliver$s it, 
\item if a correct replica $\rbdeliver$s some message, then eventually all 
correct replicas $\rbdeliver$ it, 
\item if any (correct or faulty) replica $\tobdeliver$s some message, then 
eventually all correct replicas, $\tobdeliver$ it, 
\item messages are $\tobdeliver$ed by all replicas in the same total 
order. 
\end{itemize}
We define $\tobNo(m)$ as the sequence number of the $\tobdeliver(m)$ event 
(among other $\tobdeliver$ events in the execution)
on any replica  (we leave it undefined, i.e., $\tobNo(m) = \bot$, if $m$ is 
never $\tobdeliver$ed by any replica). 

Solely in stable runs, we also guarantee the following: 
\begin{itemize}
\item if a correct replica $\tobcast$s some message, then eventually all 
correct replicas $\tobdeliver$ it.
\item if a message $m$ was both $\rbcast$ and $\tobcast$ by some 
(correct or faulty) replica, and $m$ was $\rbdeliver$ed by some correct 
replica, then eventually all correct replicas $\tobdeliver$ it. 
\end{itemize}
The last guarantee is non-standard for a total-order broadcast, but could be 
easily emulated by the application itself. We include it to simplify 
presentation of certain algorithms, such as \ACTnnc and \ACTBayou.

%

\subsubsection{Fair executions} \label{sec:act:model:correctness}

An execution is \emph{fair}, if each replica, has a chance to execute its steps 
(all replicas execute infinitely many steps of each type of an enabled event, 
e.g., infinitely many $\rbdeliver$ events for infinitely many messages 
$\rbcast$).

We analyze the correctness of a protocol by evaluating a single arbitrary 
infinite fair execution of the protocol, similarly to \cite{B14} and 
\cite{AEM17}. If the execution satisfies the desired properties, then all the 
executions of the protocol (including finite ones and the ones featuring 
crashed replicas) satisfy all the safety aspects verified (\emph{nothing bad 
ever happens} \cite{L77} \cite{AS87}). Additionally, all fair executions of the 
protocol satisfy liveness aspects (\emph{something good eventually happens}).

\subsection{Implementation restrictions}
\label{sec:act:restrictions}

Below we state the five rules that ACTs need to adhere to.

\textbf{1. Invisible reads.} Replicas do not change their state due to 
an invocation of a weak read-only operation. Formally, for each weak read-only 
operation $\opn$ invoked on a replica $R$ in state $\sigma$, the state of $R$ 
after a response for $\opn$ is returned is equal $\sigma$. Note that, the 
consequence of this is that weak read-only operations need to return a response 
to the client immediately in the invoke event, without executing any other 
steps. We allow strong read-only operations to change the state of a replica, 
because sometimes it is necessary to synchronize with other replicas, and the 
replica needs to note down that a response is pending. 

\textbf{2. Input-driven processing.} Replicas execute a series of 
steps only in response to some external stimulus, e.g., an operation 
invocation or a received message. 
A state $\sigma$ of a replica $R$ 
is \emph{passive} if none of the internal events on the replica are enabled in 
$\sigma$. Initially each replica is in a passive state. An external event may 
bring a replica to an \emph{active} state $\sigma'$ in which it has some 
internal events enabled. Then, after executing a finite number of internal 
events (when no new external events are executed), the replica enters a passive 
state. More formally, for each replica $R$, we require that in a given 
execution, either there is only a finite number of internal events executed on 
$R$, or there is an infinite number of external events executed on $R$. We say 
that $R$ is \emph{passive}, if it is in a passive state, otherwise it is 
\emph{active}.

\textbf{3. Op-driven messages.} RB or TOB messages are only generated and 
sent as a result of some not read-only client operation, and not spontaneously 
or in response to a received message. 
More formally, a message can be $\rbcast$ or $\tobcast$ by a replica $R$, if 
previously some not read-only operation was invoked on $R$, and since then $R$ 
did not enter a passive state.

\textbf{4. Highly available weak operations.} Weak operations need to 
eventually return a response without communicating with other replicas. A weak 
operation $\opn$ may remain pending only if the execution is finite, and the 
executing replica remains active since the invocation of $\opn$ (in an infinite 
execution a pending weak operation is never allowed). 


\textbf{5. Non-blocking strong operations.} Strong operations need 
to eventually return a response if a global agreement has been reached. 
More formally, for a strong operation $\opn$ invoked on a replica $R$, let 
$\msgs$ be the set of all messages $\tobcast$ by $R$ since the invocation of 
$\opn$ but before $R$ enters a passive state. Then, $\opn$ may remain pending 
only if:
\begin{itemize}
\item the execution is finite, and $R$ remains active since the invocation of 
$\opn$, or $R$ remains active because of the delivery of any message $m \in 
\msgs$, or 
\item 
there exists a message $m \in \msgs$, which has not been $\tobdeliver$ed by $R$
yet. 
\end{itemize} 
It means that in order to execute a strong operation replicas may 
synchronize by $\tobcast$ing multiple messages, but once TOB completes, the 
response must be returned in a finite number of steps.
%

All the above requirements are commonly met by various eventually consistent 
data stores and CRDTs (when we consider them as ACTs with only weak operations 
and using our communication model\footnote{In case of geo-replicated systems 
which are weakly consistent between data centers, but feature state machine 
replication within a data center to simulate reliable processes, we can 
consider the whole data center as a single replica.}), e.g., \cite{BGHS13} 
\cite{BGYZ14} \cite{DHJK+07} \cite{PZBD+14} \cite{LFKA11} \cite{SPBZ11} 
\cite{DIRZ14} \cite{AEM15} \cite{AEM17}. The restrictions 1--4 are 
inspired by the ones defined for write-propagating data stores \cite{AEM15}, 
but modified appropriately to accommodate for the more complex nature of ACTs. 
In particular, we allow implementations which do not execute each invoked 
operation in one atomic step, but divide the execution between many internal 
steps (e.g., see the pseudocode of Bayou in Appendix~\ref{sec:bayou:details}). 
On the other hand, the 5th requirement concerns strong operations, and so is 
specific for ACTs. As discussed at length in \cite{AEM15} \cite{AEM17}, 
requirements 1--4 preclude implementations that offer stronger consistency 
guarantees but do not provide a real value to the programmer (and still 
fall short of the guarantees possible to ensure if global agreement can be 
reached). For example, without invisible reads, it is possible to propose an 
implementation of a register's read operation, which returns the most up to 
date value written to the register only after it returned stale values to a 
similar call for a fixed number of times (even though the newest value was 
already available). On the other hand, \emph{with} the above restrictions, it 
is still possible to attain causal consistency and variants of it, such as 
\emph{observable causal consistency} \cite{AEM15}.


\section{Formal framework} \label{sec:framework}

Below we provide the formalism that allows us to reason about execution 
histories and correctness criteria. We extended the framework by  
Burckhardt \etal \cite{BGY13}\iftoggle{proofs}{,}{\cite{B14} }
(also used by several other researchers, e.g., \cite{BEH14} 
\cite{AEM15} \cite{AEM17} \cite{VV16}\iftoggle{proofs}{, see also \cite{B14} 
for a textbook tutorial).}{).}

\subsection{Preliminaries} \label{sec:framework:preliminaries}

\noindent \textbf{Relations:} A binary relation $\rela$ over set $A$ is a 
subset $\rela \subseteq A \times A$. For $a,b \in A$, we use the notation $a 
\raaa{\rela} b$ to denote $(a,b) \in \rela$, and the notation $\rela(a)$ to 
denote $\{b \in A : a \raaa{\rela} b \}$. We use the notation $\rela^{-1}$ to 
denote the inverse relation, i.e. $(a \raaa{\rela^{-1}} b) \iff (b \raaa{\rela} 
a)$. Therefore, $\rela^{-1}(b) = \{a \in A : a \raaa{\rela} b \}$. Given two 
binary relations $\rela$, $\rela'$ over $A$, we define the composition $\rela ; 
\rela' = \{(a,c) : \exists b \in A : a \raaa{\rela} b \raaa{\rela'} c \}$. We 
let $\id_A$ be the identity relation over $A$, i.e., $(a \raaa{\id_A} b) \iff 
(a \in A) \wedge (a = b)$. For $n \in \mathbb{N}_0$, we let $\rela^n$ be the 
n-ary composition $\rela; \rela ... ; \rela$, with $\rela^0 = \id_A$. We let 
$\rela^+ = \bigcup_{n \geq 1} \rela^n$ and $\rela^* = \bigcup_{n \geq 0} 
\rela^n$. For some subset $A' \subseteq A$, we define the restricted relation 
$\rela|_{A'} = \rela \cap (A' \times A')$. In Figure~\ref{fig:relations} 
we summarize various properties of relations. 

We define by $\words(A)$ the set of all sequences (words) containing only 
elements from the set $A$. When not ambiguous we use $A^*$ to denote 
$\words(A)$ (i.e. when $A$ is not a binary relation).

Let $\rank$ be a function that encounts elements of a set $A$ that are in 
relation $\rela$ to element $a \in A$: $\rank(A, \rela, a) = |\{x \in A : 
x \raaa{\rela} a\}|$. Thus, $\rank(A, \rela, a) = |\rela^{-1}(a) \cap A|$.

We also define two operators $\sort$ and $\foldr$. $A.\sort(\rela) \in 
A^*$ arranges in an ascending order the elements of set $A$ according to the 
total order $\rela$. $\foldr(a_0, f, w) \in A$ reduces sequence $w \in B^*$ 
by one element at a time using the function $f : A \times B \to A$ and 
accumulator $a_0 \in A$: 
\begin{myeq}
\foldr(a_0, f, w) &= 
\left\{
\begin{array}{@{}ll@{}}
     a_0 & \text{if}\ w = \epsilon \\
     f(\foldr(a_0, f, w'), b) & \text{if}\ w = w'b
\end{array}\right. \\
\end{myeq}

\renewcommand\figurescale{0.85}
\begin{figure*}[t]
\begin{tabular}{cc}
\scalebox{\figurescale}{
\hspace{1.0cm}
\begin{tabular}{l | l | l}
Property & Element-wise Definition & Algebraic Definition\\
         & $\forall x,y,z \in A :$ &\\
\hline
$\symmetric$ & $x \raaa{\rela} y \Rightarrow y \raaa{\rela} x$ & $\rela = 
\rela^{-1}$\\
$\reflexive$ & $x \raaa{\rela} x$ & $\id_A \subseteq \rela$\\
$\irreflexive$ & $x \centernot{\raaa{\rela}} x$ & $\id_A \cap \rela = 
\emptyset$\\
$\transitive$ & $(x \raaa{\rela} y \raaa{\rela} z) \Rightarrow (x \raaa{\rela} 
z)$ &
$(\rela; \rela) \subseteq \rela$\\
$\acyclic$ & $\neg(x \raaa{\rela} ... \raaa{\rela} x)$ & $\id_A \cap \rela^+ = 
\emptyset$\\
$\total$ & $ x \neq y \Rightarrow (x \raaa{\rela} y \vee y \raaa{\rela} x)$ & 
$\rela \cup \rela^{-1} \cup \id_A = A \times A$
\end{tabular}
}
&
\hspace{-0.2cm}
\raisebox{-0.1cm}{
\scalebox{\figurescale}{
\renewcommand{\arraystretch}{1.3}
\begin{tabular}{l | l}
Property & Definition\\
\hline
$\natural$ & $\forall x \in A : |\rela^{-1}(x)| < \infty$\\
$\partialorder$ & $\irreflexive \wedge \transitive$\\
$\totalorder$ & $\partialorder \wedge \total$\\
$\enumeration$ & $\totalorder \wedge \natural$\\
$\equivalencerelation$ & $\reflexive \wedge \transitive$ \\[-5pt]
& $\wedge \symmetric$
\end{tabular}}
}
\end{tabular}
\caption{Definitions of common properties of a binary 
relation $\rela \subseteq A \times A$.}
\label{fig:relations}
\end{figure*}

\smallskip

\noindent \textbf{Event graphs:} To reason about executions of a distributed
system we encode the information about events that occur in the system and 
about various dependencies between them in the form of an \emph{event graph}.
An \emph{event graph} $G$ is a tuple $(E, d_1, ...., d_n)$, 
where $E \subseteq \Events$ is a finite or countably infinite set of events
drawn from universe $\Events$, 
$n \geq 1$, and each $d_i$ is an attribute or a relation over $E$. 
\emph{Vertices} in $G$ represent events that occurred 
at some point during the execution and are interpreted as opaque identifiers. 
\emph{Attributes} label vertices with information pertinent to the 
corresponding event, e.g., operation performed, or the value returned. All 
possible operations of all considered data types form the $\Operations$ set. 
All possible return values of all operations form the $\Values$ set. 
\emph{Relations} represent orderings or groupings of events, and thus can be 
understood as \emph{arcs} or \emph{edges} of the graph.

Event graphs are meant to carry information that is independent of the actual
elements of $\Events$ chosen to represent the events (the attributes and 
relations in $G$ encode all relevant information regarding the execution). Let 
$G = (E, d_1, ...., d_n)$ and $G' = (E', d'_1, ...., d'_n)$ be two event 
graphs. $G$ and $G'$ are \emph{isomorphic}, written $G \simeq G'$, if (1) for 
all $i \ge 1$, $d_i$ and $d'_i$ are of the same kind (attribute vs. relation) 
and (2) there exists a bijection $\phi : E \rightarrow E'$ such that for all 
$d_i$, where $d_i$ is an attribute, and all $x \in E$, we have $d_i(x) = 
d'_i(\phi(x))$, and such that for all $d_i$ where $d_i$ is a relation, and all 
$x,y \in E$, we have $x \raaa{d_i} y \iff \phi(x) \raaa{d'_i} \phi(y)$.

\subsection{Histories}

We represent a high-level view of a system execution as a \emph{history}.
We omit implementation details such as message exchanges or internal steps 
executed by the replicas. We include only the observable behaviour of the 
system, as perceived by the clients through received responses. Formally, we 
define a \emph{history} as an event graph $H = (E, \op, \rval, \rb, \sss, 
\lvl)$, where:
\begin{itemize}
\item $\op : E \rightarrow \Operations$, specifies the operation invoked in a 
particular event, e.g., $\op(e) = \wwr(3)$, 
\item $\rval : E \rightarrow \Values \cup \{\nabla\}$, specifies the value 
returned by the operation, e.g., $\rval(e) = 3$, or $\rval(e') = \nabla$, if 
the operation never returns ($e'$ is \emph{pending} in $H$), 
\item $\rb$, the \emph{returns-before} relation, is a natural partial order 
over $E$, which specifies the ordering of \emph{non-overlapping} operations 
(one operation returns before the other starts, in real-time), 
\item $\sss$, the \emph{same session} relation, is an equivalence relation 
which groups events executed within the same session (the same client), and 
finally
\item $\lvl : E \rightarrow \{ \weak, \strong \}$, specifies the consistency
level demanded for the operation invoked in the event.
\end{itemize}

We consider only \emph{well-formed} histories, for which the following holds:
\begin{itemize}
\item $\forall a,b \in E : (a \rarb b \Rightarrow \rval(a) \neq \nabla)$ (a 
pending operation does not return),
\item $\forall a,b,c,d \in E : (a \rarb b \wedge c \rarb d) \Rightarrow (a 
\rarb d \vee c \rarb b)$ ($\rb$ is an \emph{interval order} \cite{G76}),
\item for each event $e \in E$ and its session $S = \{e' \in E : e \raaa{\sss} 
e' \}$, the restriction $\rb|_S$ is an enumeration (clients issue operations 
sequentially).
\end{itemize}

\subsection{Abstract executions}

In order to \emph{explain} the history, i.e., the observed return values, and 
reason about the system properties, we need to extend the history with 
information about the abstract relationships between events. For strongly 
consistent systems typically we do so by finding a \emph{serialization} 
\cite{L79} (an enumeration of all events) that satisfies certain criteria.
For weaker consistency models, such as \emph{eventual consistency} or 
\emph{causal consistency}, it is more natural to reason about partial ordering 
of events. Hence, we resort to \emph{abstract executions}.

An \emph{abstract execution} is an event graph $A = (E, \op, \rval, \rb, \sss, 
\lvl, \vis, \ar, \perc)$, such that:
\begin{itemize}
\item $(E, \op, \rval, \rb, \sss, \lvl)$ is some history $H$, 
\item $\vis$ is an acyclic and natural relation, 
\item $\ar$ is a total order relation, and
\item $\perc : E \rightarrow 2^{E \times E}$ is a function which returns
a binary relation in $E$.
\end{itemize}
For brevity, we often use a shorter notation $A = (H, \vis, \ar, \perc)$ and 
let 
$\HH(A) = H$. Just as serializations are used to explain and justify 
operations' return values reported in a history, so are the \emph{visibility} 
($\vis$) and \emph{arbitration} ($\ar$) relations. \emph{Perceived arbitration} 
($\perc$) is a function which is necessary to formalize temporary operation 
reordering.

\emph{Visibility} ($\vis$) describes the relative influence of operation 
executions in a history on each others' return values: if $a$ is visible to $b$ 
(denoted $a \ravis b$), then the effect of $a$ is visible to the replica 
performing $b$ (and thus reflected in the $b$'s return value). Visibility often 
mirrors how updates propagate through the system, but it is not tied to the 
low-level phenomena, such as message delivery. It is an acyclic and natural 
relation, which may or may not be transitive.
Two events are \emph{concurrent} if they are not ordered by visibility.

\emph{Arbitration} ($\ar$) is an additional ordering of events which is 
necessary in case of non-commutative operations. It describes how the effects 
of these operations should be applied. If $a$ is arbitrated before $b$ (denoted 
$a \raar b$), then $a$ is considered to have been executed earlier than $b$. 
Arbitration is essential for resolving conflicts between concurrent events, but 
it is defined as a total-order over \emph{all} operation executions in a 
history. It usually matches whatever conflict resolution scheme is used in an 
actual system, be it physical time-based timestamps, or logical clocks.

\emph{Perceived arbitration} ($\perc$) describes the relative order of 
operation executions, as perceived by each operation ($\perc(e)$ defines the 
total order of all operations, as perceived by event $e$). If $\forall e \in E 
: \perc(e) = \ar$, then there is no temporary operation reordering in $A$. 

\subsection{Correctness predicates}

A \emph{consistency guarantee} $\PP(A)$ is a set of conditions on an abstract 
execution $A$, which depend on the particulars of $A$ up to isomorphism. For 
brevity we usually omit the argument $A$. We write $A \models \PP$ if $A$ 
\emph{satisfies} $\PP$. More precisely: $A \models \PP \iffdef \forall A' : A' 
\simeq A : \PP(A')$. A history $H$ is \emph{correct} according to some 
consistency guarantee $\PP$ (written $H \models \PP$) if it can be extended 
with some $\vis$ and $\ar$ relations to an abstract execution $A = (H, 
\vis, \ar, \perc)$ that satisfies $\PP$. 
We say that a system is correct according to some consistency guarantee $\PP$ 
if all of its histories satisfy $\PP$.

We say that a consistency guarantee $\PP_i$ is at least as strong as a
consistency guarantee $\PP_j$, denoted $\PP_i \geq \PP_j$, if $\forall H : H 
\models \PP_i \Rightarrow H \models \PP_j$. If $\PP_i \geq \PP_j$ and $\PP_j 
\not\geq \PP_i$ then $\PP_i$ is stronger than $\PP_j$, denoted $\PP_i > \PP_j$. 
If $\PP_i \not\geq \PP_j$ and $\PP_j \not\geq \PP_i$, then $\PP_i$ and $\PP_j$ 
are incomparable, denoted $\PP_i \lessgtr \PP_j$.

\subsection{Replicated data type} \label{sec:framework:rdt}

\begin{figure*}
\scalebox{0.8}{
\begin{minipage}{0.8\textwidth}
\begin{myeq}
\footnotesize
  \FFmvr(\wwr(v), (E, \op, \vis, \ar)) &= \ook\\
  \FFmvr(\rrd, (E, \op, \vis, \ar)) &=   \{v : \exists e \in E : \op(e) =
\wwr(v) \wedge \nexists e' \in E : \op(e') = \wwr(v') \wedge e \ravis e'\} \\
\\
  \FFseq(\app(s), (E, \op, \vis, \ar)) &= \ok\\
  \FFseq(\rd, (E, \op, \vis, \ar)) &= 
  \left\{
  \begin{array}{@{}ll@{}}
    \epsilon, & \text{if}\ \mathsf{appends}(E) = \emptyset \\
    s_1 \cdot s_2 \cdot \ldots \cdot s_n & \text{otherwise}, \text{where}\ 
    \forall{i \leq n}: 
    \exists{e \in E}: \rank(\mathsf{appends}(E), \ar, e) = i \\
    & \hspace{5.2cm} \wedge\ \op(e) = \app(s_i) \wedge s_i \in \$^*
  \end{array}\right. \\
  \text{where}\ \$ = \{a, b, c, \ldots, z\}\ & \text{and} 
  \ \mathsf{appends}(E) = \{e \in E : \exists v \in \$^* : \op(e) = \app(v)\} \\
  \\
  \FFnnc(\mathit{add}(v), (E, \op, \vis, \ar)) &= \ok\\
  \FFnnc(\mathit{subtract}(v), (E, \op, \vis, \ar)) &= 
  \left\{
  \begin{array}{@{}ll@{}}
    \true & \text{if}\ \foldr(0, \fnnc, E.sort(\ar)) \geq v \\
    \false & \text{otherwise} 
  \end{array}\right. \\
  \FFnnc(\mathit{get}, (E, \op, \vis, \ar)) &= \foldr(0, \fnnc, E.\sort(\ar))\\
  \text{where}\ v \in \mathbb{N}\ \text{and}\ \fnnc &= 
  \left\{
  \begin{array}{@{}ll@{}}
    \fnnc(x, \mathit{add}(v)) &= x + v  \\
    \fnnc(x, \mathit{subtract}(v)) &= x - v\ \text{if}\ x \geq v\ \text{or}\ 
    x\ \text{otherwise} \\
    \fnnc(x, \mathit{get}) &= x \\
  \end{array}\right. \\
\end{myeq}
\end{minipage}
}
\caption{Formal specifications of a multi-value register data type $\FFmvr$,  
an append-only sequence data type $\FFseq$, and a non-negative counter $\FFnnc$. 
An instance of $\FFmvr$ stores multiple values when there are concurrent 
$\wwr()$ operations ($\wwr()$ operations not ordered by the $\vis$ relation). 
$\FFseq$ can be used to create a sequence of characters (a word), where the set 
of characters is limited to $a$ through $z$. $\FFseq$ features two operations: 
$\app(x)$, which appends $x$ to the end of the sequence and returns $\ok \in 
\Values$, and $\rd()$, which returns a sequence (a word) $w \in \$^*$. $\FFnnc$ 
stores an integer value, that can be increased using the $\mathit{add}$ 
operation or decreased using the $\mathit{subtract}$ operation, but only if 
the value of the counter will not decrease below 0. The $\mathit{get}$ 
operation simply returns the current value of the counter. See the definition 
of operators $\sort$ and $\foldr$ in Section~\ref{sec:framework:preliminaries}.}
\label{fig:ffseq}
\end{figure*}

In order to specify semantics of operations invoked by the clients on the 
replicas, we model the whole system as a single replicated object (as in case 
of Algorithm~\ref{alg:actnnc}). Even though we use only a single 
object, this approach is general, as multiple objects can be viewed as a single 
instance of a more complicated type, e.g. multiple registers constitute a 
single \emph{key-value store}. Defining the semantics of the replicated object 
through a sequential specification \cite{HW90} is not sufficient for 
replicated objects which expose concurrency to the client, e.g. multi-value 
register (MVR) \cite{SPBZ11} or observed-remove set (OR-Set) \cite{SPBZ11a}. 
Hence, we utilize \emph{replicated data types} specification \cite{BGYZ14}. 

In this approach, the state on which an operation $\opn \in \Operations$ 
executes, called the \emph{operation context}, is formalized by the event graph 
of the prior operations visible to $\opn$. Formally, for any event $e$ in 
an abstract execution $A = (E, \op, \rval, \rb, \sss, \lvl, \vis, \ar, \perc)$, 
the operation context of $e$ in $A$ is the event graph $\context(A,e) \eqdef 
(\vis^{-1}(e), \op, \vis, \ar)$. Note that an operation context lacks return 
values, the returns-before relation, \iftoggle{proofs}{as well as}{and} the 
information about sessions. The set of previously invoked 
operations\iftoggle{proofs}{, coupled with}{ and} their relative visibility and 
arbitration unambiguously \iftoggle{proofs}{should define}{defines} the output 
of \iftoggle{proofs}{the operation invoked in the considered event}{each 
operation}. This brings us to the formal definition of a replicated data type.

A \emph{replicated data type} $\FF$ is a function that, for each operation 
$\opn \in \ops(\FF)$ (where $\ops(\FF) \subseteq \Operations$) and operation 
context $C$, defines the expected return value $v = \FF(\opn, C) \in \Values$, 
such that $v$ does not depend on events, i.e., is the same for isomorphic 
contexts: $C \simeq C' \Rightarrow \FF(\opn, C) = \FF(\opn, C')$ for all 
$\opn$, $C$, $C'$. We say that $\opn \in \ops(\FF)$ is a read-only operation 
(denoted $\opn \in \readonlyops(\FF)$), if and only if, for any operation 
$\opn'$, context $C = (E, \op, \vis, \ar)$ and event $e \in E$, such that 
$\op(e) = \opn$, $\FF(\opn', C) = \FF(\opn', C')$, where $C' = (E \setminus 
\{e\}, \op, \vis, \ar)$. In other words, read-only operations can be 
excluded from any context $C$, producing $C'$, and the result of any operation 
$\opn'$ will not change. 

In Figure~\ref{fig:ffseq} we give the specification of three replicated data 
types: $\FFmvr$ (a multi-value register), $\FFseq$ (an append-only sequence), 
and $\FFnnc$ (a non-negative counter). We use $\FFseq$ in the subsequent 
sections to illustrate various consistency models. 


\subsection{ACT specification} \label{sec:framework:act_spec}

To accommodate for the mixed-consistency nature of ACTs we 
extend 
replicated data type specification with the information on supported 
consistency levels for a given operation. Thus, we define \emph{ACT 
specification} as a pair $(\FF, \lvlmap)$, where $\FF$ is a replicated data 
type specification and $\lvlmap : \Operations \rightarrow 2^{\{\weak, 
\strong\}}$ is a function which specifies for each $\opn \in \Operations$ with 
which consistency levels it can be executed. We assume that clients follow this 
contract, and thus, when considering a history $H = (E, \op, \rval, \rb, \sss, 
\lvl)$ of an ACT compliant with the specification $(\FF, \lvlmap)$, we assume 
that for each $e \in E : \lvl(e) \in \lvlmap(\op(e))$.

Then, \ACTnnc's specification is 
is 
$(\FFnnc, \lvlmap_\mathit{NNC})$, where 
$\lvlmap_\mathit{NNC}(\mathit{add}) = \lvlmap_\mathit{NNC}(\mathit{get}) = 
\{\weak\}$ and $\lvlmap_\mathit{NNC}(\mathit{subtract}) = \{\strong\}$.


\section{Correctness guarantees} \label{sec:guarantees}

In this section we define various correctness guarantees for ACTs. We define 
them as conjunctions of several basic predicates. We start with two simple 
requirements that should naturally be present in any eventually consistent 
system. For the discussion below we assume some arbitrary abstract execution
$A = (E, \op, \rval, \rb, \sss, \lvl, \vis, \ar, \perc)$.

\subsection{Key requirements for eventual consistency} \label{sec:key_reqs}

The first requirement is the \emph{eventual visibility} ($\EV$) of events. 
$\EV$ requires that for any event $e$ in $A$, there is only a 
finite number of events in $E$ that do not observe $e$. Formally: 
$$\EV \eqdef \forall e \in E : |\{ e' \in E : e \rarb e' \wedge e \not\ravis e' 
\}| < \infty$$
Intuitively, $\EV$ implies progress in the system because replicas must 
synchronize and exchange knowledge about operations submitted to the system.

The second requirement concerns avoiding circular causality, as discussed in 
Section~\ref{sec:bayou:anomalies}. To this end we define two auxiliary 
relations: \emph{session order} and \emph{happens-before}. The session order 
relation $\so \eqdef \rb \cap \sss$ represents the order of operations in each 
session. The happens-before relation $\hb \eqdef (\so \cup \vis)^+$ (a 
transitive closure of session order and visibility) allows us to express the 
causal dependency between events. Intuitively, if $e \raaa{\hb} e'$, then $e'$ 
potentially depends on $e$. We simply require \emph{no circular causality}:
$$\NCC \eqdef \acyclic(\hb)$$

In the following sections we add requirements on the return values of the 
operations in $A$. Formalizing the properties of ACTs which, similarly to 
\ACTBayou, admit temporary operation reordering, requires a new approach. We 
start, however with the traditional one.

\subsection{Basic Eventual Consistency} \label{sec:guarantees:bec}

\begin{figure*}
\scalebox{0.8}{
\hspace{-0.5cm}
\begin{minipage}{1.0\textwidth}
\begin{tabular}{c p{3.95in} c p{3.95in}}
\Abec & \raisebox{-.5\height}{\begin{tikzpicture}[font=\sf\footnotesize] 

\tikzstyle{time line} = [->, thick];
\tikzstyle{dot} = [circle, fill, inner sep=1pt];

\node (r1) at (0,1) [left] {$R_1$};
\node (r2) at (0,0.5) [left] {$R_2$};

\newcommand{\dist}{1.2}

\draw [time line] (r2) -- (7.8 * \dist, 0.5);
\draw [time line] (r1) -- (7.8 * \dist, 1);

\abseg{1.5 * \dist, 1.0}{\odashed{$\app$}$(a) \rightarrow \ok$};
\abseg{3.5 * \dist, 1.0}{\udashedsolid{$\rd() \rightarrow ab$}};
\abseg{5.5 * \dist, 1.0}{\udashedsolid{$\rd() \rightarrow ab$}};

\belseg{0.5 * \dist, 0.5}{\osolid{$\app$}$(b) \rightarrow \ok$};
\belseg{2.5 * \dist, 0.5}{\udashed{$\rd() \rightarrow a$}};
\belseg{4.5 * \dist, 0.5}{\usolid{$\rd() \rightarrow b$}};
\belseg{6.5 * \dist, 0.5}{\udashedsolid{$\rd() \rightarrow ab$}};

\node (dots1) at (6.5 * \dist, 1.3) {$...$};
\node (dots2) at (7.5 * \dist, 0.1) {$...$};

\end{tikzpicture}} &
\Afec & \raisebox{-.5\height}{\begin{tikzpicture}[font=\sf\footnotesize] 

\tikzstyle{time line} = [->, thick];
\tikzstyle{dot} = [circle, fill, inner sep=1pt];

\node (r1) at (0,1) [left] {$R_1$};
\node (r2) at (0,0.5) [left] {$R_2$};

\newcommand{\dist}{1.2}

\draw [time line] (r2) -- (7.8 * \dist, 0.5);
\draw [time line] (r1) -- (7.8 * \dist, 1);

\abseg{1.5 * \dist, 1.0}{\odashed{$\app$}$(a) \rightarrow \ok$};
\abseg{3.5 * \dist, 1.0}{\usolid{$\rd() \rightarrow b$}};
\abseg{5.5 * \dist, 1.0}{\udashedsolid{$\rd() \rightarrow ab$}};

\belseg{0.5 * \dist, 0.5}{\osolid{$\app$}$(b) \rightarrow \ok$};
\belseg{2.5 * \dist, 0.5}{\udashedsolid{$\rd() \rightarrow ab$}};
\belseg{4.5 * \dist, 0.5}{\udashedsolid{$\rd() \rightarrow ba$}};
\belseg{6.5 * \dist, 0.5}{\udashedsolid{$\rd() \rightarrow ab$}};

\node (dots1) at (6.5 * \dist, 1.3) {$...$};
\node (dots2) at (7.5 * \dist, 0.1) {$...$};

\end{tikzpicture}}\\
\Asc & \raisebox{-.5\height}{\begin{tikzpicture}[font=\sf\footnotesize] 

\tikzstyle{time line} = [->, thick];
\tikzstyle{dot} = [circle, fill, inner sep=1pt];

\node (r1) at (0,1) [left] {$R_1$};
\node (r2) at (0,0.5) [left] {$R_2$};

\newcommand{\dist}{1.2}

\draw [time line] (r2) -- (7.8 * \dist, 0.5);
\draw [time line] (r1) -- (7.8 * \dist, 1);

\abseg{1.5 * \dist, 1.0}{\odashed{$\app$}$(a) \rightarrow \ok$};
\abseg{3.5 * \dist, 1.0}{\udashed{$\rd() \rightarrow a$}};
\abseg{5.5 * \dist, 1.0}{\udashedsolid{$\rd() \rightarrow ab$}};

\belseg{0.5 * \dist, 0.5}{\osolid{$\app$}$(b) \rightarrow \ok$};
\belseg{2.5 * \dist, 0.5}{\udashedsolid{$\rd() \rightarrow ab$}};
\belseg{4.5 * \dist, 0.5}{\udashedsolid{$\rd() \rightarrow ab$}};
\belseg{6.5 * \dist, 0.5}{\udashedsolid{$\rd() \rightarrow ab$}};

\node (dots1) at (6.5 * \dist, 1.3) {$...$};
\node (dots2) at (7.5 * \dist, 0.1) {$...$};

\end{tikzpicture}} & 
\Alin & \raisebox{-.5\height}{\begin{tikzpicture}[font=\sf\footnotesize] 

\tikzstyle{time line} = [->, thick];
\tikzstyle{dot} = [circle, fill, inner sep=1pt];

\node (r1) at (0,1) [left] {$R_1$};
\node (r2) at (0,0.5) [left] {$R_2$};

\newcommand{\dist}{1.2}

\draw [time line] (r2) -- (7.8 * \dist, 0.5);
\draw [time line] (r1) -- (7.8 * \dist, 1);

\abseg{1.5 * \dist, 1.0}{\odashed{$\app$}$(a) \rightarrow \ok$};
\abseg{3.5 * \dist, 1.0}{\udashed{$\rd() \rightarrow a$}};
\abseg{5.5 * \dist, 1.0}{\udashedsolid{$\rd() \rightarrow ab$}};

\belseg{4.5 * \dist, 0.5}{\osolid{$\app$}$(b) \rightarrow \ok$};
\belseg{6.5 * \dist, 0.5}{\udashedsolid{$\rd() \rightarrow ab$}};

\node (dots1) at (6.5 * \dist, 1.3) {$...$};
\node (dots2) at (7.5 * \dist, 0.1) {$...$};

\end{tikzpicture}}\\
\end{tabular}
\end{minipage}
}
\caption{Example abstract executions of systems with a list semantics that 
satisfy $\BEC(\FFseq)$, $\FEC(\FFseq)$, $\SC(\FFseq)$, and $\LIN(\FFseq)$
respectively (for brevity, we omit the level parameter $l$ and assume 
that all operations belong to the same class $l$). We use solid and dashed 
underlines to depict which updating operations are visible (through relation 
$\vis$) in $A$ to the $\rd()$ operations (we assume that every $\rd()$ 
operation observes all other $\rd()$ operations that happened prior to it). In 
the arbitration order, $\app(a)$ precedes $\app(b)$, and both updates are 
followed by all the reads in the order they appear on the timeline.}
\label{fig:aex}
\end{figure*}

Intuitively, \emph{basic eventual consistency} ($\BEC$) \cite{BGY13} \cite{B14},
in addition to $\EV$ and $\NCC$, requires that the return value of each 
invoked operation can be explained using the specification of the 
replicated data type $\FF$, what is formalized as follows: 
$$\RVAL(\FF) \eqdef \forall e \in E : \rval(e) = \FF(\op(e), \context(A,e))$$ 
Then: 
$$\BEC(\FF) \eqdef \EV \wedge \NCC \wedge \RVAL(\FF)$$

An example abstract execution \Abec that satisfies $\BEC(\FFseq)$ is shown in 
Figure~\ref{fig:aex}. In \Abec, firstly replicas $R_1$ and $R_2$ concurrently 
execute two $\app()$ operations, and then each replica executes an 
infinite number of $\rd()$ operations. Consider the $\rd()$ operations on 
$R_2$: the first one observes only $\app(a)$ 
(which is in the operation 
context of $\rd()$), whereas the second observes only $\app(b)$. \BEC admits
this kind of execution, because it does not make any requirements in terms of 
session guarantees \cite{TDPS+94}. Eventually, both $\app(a)$ and $\app(b)$ 
become visible to all subsequent $\rd()$ operations, thus satisfying \EV. 

By the definition of the $\context$ function (Section~\ref{sec:framework:rdt}), 
when $A$ satisfies $\RVAL(\FF)$, the return value of each operation is 
calculated according to the $\ar$ relation. It is then easy to see that there 
are executions of \ACTBayou (or other ACTs that admit temporary operation 
reordering) that do not satisfy $\RVAL(\FF)$. It is because weak operations (as 
shown in Section~\ref{sec:bayou:anomalies}), might observe past operations in 
an order that differs from the final operation execution order ($\ar$). Hence 
\ACTBayou does not satisfy $\BEC(\FF)$ for arbitrary $\FF$. It still, though, 
could satisfy $\BEC(\FF)$ for a sufficiently simple $\FF$, such as a 
conflict-free counter, in which all operations always commute (as opposed to 
$\FFnnc$). It is so, because then, even if \ACTBayou reorders some operations 
internally, the final result never changes and thus the reordering cannot be 
observed by the clients.

\subsection{Fluctuating Eventual Consistency} \label{sec:guarantees:fec}

In order to admit temporary operation reordering, we give a slightly different
definition of the $\context$ function, in which the arbitration order 
\emph{fluctuates}, i.e., it changes from one event to another. Let 
$\fcontext(A,e) \eqdef (\vis^{-1}(e),\op, \vis,\perc(e))$, which means that now 
we consider the operation execution order as perceived by $e$, and not the 
final one. The definition of the fluctuating variant of $\RVAL$ 
is straightforward:
$$\FRVAL(\FF) \eqdef \forall e \in E : \rval(e) = \FF(\op(e), \fcontext(A,e))$$
To define the fluctuating variant of $\BEC$, that could be used to formalize 
the guarantees provided by ACTs we additionally must ensure, that the 
arbitration order perceived by events is not completely unrestricted, but that 
it gradually \emph{converges} to $\ar$ for each subsequent event. It means that 
each $e \in E$ can be temporarily observed by the subsequent events $e'$ 
according to an order that differs from $\ar$ (but is consistent with 
$\perc(e')$). However, from some moment on, every event $e'$ will observe $e$ 
according to $\ar$. To define this requirement, we use the $\rank$ function 
(defined in Section~\ref{sec:framework:preliminaries}). Let $E_e = \{e' \in E: 
e \ravis e'\}$. This intuition is formalized by \emph{convergent perceived 
arbitration}:
\begin{multline*}
\CPERC \eqdef \forall e \in E : |\{e' \in E_e : 
\rank(\vis^{-1}(e'), \perc(e'), e) \\ \neq \rank(\vis^{-1}(e'), \ar, e)\}| 
< \infty
\end{multline*}
If $A$ satisfies $\CPERC$, then for each event $e$, the set of events $e'$, 
which observe the position of $e$ not according to $\ar$ is finite. Thus, the 
position of $e$ in $\perc(e')$ for subsequent events $e'$ stabilizes, and 
$\perc(e')$ eventually converges to $\ar$.

Now we can define our new consistency criterion \emph{fluctuating eventual 
consistency} ($\FEC$):
$$\FEC(\FF) \eqdef \EV \wedge \NCC \wedge \FRVAL(\FF) \wedge \CPERC$$

An example abstract execution \Afec that satisfies \FEC is shown in 
Figure~\ref{fig:aex}. In \Afec, replica $R_2$ temporarily observes the $\app()$ 
operations in the order $\app(b), \app(a)$ which is different then the 
eventual operation execution order (as evidenced by the infinite number of 
$\rd() \rightarrow ab$ operations). We call this behaviour \emph{fluctuation}.

It is easy to see that $\FEC(\FF) < \BEC(\FF)$, in the sense that: for each 
$\FF$, $\FEC(\FF) \leq \BEC(\FF)$, and for some $\FF$, $\FEC(\FF) < \BEC(\FF)$.
It is so, because \FEC uses $\perc$ instead of $\ar$ to calculate the return 
values of operation executions, but $\perc$ eventually converges to $\ar$. 
Hence, $\BEC(\FF)$ is a special case of $\FEC(\FF)$, when $\forall e \in E : 
\perc(e) = \ar$. It is easy to see that \Abec from Figure~\ref{fig:aex} 
satisfies both \BEC and \FEC, whereas \Afec satisfies only \FEC.


\subsection{Operation levels}

The above definitions can be used to capture the guarantees provided by a 
wide variety of eventually consistent systems. 
However, our framework still lacks the capability to express the semantics of 
mixed-consistency systems. ACTs offer different guarantees for different 
classes of operations (e.g., consistency guarantees stronger than \BEC or \FEC 
are provided only for strong operations in \ACTBayou or \ACTnnc). Hence, we 
need to parametrize the consistency criteria with a level attribute (as 
indicated by the $\lvl$ function for each event). Since, consistency level is 
specified per operation invocation, we need to assure that the respective 
operations' responses reflect the demanded consistency level.

Let us revisit \BEC first. Let $L = \{ e \in E : \lvl(e) = l\}$ for a given 
$l$. Then: 
\begin{myeq}
\EV(l) \eqdef&\ \forall e \in E : |\{ e' \in L : e \rarb e' \wedge e \not\ravis 
e' \}| < \infty\\
\NCC(l) \eqdef&\ \acyclic(\hb \cap (L \times L))\\
\RVAL(l, \FF) \eqdef&\ \forall e \in L : \rval(e) = \FF(\op(e), 
\context(A,e))\\
\BEC(l, \FF) \eqdef&\ \EV(l) \wedge \NCC(l) \wedge \RVAL(l, \FF)
\end{myeq}
The above parametrized definition of \BEC restricts the $\RVAL$ predicate only 
to events issued with the given consistency level $l$ (the events that belong 
to the set $L$). It means that for any such event the response has to conform 
with the replicated data type specification $\FF$, and with the $\vis$ and 
$\ar$ relations (as defined by the definition of the $\context$ function). For 
all other events this requirement does not need to be satisfied, so they 
can return arbitrary responses (unless restricted by other predicates targeted 
for these events). Similarly, for $\EV$ and $\NCC$, the predicates are 
restricted to affect only the events from the set $L$. In case 
of $\EV$, each event eventually becomes visible to the operations executed with 
the level $l$. In case of $\NCC$, there must be no cycles in $\hb$ involving 
events from the set $L$.

The parametrized variant of \FEC is formulated analogously. Let $L$ be as 
defined before, and for any event $e \in E$, let $L_e = \{e' \in L: e \ravis 
e'\}$ be the subset of events from $L$ which observe $e$. Then:
\begin{myeq}
\FRVAL(l, \FF) \eqdef&\ \forall e \in L : \rval(e) = \FF(\op(e), 
\fcontext(A,e))\\
\CPERC(l) \eqdef&\ \forall e \in E : |\{e' \in L_e: \rank(\vis^{-1}(e'), 
\perc(e'), e) \\ 
&\neq \rank(\vis^{-1}(e'), \ar, e)\}| < \infty \\
\FEC(l, \FF) \eqdef&\ \EV(l) \wedge \NCC(l) \wedge \FRVAL(l, \FF) \wedge 
\CPERC(l)
\end{myeq}
As before, we restrict the return values only for the events from the set $L$. 
Additionally, we restrict the predicate $\CPERC$, so that $\perc(e)$ converges 
towards $\ar$ only for events $e \in L$. Other events can differently perceive 
the arbitration of events (in principle, the observed arbitration can be 
completely different from the final one, specified by $\ar$).

We can compare the parametrized variants of \BEC and \FEC as before: 
$\FEC(l, \FF) < \BEC(l, \FF)$. 

All of the strong consistency criteria which we are going to discuss next, we 
define already in the parametrized form with the given level $l$ in mind, so 
they can be used for, e.g., strong operations in \ACTBayou and \ACTnnc.

\subsection{Strong consistency} \label{sec:guarantees:sc}

A common feature of strong consistency criteria, such as \emph{sequential 
consistency} \cite{L79}, or \emph{linearizability} \cite{HW90}, is a single 
global serialization of all operations. It means that a history satisfies these 
criteria, if it is equivalent to some serial execution (\emph{serialization}) 
of all the operations. 
Additionally, depending on the particular criterion, the serialization must, 
e.g., respect program-order, or real-time order of operation executions. 
Although we provide a serialization of all operations (through the total order 
relation $\ar$, which is part of every abstract execution), the equivalence of 
a history to the serialization is not enforced in the correctness criteria we 
have defined so far. For example, given a sequence of three events $\langle a, 
b, c \rangle$, such that $a \raar b \raar c$, the response of $c$ according to 
\BEC, does not need to reflect neither $a$, nor $b$, as they may simply be not 
visible to $c$ ($a \not\ravis c \vee b \not\ravis c$). Thus, to guarantee 
conformance to a single global serialization, we must enforce that for any two 
events $e_1, e_2 \in E$, $e_1 \raar e_2 \iff e_1 \ravis e_2$ (unless $e_1$ is 
pending, since a pending operation might be arbitrated before a completed one, 
yet still be not visible). We express this through the following predicate, 
\emph{single order}:
\begin{myeq}
\SO \eqdef&\ \exists E' \subseteq \rval^{-1}(\nabla) : 
\vis = \ar \setminus (E' \times E)\\
\SO(l) \eqdef&\ \exists E' \subseteq \rval^{-1}(\nabla) : 
\vis_L = \ar_L \setminus (E' \times E)\\
\text{where}\ \vis_L =&\ \vis \cap (E \times L)\ \text{and} \
\ar_L = \ar \cap (E \times L)
\end{myeq}
In the parametrized form, the conformance to the serialization is required 
only for the events from the set $L$ (but the serialization includes all the 
events). 

In order to capture the eventual \emph{stabilization} of the operation 
execution order, which happens in \ACTBayou and in ACTs similar to it, we now 
define two additional correctness criteria that feature $\SO$.


\textbf{Sequential consistency.} 
Informally, \emph{sequential consistency} ($\SC$) \cite{L79} guarantees that 
the system behaves as if all operations were executed sequentially, but in an 
order that respects the \emph{program order}, i.e., the order in which 
operations were executed in each session. Hence, $\SC$ implies $\RVAL(\FF)$ 
together with $\SO$, and additionally, \emph{session arbitration} ($\SA$). 
$\SA$ simply requires that for any two events $e,e' \in E$, if $e \raso e'$, 
then $e \raar e'$. In the parametrized form we are interested only in the 
guarantees for events in the set $L$, and thus we use $\so_L = \so \cap (E 
\times L)$ instead of $\so$ (see Section~\ref{sec:key_reqs}). $\SO$ together 
with $\SA$ imply $\NCC$ and $\EV$ \cite{B14}, however this does not hold for 
the parametrized forms of these predicates. Thus, we define $\SC$ by extending 
$\BEC$ (which explicitly includes $\EV$ and $\NCC$):
\begin{myeq}
\SA(l) \eqdef&\ \so_L \subseteq \ar\\
\SC(l, \FF) \eqdef&\ \SO(l) \wedge \SA(l) \wedge \BEC(l, \FF)
\end{myeq}

An example abstract execution \Asc that satisfies \SC is shown in 
Figure~\ref{fig:aex}. According to \SC, since the $\app()$ operations are 
arbitrated $\app(a), \app(b)$ (as evidenced by any $\rd()$ operation that 
observes both $\app()$ operations), any $\rd()$ operation can either return 
$ab$ or $a$, a non-empty prefix of $ab$. 

\textbf{Linearizability.} The correctness condition of 
\emph{linearizability} ($\LIN$) \cite{HW90} is similar to $\SC$ but 
instead of program order it enforces a stronger requirement called 
\emph{real-time order}. Informally, a system that is linearizable guarantees 
that for any operation $\opn'$ that starts (in real-time) after any operation 
$\opn$ ends, $\opn'$ will observe the effects of $\opn$. 
We formalize $\LIN$ using the \emph{real-time order} ($\RT$) predicate, 
that uses the $\rb_L = \rb \cap (L \times L)$ relation in its parametrized 
form: 
\begin{myeq}
\RT(l) \eqdef&\ \rb_L \subseteq \ar \\
\LIN(l, \FF) \eqdef&\ \SO(l) \wedge \RT(l) \wedge \BEC(l, \FF)
\end{myeq}

Note that, $\SC$ and $\LIN$ are incomparable in their parametrized forms. 
While $\LIN(l, \FF)$ requires any two events to be arbitrated according to 
real-time if they both belong to $L$, $\SC(l, \FF)$ enforces real-time only 
within the same session, but only one of the events needs to belong to $L$.
By using a stronger definition of $\RT'(l)$ 
with $\rb'_L = \rb \cap (E \times L)$, 
we would force all operations to synchronize, which is incompatible with 
high availability of weak operations.

An example abstract execution \Alin that satisfies \LIN is shown in 
Figure~\ref{fig:aex}. According to \LIN, since $\app(a)$ ended before 
$\app(b)$ started, the operations must be arbitrated $\app(a), \app(b)$ (as 
evidenced by any $\rd()$ operation that observes both $\app()$ operations). If 
some $\rd()$ operation started after $\app(a)$ ended but executed concurrently 
with $\app(b)$ ($\app(b)$ would start before $\rd()$ ended), $\rd()$ could 
return either $a$ or $ab$.

\subsection{Correctness of \ACTnnc and \ACTBayou} \label{sec:guarantees:bayou}

%

Having defined \BEC, \FEC and \LIN, we show four formal results: two regarding 
\ACTnnc and two regarding \ACTBayou. The proofs of all four theorems can be 
found in Appendix~\ref{sec:proofs}.

As we have discussed in Section~\ref{sec:act:model:network}, we are interested 
in the behaviour of systems, both in fully asynchronous environment, when 
timing assumptions are consistently broken (e.g., because of 
prevalent network partitions), and in a stable one, when the minimal 
amount of synchrony is available so that consensus eventually terminates. Thus, 
we consider two kinds of runs: asynchronous and stable. 

\begin{restatable}{theorem}{actnncstable}
In stable runs \ACTnnc satisfies $\BEC(\weak, \FFnnc) \wedge \LIN(\strong, 
\FFnnc)$.
\end{restatable}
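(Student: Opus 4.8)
The plan is to exhibit, for an arbitrary infinite fair stable execution $G$ of \ACTnnc, a single abstract execution $A=(H,\vis,\ar,\perc)$ that witnesses both conjuncts; since we aim at \BEC rather than \FEC we simply put $\perc(e)=\ar$ for every $e$. I would define arbitration from the total-order broadcast: order two updating events (an $\add$ or a $\subtract$) by the $\tobNo$ of the request they $\tobcast$, and extend this to a total order on all events by inserting each read-only $\get$ at its real-time position, whose placement is irrelevant because $\get\in\readonlyops(\FFnnc)$. Visibility I would read off the concrete run: an $\add$ is visible to an event $e$ once its $\rbcast$ message has been $\rbdeliver$ed on $\rep(e)$ before $e$ (exactly the additions counted in $\wi$); an $\add$ or a $\subtract$ (successful or not) is visible to $e$ once $\tobdeliver$ed before $e$ (the additions in $\si$, the subtractions towards $\sd$); and each $\get$ I make visible to every real-time-later event, which is harmless as $\get$ is read-only. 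Because every edge of $\vis$ and of $\so$ points forward in the global real-time order of $G$, the relation $\hb=(\so\cup\vis)^+$ is real-time respecting, hence acyclic, which already yields $\NCC(\weak)$ and $\NCC(\strong)$; naturality of $\vis$ is immediate since only finitely many deliveries precede any event.

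Two lemmas carry the quantitative content. First, the algorithm maintains $\wi\ge\si$ on every replica, since each $\add$ is $\rbdeliver$ed before it is $\tobdeliver$ed (lines \ref{alg:actpc:rbretstart}--\ref{alg:actpc:tobrbinvend}), so the $\tobdeliver$ed additions are always a subset of the $\rbdeliver$ed ones. Second, by induction over the $\tobdeliver$ prefix I would show that $\foldr(0,\fnnc,w)$ equals $\si-\sd$ after processing the subword $w$ of updating operations taken in $\tobNo$ order: an $\add(v)$ raises both quantities by $v$, and $\fnnc$ subtracts $v$ for a $\subtract(v)$ precisely when the running value $\si-\sd\ge v$, which is verbatim the guard $\si\ge\sd+v$ (line \ref{alg:actpc:subcalcstart}) that decides $\res$. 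Hence the success or failure of every $\subtract$ under $\FFnnc$ coincides with its runtime decision; in particular this gives $\RVAL(\strong,\FFnnc)$, because the $\ar$-sorted context of a $\subtract$ event $s$ is exactly the updating operations with smaller $\tobNo$, so $\FFnnc(\subtract(v),\context(A,s))=\true$ iff $\si-\sd\ge v$ at $s$.

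For $\BEC(\weak,\FFnnc)$ the only nontrivial obligation is $\RVAL(\weak,\FFnnc)$ for $\get$ (additions return $\ok$ unconditionally), and this is the main obstacle. I must show $\FFnnc(\get,\context(A,g))=\foldr(0,\fnnc,\vis^{-1}(g).\sort(\ar))$ reproduces the returned $\wi-\sd$, even though a $\get$ may see via $\rb$ additions not yet $\tobdeliver$ed, so its visible set is strictly larger than the $\tobdeliver$ prefix. Here the invariant is essential: for any $\subtract$ $s$ visible to $g$ we have $\tobdeliver(s)$ before $g$, and since TOB delivers in order, every addition with $\tobNo$ below $\tobNo(s)$ is $\tobdeliver$ed, hence (RB-before-TOB) already $\rbdeliver$ed, before $g$; therefore the additions visible to $g$ and $\ar$-below $s$ are exactly those $\tobNo$-below $s$, so $s$'s guard in the replay meets the same running value $\si-\sd$ as at runtime and decides identically. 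The surplus additions are all $\ar$-above every visible $\subtract$, and additions being commutative they only shift the final tally by their total, so the replay yields (sum of visible additions) $-\,\sd=\wi-\sd$. Eventual visibility $\EV(\weak)$ then follows because each $\add$ is $\rbcast$ and each $\subtract$ is $\tobcast$, so in a fair run each reaches all correct replicas after finitely many events, leaving only finitely many weak events that miss it; read-only $\get$s are eventually visible by construction.

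Finally, $\LIN(\strong,\FFnnc)=\SO(\strong)\wedge\RT(\strong)\wedge\BEC(\strong,\FFnnc)$. The block $\BEC(\strong,\FFnnc)$ is $\EV(\strong)\wedge\NCC(\strong)\wedge\RVAL(\strong,\FFnnc)$, all already in hand: $\EV(\strong)$ holds since $\add$s and $\subtract$s are $\tobcast$ and so, in a stable run, acquire a $\tobNo$ and become visible to all but finitely many later $\subtract$s, while read-only events are visible to all later events by construction. For $\SO(\strong)$ I would note that a $\subtract$ reads only $\si$ and $\sd$, hence is influenced by exactly the updating operations $\tobdeliver$ed before it, which by definition of $\ar$ are exactly those $\ar$-below it; together with the $\get$s placed $\ar$-below it this gives $\vis_L=\ar_L$ up to still-pending requests, collected into $E'$. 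For $\RT(\strong)$, if one $\subtract$ returns before another is invoked then, by sticky clients and the fact that a response is produced only at $\tobdeliver$, the later request is $\tobcast$ after the former is $\tobdeliver$ed, cannot be ordered earlier, and so receives a larger $\tobNo$, yielding $\rb_L\subseteq\ar$. Conjoining the two blocks establishes the theorem; the delicate step throughout is the $\get$ replay of the third paragraph, where the RB-before-TOB discipline is precisely what keeps the two propagation channels mutually consistent.
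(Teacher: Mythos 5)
Your overall architecture coincides with the paper's proof: arbitration of updating events by $\tobNo$, visibility of updates read off from $\rbdeliver$/$\tobdeliver$, the invariant $\wi \ge \si$ from the RB-before-TOB discipline, the $\foldr$-equals-$\si-\sd$ correspondence for the TOB prefix, and the observation that the surplus (RB-only) additions sit $\ar$-above every visible subtract and merely shift the tally by $\wi-\si$. That part is sound. However, there is a genuine gap in your treatment of the read-only $\get$ events. You declare each $\get$ visible to \emph{every} real-time-later event and claim its placement in $\ar$ is ``irrelevant.'' It is not: $\SO(\strong)$ is a biconditional, so for every get $g$ and every completed subtract $s$ you need $g \raar s \iff g \ravis s$, and with your visibility rule this means $g \raar s \iff g \rarb s$. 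Consider a get $g$ that returns before a subtract $e'$ is invoked, while a second subtract $e''$ is invoked \emph{before} $g$ returns (so $g \not\rarb e''$) but has $\tobNo(e'') > \tobNo(e')$ (its TOB message is delivered late). Arbitration among subtracts is forced to follow $\tobNo$, so $e' \raar e''$; your rule forces $g \raar e'$ (since $g \rarb e'$, hence $g \ravis e'$) and $e'' \raar g$ (since $g \not\rarb e''$, hence $g \not\ravis e''$, and $\SO(\strong)$ then forbids $g \raar e''$). This yields the cycle $g \raar e' \raar e'' \raar g$, so no total order $\ar$ witnessing your construction exists, and such interleavings plainly occur in stable runs.

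The paper resolves exactly this corner: each get $e$ is placed in $\ar$ after the \emph{last} subtract $e'$ with $e \not\rarb e'$ (so only the implication $e \raar e' \Rightarrow e \rarb e'$ holds, not its converse), and the get$\to$subtract visibility edge is defined \emph{via $\ar$}, not via $\rb$. In the scenario above, $g$ is then deliberately \emph{not} visible to $e'$ despite $g \rarb e'$; this costs nothing, because $\EV$ only requires that $g$ be invisible to finitely many later events, and only finitely many subtracts can be concurrent with $g$. So the fix is local --- adopt the paper's placement rule and make get$\to$subtract visibility $\ar$-based --- but as written your construction does not satisfy $\SO(\strong)$, and your sentence deriving $\vis_L = \ar_L$ from ``the $\get$s placed $\ar$-below it'' silently assumes a placement that cannot exist in general. (A minor additional remark: your appeal to sticky clients in the $\RT(\strong)$ step is unnecessary --- $\rb_L$ relates strong events of arbitrary clients --- but the argument there goes through anyway, since a subtract responds only at $\tobdeliver$ and TOB's total order forbids a later-broadcast message from receiving a smaller $\tobNo$.)
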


\begin{restatable}{theorem}{actnncasynchronous}
In asynchronous runs \ACTnnc satisfies $\BEC(\weak, \FFnnc)$ and it does not 
satisfy 
$\LIN(\strong, \FFnnc)$.
\end{restatable}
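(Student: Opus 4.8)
The plan is to establish the two assertions separately: the positive claim that $\ACTnnc$ satisfies $\BEC(\weak,\FFnnc)$ in every asynchronous run, and the negative claim that there is an asynchronous run whose history $H$ has $H\not\models\LIN(\strong,\FFnnc)$. For the positive part I would take an arbitrary infinite fair asynchronous execution and build a witnessing abstract execution $A=(H,\vis,\ar,\perc)$; the construction is essentially the one used for stable runs, so the real work is checking that it survives the loss of TOB progress. I would define $\ar$ by total-order-broadcast delivery, i.e.\ order events by $\tobNo$, and break ties among the events that are \emph{never} TOB-delivered (the read-only $\get$ events, the pending $\subtract$ events, and the $\add$ events that are only RB-delivered) by replica identifier and local event number, placing all such undefined-$\tobNo$ events after every TOB-delivered event. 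Visibility $\vis$ I would set to record, for each event $e$ on replica $R$, the operations whose effect is reflected in $\post(e)$: every $\add$ RB-delivered or issued locally at $R$ before $e$, and every $\subtract$ TOB-delivered at $R$ before $e$.

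The crux of the positive part is $\RVAL(\weak,\FFnnc)$ for $\get$, i.e.\ that $\foldr(0,\fnnc,\vis^{-1}(e).\sort(\ar))$ equals the returned value $\wi-\sd$. The key facts are the invariant $\si\le\wi$ (each $\add$ is RB-delivered before it is TOB-delivered, so $\wi$ counts a superset of what $\si$ counts) and the observation that an event with $\tobNo$ smaller than a $\subtract$ $s$ is exactly one that every replica TOB-delivers before $s$; hence the running value of the fold at each visible $\subtract$ coincides with the $\si-\sd$ that the algorithm tested, the fold marks precisely the algorithm's successful subtractions, and the RB-only $\add$s sitting at the tail of $\ar$ contribute exactly the surplus $\wi-\si$, for a total of $\wi-\sd$. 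Then $\EV(\weak)$ holds because $\add$s propagate by RB and any \emph{completed} $\subtract$ is TOB-delivered everywhere (agreement is guaranteed in all runs), while pending operations have no $\rarb$-successors; and $\NCC(\weak)$ holds because $\vis$ and $\so$ are consistent with the real execution order. None of this invokes TOB progress, which is exactly why the argument remains valid in asynchronous runs.

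For the negative part I would exhibit a single run and show no extension of its history satisfies $\LIN(\strong,\FFnnc)$, exploiting that $\LIN(\strong,\FFnnc)$ subsumes $\EV(\strong)$ through $\BEC(\strong,\FFnnc)$. Let $R_1$ invoke one weak $\add(1)$, which completes immediately returning $\ook$; let it be RB-delivered but let total-order broadcast never deliver it (legitimate, since progress and the RB/TOB coupling hold only in stable runs). After $\add(1)$ returns, let a client issue an infinite sequence of strong $\subtract(1)$ operations $s_1,s_2,\dots$, each invoked only after the previous returns and each TOB-delivered; since $\si=0$ at every $\subtract$ delivery, each $s_k$ returns $\false$ and $\sd$ stays $0$, and $\add(1)\rarb s_k$ for all $k$. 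Now suppose some $A$ extends this history and satisfies $\RVAL(\strong,\FFnnc)$ and $\EV(\strong)$. If $\add(1)\ravis s_k$ for some $k$, then $\add(1)\in\vis^{-1}(s_k)$ appears in $\context(A,s_k)$; since no $\subtract$ ever returns $\true$, $\RVAL(\strong,\FFnnc)$ forces every visible subtraction to be unsuccessful, so $\foldr(0,\fnnc,\vis^{-1}(s_k).\sort(\ar))\ge 1$ and $\RVAL(\strong,\FFnnc)$ then demands $\rval(s_k)=\true$, contradicting $\rval(s_k)=\false$. Hence $\add(1)\not\ravis s_k$ for every $k$, so $\{\,s_k:\add(1)\rarb s_k\wedge\add(1)\not\ravis s_k\,\}$ is infinite, violating $\EV(\strong)$ and therefore $\LIN(\strong,\FFnnc)$.

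I expect the main obstacle to be twofold. First, the $\RVAL(\weak)$ bookkeeping for $\get$: I must argue carefully that relegating the undefined-$\tobNo$ events to the tail of $\ar$ does not corrupt the success/failure decision of any visible $\subtract$, which rests on the exact correspondence between ``$\tobNo$ smaller than $s$'' and ``TOB-delivered before $s$ at every replica'' together with $\si\le\wi$. Second, I must justify that the counterexample is a genuine asynchronous execution of $\ACTnnc$, in particular that total-order broadcast may deliver the later $\subtract$ messages while permanently omitting the earlier $\add$ message; this hinges on the model imposing only total order and agreement (not FIFO or progress) on TOB outside stable runs, so fairness never forces the missing delivery to become enabled. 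The same observation explains the contrast with the stable-run guarantee, where the RB/TOB coupling would eventually confirm $\add(1)$ in $\si$ and make all but finitely many subtractions succeed, restoring $\EV(\strong)$ and hence $\LIN(\strong,\FFnnc)$.
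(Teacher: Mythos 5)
Your positive half ($\BEC(\weak, \FFnnc)$ in asynchronous runs) is essentially the paper's own proof: the same $\tobNo$-based arbitration with undelivered events relegated to the tail, the same state-reflection visibility, and the same $\si \le \wi$ / prefix-correspondence bookkeeping for $\RVAL$; the only cosmetic difference (placing $\get$ events at the very tail of $\ar$ rather than interleaving them) is harmless because read-only events are excluded from every fold. The negative half, however, both diverges from the paper and contains a genuine gap. The paper disposes of $\LIN(\strong, \FFnnc)$ in one step: in an asynchronous run the TOB message of a $\subtract$ may never be $\tobdeliver$ed, so the operation remains pending with $\rval(e) = \nabla$, and since $\FFnnc$ only produces values in $\Values$, the predicate $\RVAL(\strong, \FFnnc)$ fails outright --- no counterexample machinery with infinitely many completed subtractions is needed.

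The gap in your construction is the step ``since no $\subtract$ ever returns $\true$, $\RVAL(\strong,\FFnnc)$ forces every visible subtraction to be unsuccessful, so the fold in $\context(A, s_k)$ is $\ge 1$.'' This inference is false under $\RVAL$ and $\EV$ alone: a subtraction can be fold-unsuccessful in its \emph{own} context yet fold-successful in a \emph{later} event's context, because the fold recomputes success from the context, not from the returned value. Concretely, take $\ar$ to place $\add(1)$ before $s_1$, make $\add(1)$ invisible to $s_1$ but visible to every $s_k$ with $k \ge 2$, and let $s_j \ravis s_k$ for $j < k$. Then $\context(A, s_1)$ is empty, so $\rval(s_1) = \false$ is consistent; and in $\context(A, s_k)$ the fold runs $\add(1) \mapsto 1$, then $s_1$ succeeds and brings it back to $0$, so every later $s_k$ correctly returns $\false$ --- $\RVAL(\strong,\FFnnc)$, $\EV(\strong)$, and $\NCC$ all hold, so your history does \emph{not} refute $\BEC(\strong,\FFnnc)$ by itself. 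To close the hole you must invoke $\SO(\strong)$ (which $\LIN$ provides and your proof never uses): with no pending strong events, $\SO(\strong)$ forces visibility to strong events to coincide with $\ar$-precedence, so contexts of strong events are $\ar$-downward-closed, the evasive placement above is excluded, and the first subtraction $\ar$-after $\add(1)$ necessarily sees a fold of $1$ and must return $\true$, the desired contradiction. With that repair your argument works, but it remains strictly heavier than the paper's pending-operation observation, and it additionally leans on the delicate modeling point you yourself flag --- that TOB may permanently skip the $\add$ message while delivering all later $\subtract$ messages --- which the paper's route avoids entirely.
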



\ACTnnc does not guarantee $\LIN(\strong, \FFnnc)$ in asynchronous runs, 
because strong operations in general (for arbitrary $\FF$) cannot be 
implemented without solving global agreement, and since in asynchronous runs 
TOB completion is not guaranteed, some of the operations may remain pending. It 
means that for some $e \in E$, such that $\lvl(e) = \strong$, $\rval(e) = 
\nabla$, even though it is not allowed by $\FF$ (recall from 
Section~\ref{sec:act:model:correctness} that we consider fair executions).

By satisfying $\BEC(\weak, \FFnnc)$, we proved that temporary operation 
reordering is not possible in \ACTnnc. As we discussed in 
Section~\ref{sec:bayou:anomalies}, it is not the case for \ACTBayou. However,
we can prove, that \ACTBayou satisfies our new correctness criterion 
$\FEC(\weak, \FF)$ (for arbitrary $\FF$).

%

\begin{restatable}{theorem}{bayoustable}
In stable runs \ACTBayou satisfies $\FEC(\weak, \FF) \wedge \LIN(\strong, \FF)$ 
for any arbitrary ACT specification $(\FF, \lvlmap)$.
\end{restatable}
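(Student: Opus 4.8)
The plan is to take an arbitrary fair \emph{stable} execution $G$ of \ACTBayou and exhibit a single abstract execution $A = (H, \vis, \ar, \perc)$ witnessing both conjuncts simultaneously. The backbone of the construction is total order broadcast: I let $\ar$ be the order induced by $\tobNo$, which is a total order on the TOB-delivered events and is natural since each event has only finitely many TOB-predecessors. In a fair stable run every weak operation, being both $\rbcast$ and $\tobcast$ and self-$\rbdeliver$ed at its correct origin replica, is eventually $\tobdeliver$ed by the non-standard delivery guarantee stated in the network properties, so $\ar$ orders essentially all events; the finitely many never-delivered events originating at faulty replicas are appended and influence none of the cofinite conditions. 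For a strong event $e$ I set $\vis^{-1}(e)$ to be exactly its $\committed$ prefix at the moment it returns, which coincides with $\ar^{-1}(e)$; for a weak event $e$ I set $\vis^{-1}(e)$ to be the operations whose effects are reflected in the replica state at $e$'s \emph{immediate} execution (the one producing the client-facing response), and I set $\perc(e)$ to be the $\committed$-then-$\tentative$ order the replica holds at that instant.

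I then verify $\LIN(\strong, \FF)$. $\SO(\strong)$ is immediate from $\vis^{-1}(e) = \ar^{-1}(e)$ for completed strong $e$. $\RT(\strong)$ holds because a strong operation returns only after its own $\tobdeliver$, so if $e_1 \rarb e_2$ then $e_1$ is $\tobdeliver$ed before $e_2$ is even invoked, giving $\tobNo(e_1) < \tobNo(e_2)$. For $\BEC(\strong, \FF)$ I check that the committed state a replica returns is the $\FF$-fold over its committed prefix in $\ar$ order (read-only weak operations being harmlessly excludable from the context by definition of $\readonlyops$), yielding $\RVAL(\strong, \FF)$, while $\EV(\strong)$ and $\NCC(\strong)$ follow from $\SO(\strong)$ together with $\RT(\strong)$.

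For $\FEC(\weak, \FF)$, $\FRVAL(\weak, \FF)$ is the implementation-correctness statement that the state at $e$'s immediate execution equals the $\FF$-fold over $\vis^{-1}(e)$ arranged by $\perc(e)$; this is where the roll-back and re-execution machinery must be shown to maintain the invariant that the state is always the fold of $\FF$ over the currently perceived order. $\EV(\weak)$ follows from RB gossip reaching every correct replica plus subsequent re-execution, so each operation is missed by only finitely many weak events (those preceding its arrival, finitely many per replica over finitely many replicas). $\CPERC(\weak)$ is the convergence argument: once $e$ is $\committed$ at a replica $R$, every later weak execution on $R$ that sees $e$ places $e$ within the committed prefix in $\tobNo$ order, so its $\perc$-rank equals its $\ar$-rank, because $\tentative$ gossip-only operations sit behind the committed prefix and cannot alter $e$'s rank; since $e$ becomes committed on each correct replica after only finitely many of that replica's events, only finitely many weak events can perceive $e$ out of place.

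The main obstacle is $\NCC(\weak)$, the very property that modification~2 of the protocol was introduced to secure. The plan is to use the global interleaving $G$ as a drift-free clock: set $t(e) = \ord_G$ of $e$'s immediate execution event. Because a weak operation is executed \emph{immediately} on the current state, $e_1 \ravis e_2$ forces $e_1$'s effect to have been applied, and hence $e_1$ to have originated, strictly before $e_2$'s immediate execution, so $t(e_1) < t(e_2)$; session order $\so$ likewise respects $t$. Thus $\hb = (\so \cup \vis)^+$ embeds into $(\mathbb{N}, <)$ and is acyclic, giving $\NCC(\weak)$. I expect the delicate point to be justifying that $e_1 \ravis e_2$ genuinely entails $t(e_1) < t(e_2)$ across replicas: this requires tracking that an effect enters a replica's state only after its immediate execution elsewhere and only through deliveries occurring later in $G$, which is exactly what immediate (pre-rollback) execution guarantees and precisely why the original, deferred-execution Bayou violates $\NCC$.
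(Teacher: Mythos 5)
Your skeleton matches the paper's own proof strategy: $\ar$ from the $\tobNo$ order, $\vis$ for strong events equal to the committed prefix and for weak events read off the replica state at response time, $\perc(e)$ as the $\committed\cdot\tentative$ snapshot, and a convergence argument for $\CPERC(\weak)$ that is essentially the paper's. But there is a concrete gap in how you handle weak \emph{read-only} operations, and it propagates through three predicates. In \ACTBayou a weak RO operation is executed purely locally and is neither $\rbcast$ nor $\tobcast$, so your claim that ``every weak operation, being both $\rbcast$ and $\tobcast$, \dots is eventually $\tobdeliver$ed'' is false for reads: your $\tobNo$-induced $\ar$ is not a total order on $E$, since infinitely many RO events at correct replicas are never delivered. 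Worse, your $\vis$ (``effects reflected in the replica state'') gives RO events empty out-visibility, so any RO event followed in $\rb$ by infinitely many weak events violates $\EV(\weak)$, and $\SO(\strong)$ --- which requires \emph{every} event arbitrated before a strong event to be visible to it --- cannot hold for RO events however you append them. The paper's proof resolves exactly this with machinery you omit: it splits events into shared ($\Psi$) and local ($\Omega$) classes, interleaves each local event into $\ar$ immediately after the last shared event it does not return-before (so that $e \raar e' \Rightarrow e \rarb e'$ for shared $e'$), and adds \emph{non-essential} visibility edges from RO events (to later RO events via $\rb$, and to shared events via $\ar$). Without these edges your $\EV$ and $\SO$ verifications do not go through.

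Your $\NCC$ argument is a genuinely different route from the paper's --- the paper partitions $\vis \cup \so$ edges into a $\tobNo$-constrained group and an $\rb$-constrained group and rules out mixed cycles by a chain analysis --- and a global-timestamp embedding could work, but not as you state it. The function $t(e)=$ ``$\ord_G$ of $e$'s immediate execution'' is undefined or wrong for strong events: a strong operation has no immediate execution, and if you take its origination (invoke) or its origin-replica execution, monotonicity fails. With $t=$ invoke, two strong operations invoked concurrently can be TOB-ordered against invocation order, so $e_1 \ravis e_2$ (both strong) does not give $t(e_1)<t(e_2)$; with $t=$ origin-replica execution, a remote replica may $\tobdeliver$ and apply $e_1$'s request --- making it visible to a weak $e_2$ there --- before $e_1$'s own origin delivers, again breaking monotonicity. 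The repair is to time strong events by their \emph{first} $\tobdeliver$ anywhere in $G$ (and weak events by their invoke); one can then check all six edge types, including $\so$ edges out of strong events, are $t$-increasing. Note also that pairwise statements of the form ``$e_1$ originated before $e_2$'s execution'' do not compose around an alternating cycle unless a single time function witnesses every edge, which is precisely what your current formulation lacks. Finally, you defer $\FRVAL(\weak,\FF)$ to ``implementation correctness''; the paper discharges it non-trivially via the \StateObject axioms, the identity $\alpha = \executed \cdot \mathsf{reverse}(\toBeRolledBack)$, and an isomorphism between $\fcontext(A,e)$ and the trace-derived context (whose internal visibility comes from the $\ctxsf$ fields, not from a sequential fold --- your fold phrasing silently assumes sequential semantics and would not cover types like MVR).
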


\begin{restatable}{theorem}{bayouasynchronous}
In asynchronous runs \ACTBayou satisfies $\FEC(\weak, \FF)$ and it does not 
satisfy $\LIN(\strong, \FF)$ for any arbitrary ACT specification $(\FF, 
\lvlmap)$.
\end{restatable}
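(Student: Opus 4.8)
The plan is to treat the two conjuncts separately and to recycle as much of the stable-run proof (the previous theorem) as possible, isolating the single place where the loss of TOB liveness actually matters.

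For the negative half, that \ACTBayou does not satisfy $\LIN(\strong,\FF)$, I would argue exactly as in the asynchronous \ACTnnc case. Since $\LIN(\strong,\FF)$ contains $\BEC(\strong,\FF)$ and hence $\RVAL(\strong,\FF)$, it suffices to exhibit a single fair asynchronous history that no $(\vis,\ar,\perc)$ can explain. Pick any $\opn$ with $\strong\in\lvlmap(\opn)$ (if the specification admits no strong operation the claim is vacuous) and consider a run in which a client invokes $\opn$ on a replica; by the protocol the replica only $\tobcast$s the request and then waits, and because the run is asynchronous TOB never makes progress, so that message is never $\tobdeliver$ed. The \emph{non-blocking strong operations} restriction permits the replica to leave $\opn$ pending, since there is an undelivered message in $\msgs$, so the run extends to an infinite fair execution whose corresponding event $e$ has $\rval(e)=\nabla$. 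As $\nabla\notin\Values$, no abstract execution extending this history can satisfy $\rval(e)=\FF(\op(e),\context(A,e))$, so $\RVAL(\strong,\FF)$ and therefore $\LIN(\strong,\FF)$ fail.

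For the positive half I would reuse the stable-run construction almost verbatim on an arbitrary fair asynchronous execution: take $\vis^{-1}(e)$ to be the operations whose effects are incorporated into the state of $\rep(e)$ when $e$ executes, let $\perc(e)$ be the order in which $\rep(e)$ currently arranges operations (its $\committed$ prefix followed by its $\tentative$ list sorted by timestamp), and verify $\EV(\weak)$, $\NCC(\weak)$ and $\FRVAL(\weak,\FF)$ as before. The key observation is that none of these three predicates uses TOB \emph{liveness}: $\EV$ follows from reliable broadcast, whose delivery guarantees hold in all runs and not only in stable ones; $\NCC$ follows from the protocol modification analysed in Appendix~\ref{sec:bayou_modified}; and $\FRVAL$ follows from determinism of $\FF$ together with the fact that each weak operation returns the value of its local speculative execution, which by construction equals $\FF(\op(e),\fcontext(A,e))$.

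The only genuinely new work, and the expected main obstacle, is $\CPERC(\weak)$ together with the right choice of $\ar$. In stable runs one lets $\ar$ be the global TOB order and $\perc(e')$ converges to it because every operation is eventually committed; in an asynchronous run TOB stalls, so only a finite prefix is ever committed, and I would instead define $\ar$ as this finite $\committed$ prefix (in TOB order) followed by all remaining operations in timestamp order. To prove $\CPERC(\weak)$, fix $e$ and bound the weak events $e'\in L_e$ that rank $e$ differently under $\perc(e')$ and $\ar$. For two operations both lying in the tentative tail, $\perc(e')$ and $\ar$ agree since both order by timestamp, so they contribute no discrepancy; the only discrepancies come from the finitely many committed operations, and by the TOB safety property that holds in all runs (if any replica $\tobdeliver$s a message then every correct replica eventually does), each correct replica perceives that finite prefix in TOB order after finitely many steps. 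Hence only finitely many $e'$ disagree with $\ar$ about $e$'s position, which gives $\CPERC(\weak)$ and thus $\FEC(\weak,\FF)=\EV(\weak)\wedge\NCC(\weak)\wedge\FRVAL(\weak,\FF)\wedge\CPERC(\weak)$. The delicate points to pin down are that the timestamp order on the tail is a well-defined total order coinciding with the order replicas converge to, and that the committed prefix is genuinely finite in every asynchronous run; once these are settled the counting argument for $\CPERC$ is routine.
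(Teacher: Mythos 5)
Your proposal tracks the paper's proof almost step for step: the negative half (an asynchronous run in which a strong operation's $\tobcast$ message is never $\tobdeliver$ed, leaving the operation pending with return value $\nabla$, which is irreconcilable with $\RVAL(\strong,\FF)$) is exactly the paper's argument, and your positive half uses the same arbitration the paper builds --- the $\tobdeliver$ed prefix in $\tobNo$ order followed by the never-delivered weak updates in $\Request$/timestamp order --- together with the same stabilization argument for $\CPERC(\weak)$.

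The one genuine gap is your justification of $\EV(\weak)$: it does \emph{not} ``follow from reliable broadcast'' alone. RB delivery merely places a request on the $\tentative$ list; a later commit, or a late-arriving request with a smaller timestamp, rolls it back, and while rolled back it is absent from the replica's trace and hence invisible to operations executed in that window. The paper closes this with its ``settling'' observation: in an asynchronous run only finitely many messages are $\tobdeliver$ed, and since local clocks advance monotonically and take infinitely many values on each replica, only finitely many requests can ever carry a lesser $\Request$ record than a given one; hence every never-committed weak update is eventually executed and never rolled back afterwards, so only finitely many weak events can fail to observe it. You in fact invoke both finiteness facts in your $\CPERC$ paragraph, so the repair is to promote them to an explicit lemma and cite it for $\EV(\weak)$ as well. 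Two smaller loose ends: pending strong events must still be placed in the total order $\ar$ (the paper appends them after everything else, sets $\perc$ of a never-executed pending event to $\ar$, and deletes all visibility edges to and from pending events so they cannot violate $\EV$ or $\CPERC$); and the interleaving of weak read-only events into $\ar$ must be specified separately (after the last non-pending shared event not following them in $\rb$), since those events carry no $\Request$ record and thus have no timestamp by which your ``remaining operations in timestamp order'' clause could sort them.
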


The observation that some undesired anomalies are not inherent to all ACTs
leads to interesting questions that we plan to investigate more closely in the 
future, e.g., what are the common characteristics of the replicated data types 
with mixed-consistency semantics that can be implemented as ACTs that are free 
of temporary operation reordering.


\section{Impossibility} \label{sec:impossibility}

Now we proceed to our central contribution--we show that there exists an ACT 
specification for which it is impossible to propose an ACT implementation that 
avoids temporary operation reordering. 

If a mixed-consistency ACT that implements some replicated data type $\FF$ 
could avoid temporary operation reordering, it would mean that it ensures BEC 
for weak operations and also provides at least some criterion based on $\SO$ 
for strong operations (to ensure a global serialization of all operations).
Hence we state our main theorem:
\begin{theorem} \label{thm:impossibility}
There exists an ACT specification $(\FF, \lvlmap)$, for which there does not 
exist an implementation that satisfies $\SO(\strong) \wedge \BEC(\strong, \FF)$ 
in stable runs, and $\BEC(\weak, \FF)$ in both asynchronous and stable runs.
\end{theorem}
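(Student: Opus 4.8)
The plan is to argue by contradiction, taking as the witnessing data type the append-only sequence $\FFseq$ of Figure~\ref{fig:ffseq}, with $\lvlmap(\app(x)) = \{\weak\}$ for every character $x$ and $\lvlmap(\rd) = \{\weak, \strong\}$. The point of this choice is that $\rd$ exposes, through the returned word, the $\ar$-order of all appends in its context, so any disagreement about arbitration becomes observable in return values. Suppose toward a contradiction that some implementation $I$ respecting the ACT restrictions satisfies $\SO(\strong) \wedge \BEC(\strong, \FFseq)$ in stable runs together with $\BEC(\weak, \FFseq)$ in all runs. I will construct two stable runs that $I$ cannot simultaneously explain; note that it suffices to work entirely within stable runs, since the assumed guarantees hold there.

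First I would pin down how strong reads must arbitrate two concurrent weak appends. Consider a stable run in which $R_1$ invokes weak $\app(a)$ and $R_2$ invokes weak $\app(b)$, their update messages are total-order broadcast and delivered in an order chosen by the adversary, and afterwards some replica invokes a strong $\rd$ whose context contains both updates. By $\SO(\strong)$ the visible set of this read coincides with an $\ar$-prefix, and by $\BEC(\strong, \FFseq)$ its returned word is exactly the $\ar$-order of $\{\app(a), \app(b)\}$. The crucial claim—and the main obstacle—is that this order must track the adversary-controlled total-order delivery order and cannot be pinned to any purely local tie-breaker such as operation timestamps. I would establish this with a fault-tolerance argument: introduce a third, late weak $\app(c)$ issued by a slow replica $R_3$ whose message is delayed past the strong read, and whose timestamp is the smallest of the three. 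If $I$ arbitrated by timestamp, $\SO(\strong)$ would force the strong read to observe $c$ before returning; but non-blocking strong operations together with the requirement that strong operations not wait for all replicas forbid the read from blocking on $R_3$, so it must return without $c$ in its context. This is consistent with $\SO(\strong) \wedge \BEC(\strong, \FFseq)$ only if $c$ is placed \emph{after} the already-committed appends in $\ar$, i.e.\ if $\ar$ on strong-visible appends follows the commit order rather than the timestamp order. Hence the adversary can, through total-order broadcast, force the strong read to return either $ab$ or $ba$.

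Next I would exploit high availability on the weak side. I build two stable runs $\rho_1, \rho_2$ that are step-for-step identical on $R_1$ up to and including a weak $\rd$ event $r_1$: in both runs $R_1$ invokes $\app(a)$, later receives $b$'s update via reliable broadcast, and is scheduled to deliver no total-order messages before $r_1$, with all of $R_1$'s local clock values fixed. By $\EV(\weak)$ I may assume $r_1$ already has both appends in its context (taking the first such read in an infinite family of reads if necessary), so its context is exactly $\{\app(a),\app(b)\}$. Since $R_1$ is a deterministic state machine and its prefix up to $r_1$ is identical in both runs, $r_1$ returns the same word $w \in \{ab, ba\}$ in $\rho_1$ and $\rho_2$. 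The two runs differ only in events occurring after $r_1$ or on replicas other than $R_1$: applying the ordering step, the adversary delivers $a,b$ in commit order $a,b$ in $\rho_1$ and $b,a$ in $\rho_2$, and each run additionally contains a strong $\rd$ observing both appends.

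Finally I would assemble the contradiction. Fix any $\vis, \ar, \perc$ witnessing correctness of $\rho_1$. By the ordering step the strong read in $\rho_1$ returns $ab$, so $\SO(\strong) \wedge \BEC(\strong, \FFseq)$ forces $\app(a) \raar \app(b)$; then $\BEC(\weak, \FFseq)$ applied to $r_1$, whose context is $\{\app(a),\app(b)\}$, forces $w = \FFseq(\rd, \context(A,r_1)) = ab$. Running the identical argument on $\rho_2$ forces $\app(b) \raar \app(a)$ and hence $w = ba$. But $w$ is the same word in both runs by indistinguishability, so $ab = ba$, a contradiction. Therefore no implementation $I$ can satisfy all three requirements, which proves the theorem. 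I expect the delicate point to be the fault-tolerance step that rules out timestamp-based arbitration and forces strong reads onto the consensus order; once that is in place, the weak-side indistinguishability and the final assembly follow routinely from the determinism of replicas and the semantics of $\FFseq$.
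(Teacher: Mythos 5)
Your overall skeleton (two weak appends, a weak read pinned down by determinism and $\EV$, a strong read, and a final cycle/contradiction in $\ar$) points in the right direction, but the load-bearing step fails. You need the strong reads in $\rho_1$ and $\rho_2$ to return \emph{different} words ($ab$ vs.\ $ba$), and you try to get this by letting the adversary choose the TOB delivery order and arguing that arbitration "must track the commit order." That does not follow. Your fault-tolerance sub-argument with the late $\app(c)$ only shows that an operation a completed strong read has \emph{not} seen must be arbitrated after it; it says nothing about the relative arbitration of two appends that are \emph{both} visible to the strong read. An implementation is free to tie-break concurrent appends by a delivery-order-independent rule (e.g., lexicographically on operation identifiers), returning $ab$ from the strong read in both of your runs; the witness $A$ then sets $\app(a) \raar \app(b)$ in both runs, your weak read $r_1$ returns $ab$ in both, and every predicate you invoke is satisfied. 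So your two runs simply do not yield a contradiction, and since arbitration is an existentially quantified witness relation rather than an implementation mechanism, no scheduling of TOB deliveries alone can force the disagreement you need.

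The missing idea, which is the heart of the paper's proof, is a \emph{partition-based asymmetry} rather than adversarial reordering: keep $R_1$ and $R_2$ partitioned, deliver only $R_2$'s RB message to $R_1$ so that a weak read on $R_1$ returns, say, $ab$ (fixing $\app(a) \raar \app(b)$), while on $R_2$ a \emph{strong} read is driven to completion having seen only $\app(b)$ --- this is exactly where the non-blocking strong operations restriction bites, since the read must return without any information from $R_1$ (the paper builds an auxiliary run $H''$ with $R_1$ erased, uses $\EV(\strong)$ to get a strong read returning just $b$, and transplants it back by determinism). Then $\SO(\strong)$ gives $e_b \raar e_x$ and, since $e_a \not\ravis e_x$, also $e_x \raar e_a$, closing the cycle $e_a \raar e_b \raar e_x \raar e_a$. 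Note also that your remark that "it suffices to work entirely within stable runs" discards the asynchronous-run hypothesis, which the paper uses essentially: it is what forces weak appends to be disseminated by $\rbcast$ (not only $\tobcast$), and hence what entitles you to assume $R_1$ receives $b$'s update by RB before any TOB delivery --- a step your proposal takes for granted.
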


To prove the theorem, we take $\FFseq$ (defined in Figure~\ref{fig:ffseq}) 
as an example replicated data type specification $\FF$.
We consider an ACT specification, which features 
$\app$ and $\rd$ operations in both consistency levels, $\weak$, and $\strong$. 
Thus, $(\FF, \lvlmap) = (\FFseq, \lvlmap_\mathit{seq})$, where 
$\lvlmap_\mathit{seq}(\app) = \lvlmap_\mathit{seq}(\rd) = \{\weak, \strong \}$. 

Let us begin with an observation. Whenever any ACT implementation of $(\FFseq, 
\lvlmap_\mathit{seq})$ that satisfies $\BEC(\weak, \FFseq)$ in asynchronous 
runs, executes a weak $\app$ operation, it has to $\rbcast$ some message $m$.
Since the implementation satisfies $\EV$ (through $\BEC(\weak, 
\FFseq)$) we know that all replicas have to be informed about the invocation 
of $\app$. The replica executing the $\app$ operation may not postpone sending 
the message until some other invocation happens, because all the subsequent 
operation invocations on the replica may be operations, which do not 
grant the replica the right to send messages (e.g., RO operations, by the 
invisible reads requirement). Moreover, the replica may not depend on 
$\tobcast$ messages, because in asynchronous runs they are not guaranteed to be 
delivered to other replicas.\footnote{A replica may $\tobcast$ some messages due 
to the invocation of a weak $\app$ operation, but its correctness cannot depend 
on their delivery.} Thus, a message must be $\rbcast$. 
Since replicas cannot distinguish between asynchronous and stable runs, the 
same observation also holds for stable runs. We utilize this fact in our 
proof by considering asynchronous and stable executions and establishing 
certain invariants which need to hold in both kinds of runs.

We conduct the proof by 
contradiction using a specially constructed execution, in which a replica 
that executes a strong operation has to 
return a value without consulting all replicas. Thus, we consider an ACT 
implementation of $(\FFseq, \lvlmap_\mathit{seq})$ that satisfies $\BEC(\weak, 
\FFseq)$ in asynchronous runs, and $\BEC(\weak, \FFseq) \wedge \SO(\strong) 
\wedge \BEC(\strong, \FFseq)$ 
in both the asynchronous and stable runs.

\begin{proof}
We give a proof for a two-replica system and then show
how it can be easily to a system with $n > 2$ replicas.

We begin with an empty execution represented by a history $H = (E, \op, \rval, 
\rb, \sss, \lvl)$, which we will extend in subsequent steps. Initially replicas 
$R_1$ and $R_2$ are separated by a temporary network partition, which means 
that the messages broadcast by the replicas do not propagate (however, 
eventually they will be delivered once the partition heals). A weak $\app(a)$ 
operation is invoked on $R_1$ in the event $e_a$ and a weak $\app(b)$ operation 
is invoked on $R_2$ in the event $e_b$. By input-driven processing and 
highly available weak operations both replicas return responses for the 
operations and become passive afterwards. 
Let $\msgs_a^\mathit{RB}$ and $\msgs_b^\mathit{RB}$ denote the set of messages
$\rbcast$ by, respectively, $R_1$ and $R_2$, until this point.
Let $\msgs_a^\mathit{TOB}$ and $\msgs_b^\mathit{TOB}$ denote the set of 
messages $\tobcast$ by, respectively, $R_1$ and $R_2$, until this point.
Neither $R_1$ $\rbdeliver$s messages from the set $\msgs_b^\mathit{RB}$, nor 
$R_2$ $\rbdeliver$s messages from the set $\msgs_a^\mathit{RB}$ (due to the 
temporary network partition), but the replicas $\rbdeliver$ their own messages, 
and subsequently become passive (if $\msgs_a^\mathit{TOB} \neq \emptyset$ or 
$\msgs_b^\mathit{TOB} \neq \emptyset$, then these messages remain pending).

Consider an alternative execution represented by history $H' = (E', 
\op', \rval', \rb', \sss', \lvl')$ in which the network partition heals, and 
$R_1$ $\rbdeliver$s all messages in the set $\msgs_b^\mathit{RB}$, $R_2$ 
$\rbdeliver$s all messages in the set $\msgs_a^\mathit{RB}$, and then a weak 
$\rd$ operation is invoked on $R_1$ in the event $e'_c$ and a weak $\rd$ 
operation is invoked on $R_2$ in the event $e'_d$. By invisible reads and 
highly available operations, both replicas remain passive and immediately 
return a response.

\begin{claim} \label{claim:ab}
$\rval'(e'_c) = \rval'(e'_d) = v$, and $v = \mathit{ab}$ or $v = \mathit{ba}$.
\end{claim}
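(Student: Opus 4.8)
The plan is to pin down both return values by extending $H'$ into an infinite fair \emph{asynchronous} run and then combining eventual visibility with invisible reads. Concretely, I would continue $H'$ by invoking infinitely many weak $\rd$ operations on each of $R_1$ and $R_2$, while $\rbdeliver$ing every message that was $\rbcast$ (the partition has healed, and reliable broadcast guarantees delivery to correct replicas) and never making progress on TOB. The latter is legitimate, since TOB is only required to terminate in stable runs, so in an asynchronous run no $\tobdeliver$ event is ever enabled and fairness does not force one. By \textbf{invisible reads} and \textbf{op-driven messages}, the added reads neither change replica state nor emit any message, so the only messages in the entire run are those already in $\msgs_a^\mathit{RB}, \msgs_b^\mathit{RB}, \msgs_a^\mathit{TOB}, \msgs_b^\mathit{TOB}$; the resulting execution $H''$ is well-formed and fair.

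Since the implementation is assumed to satisfy $\BEC(\weak, \FFseq)$ in asynchronous runs, $H''$ admits an abstract execution $A'' = (H'', \vis, \ar, \perc)$ with $A'' \models \BEC(\weak, \FFseq)$. The key structural observation is that, by the time of $e'_c$, replica $R_1$ has $\rbdeliver$ed both $\msgs_a^\mathit{RB}$ and $\msgs_b^\mathit{RB}$, and from $e'_c$ onward only (invisible) reads execute on $R_1$; hence $R_1$'s state is \emph{frozen}, and every $\rd$ it performs — in particular $e'_c$ — returns one and the same value $v_1$, so $\rval'(e'_c) = v_1$. Symmetrically, every $\rd$ on $R_2$ returns a single value $v_2$ and $\rval'(e'_d) = v_2$.

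Next I would transfer the eventual guarantee back to these frozen values. Because $e_a$ and $e_b$ both return before any added read starts, we have $e_a \rarb e'$ and $e_b \rarb e'$ for each such read $e'$, so applying $\EV(\weak)$ to $e_a$ and to $e_b$ yields that all but finitely many of the infinitely many reads on $R_1$ have both $e_a$ and $e_b$ visible to them in $A''$. For any such read, $\RVAL(\weak, \FFseq)$ together with the definition of $\FFseq$ forces its return value to contain both $a$ and $b$; but that value equals the frozen $v_1$, so $v_1 \in \{\mathit{ab}, \mathit{ba}\}$, and likewise $v_2 \in \{\mathit{ab}, \mathit{ba}\}$. (Here the earlier observation that a weak $\app$ must $\rbcast$ a nonempty set of messages is what makes $\EV$ attainable, since it guarantees $b$'s effect actually reaches $R_1$ once the partition heals.) Finally, since $\ar$ is a single total order, $e_a$ and $e_b$ are ordered the same way for every read that sees both, so $v_1$ and $v_2$ both equal the $\ar$-ordering of $\{e_a, e_b\}$; hence $\rval'(e'_c) = \rval'(e'_d) = v$ with $v = \mathit{ab}$ or $v = \mathit{ba}$.

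The main obstacle is exactly the gap between what $\BEC$ promises and what the claim asserts: $\BEC$ guarantees visibility only \emph{eventually}, whereas the claim concerns the two specific, immediate reads $e'_c$ and $e'_d$. The invisible-reads restriction is the lever that closes this gap — freezing each replica's state so that the early reads return the same value as the cofinally many later reads to which $\EV$ applies — and some care is needed to keep the extension a valid fair asynchronous execution in which TOB may stall indefinitely while RB still delivers.
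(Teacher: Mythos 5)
Your proof is correct and takes essentially the same approach as the paper's: extend $H'$ with infinitely many weak $\rd$ invocations, use invisible reads to freeze each replica's state so that $e'_c$ and $e'_d$ agree with the cofinally many later reads, then apply $\EV(\weak)$ and $\RVAL(\weak,\FFseq)$ together with the single total order $\ar$ to force the common value $v \in \{\mathit{ab}, \mathit{ba}\}$. The only cosmetic difference is that the paper selects one witness read per replica and equates their contexts after excluding read-only events, whereas you argue directly that any read seeing both $e_a$ and $e_b$ must return their $\ar$-ordering; the substance is identical.
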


\begin{proof}
We extend $H'$ with infinitely many weak $\rd$ invocations on both $R_1$ 
and $R_2$, in events $e'_k$, for $k \ge 1$. Similarly to $e'_c$ and $e'_d$, the 
$\rd$ operations invoked in each $e'_k$ return immediately and leave the 
replicas $R_1$ and $R_2$ in the unmodified passive state. Since none 
of the $\rd$ operations generate any new messages, $H'$ represents a fair 
infinite execution that satisfies all network properties of an asynchronous 
run. Then, by our base assumption, there exists an abstract execution $A' = 
(H', \vis', \ar', \perc')$, such that $A' \models \BEC(\weak, \FFseq)$.

Because each replica remains in the same state since the execution of $e'_c$ 
and $e'_d$, respectively, each $\rd$ operation invoked in $e'_k$, returns the 
same response as $e'_c$ or $e'_d$, depending on which replica the given event 
was executed. By $\EV(\weak)$, the two updating events $e_a$ and $e_b$ have to 
be both observed by infinitely many of the $e'_k$ events. Let $e'_p$ be one 
such event executed on $R_1$ and $e'_q$ be one such event executed on $R_2$, 
then $(e_a \raaa{\vis'} e'_p \wedge e_a \raaa{\vis'} e'_q \wedge e_b 
\raaa{\vis'} e'_p \wedge e_b \raaa{\vis'} e'_q)$. There is either: $e_a 
\raaa{\ar'} e_b$, or $e_b \raaa{\ar'} e_a$. Now, by the definition of read-only 
operations we can exclude the RO operations from the context of any operation 
without affecting the return value of all operations. Thus $\FFseq(\rd(), 
\context(A', e'_p)) = \FFseq(\rd(), \context(A', e'_q)) = v'$ for some $v'$. 
Because of $\RVAL(\weak, \FFseq)$, $\rval'(e'_p) = v' = \rval'(e'_q)$. 
Therefore, all $\rd$ operations in $H'$ return the same value $v'$, including 
the earliest ones $e'_c$ and $e'_d$, which means that $v = v'$. By the 
definition of $\FFseq$, either $v = \mathit{ab}$ or $v = \mathit{ba}$ 
(depending on whether $e_a \raaa{\ar'} e_b$, or $e_b \raaa{\ar'} e_a$).
\end{proof}

Without loss of generality, let us assume that $v$ obtained in the history $H'$ 
equals $\mathit{ab}$. Let us return to our main history $H$. We extend it 
similarly to the way we extended $H'$, but we do not allow the network 
partition to heal completely. Instead, we just let $\msgs_b^\mathit{RB}$ to 
reach $R_1$, which $\rbdeliver$s them exactly as in $H'$. Since replicas are 
deterministic, the current state of $R_1$ must be the same as it was in $H'$ 
during the execution of $e'_c$. Thus, similarly to $H'$, we invoke a weak $\rd$ 
operation on $R_1$ in an event $e_r$, and $\rval(e_r) = v = \mathit{ab}$.

Consider yet another execution represented by history $H'' = 
(E'', \op'', \rval'', \rb'', \sss'', \lvl'')$ which is obtained from our main 
execution $H$ by removing any steps executed by $R_1$. The events executed on 
$R_2$ remain unchanged, since the two replicas were all the time separated by a 
network partition, and no messages from $R_1$ reached $R_2$. We let the network 
partition heal. $R_1$ $\rbdeliver$s messages from the set 
$\msgs_b^\mathit{RB}$, both replicas $\tobdeliver$ messages from the set 
$\msgs_b^\mathit{TOB}$, and afterward both replicas become passive. 

We now extend $H''$ by infinitely many times applying the following procedure 
(for $k$ from $1$ to infinity):
\begin{enumerate}
\item invoke a \emph{strong} $\rd$ 
on $R_2$ in the event $e''_{2k}$,
\item \label{item:passive} let $R_2$ execute its steps until it becomes passive,
\item on both $R_1$ and $R_2$, $\rbdeliver$ and $\tobdeliver$ all messages, 
respectively, $\rbcast$ or $\tobcast$, by $R_2$ in step~\ref{item:passive},
\item \label{item:passive2} let both replicas reach a passive state,
\item invoke a \emph{weak} $\rd$ 
on $R_1$ in the event $e''_{2k+1}$.
\end{enumerate}

The resulting execution is fair and satisfies all the network properties of a 
stable run. Note that the strong $\rd$ operations executed on $R_2$ are not 
restricted by invisible reads and thus may freely change the state of $R_2$. 
Moreover, they can cause $R_2$ to $\rbcast$ and $\tobcast$ messages. On the 
other hand, the weak $\rd$ operations executed on $R_1$ are always executed on 
a passive state, and leave the replica in the same state. Moreover, $R_1$ does 
not $\rbcast$, nor $\tobcast$ any messages. By non-blocking strong operations 
no strong $\rd$ operation may be pending in $H''$. This is so, because for each 
$k$, by step~\ref{item:passive2}, there is no pending message not yet 
$\tobdeliver$ed on $R_2$, and $R_2$ is in a passive state.

\begin{claim}
There exists an event $e''_x \in E''$, with $x = 2k$ for some natural $k$, such 
that $\rval''(e''_x) = b$.
\end{claim}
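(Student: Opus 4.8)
The plan is to exploit the fact that, in $H''$, the only updating operation that ever occurs is $\app(b)$, and then to invoke the strong-consistency guarantees to force some strong $\rd$ on $R_2$ to report exactly $b$.

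First I would establish the key structural observation: the set of append events in $E''$ is precisely $\{e_b\}$. Indeed, $H''$ is obtained from $H$ by deleting every step of $R_1$, so in particular the event $e_a$ (the invocation of $\app(a)$) together with every message $R_1$ ever $\rbcast$ or $\tobcast$ is gone, and $R_2$ never received any information about $a$ because the two replicas were separated by a partition throughout $H$. In the extension of $H''$, replica $R_1$ performs only message deliveries and weak $\rd$ operations; since $\rd$ is read-only, by \emph{op-driven messages} and \emph{invisible reads} $R_1$ neither $\rbcast$s nor $\tobcast$s anything, and it certainly never appends. Replica $R_2$ performs $\app(b)$ and thereafter only strong $\rd$ operations, which, being read-only, add no further appends. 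Hence no event in $E''$ carries the operation $\app(a)$, and $e_b$ is the unique append in $H''$.

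Next I would set up the abstract execution. As already noted in the construction, $H''$ is a fair, infinite execution satisfying all the network properties of a stable run, and by \emph{non-blocking strong operations} no strong $\rd$ remains pending. By the base assumption the implementation satisfies $\BEC(\strong, \FFseq)$ in stable runs, so there exists an abstract execution $A'' = (H'', \vis'', \ar'', \perc'')$ with $A'' \models \BEC(\strong, \FFseq)$. Let $L$ be the set of strong events. The reads $e''_{2k}$ for $k \ge 1$ are infinitely many events of $L$, and each is invoked only after $e_b$ has already returned, so $e_b \rarb e''_{2k}$ for every $k$. By $\EV(\strong)$ the set of strong events that fail to observe $e_b$ is finite, so there is some $k$ with $e_b \ravis e''_{2k}$ in $A''$; fix such an event and set $x = 2k$. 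Now read off the value: by $\RVAL(\strong, \FFseq)$ we have $\rval''(e''_x) = \FFseq(\rd, \context(A'', e''_x))$, and since every event visible to $e''_x$ is either read-only or equal to $e_b$, and read-only operations may be excluded from any context without changing the result, the context reduces to one containing the single append $\app(b)$. Hence $\FFseq(\rd, \context(A'', e''_x)) = b$, so $\rval''(e''_x) = b$, which is the claim.

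The main obstacle is the first step: arguing cleanly that $H''$ contains genuinely no trace of $a$. This rests on determinism of the replicas, on the partition in $H$ keeping all of $R_1$'s messages (including those encoding $a$) away from $R_2$, and on \emph{op-driven messages} together with \emph{invisible reads} guaranteeing that neither the deliveries nor the read operations of the extension can reintroduce an append on either replica. Once that is pinned down, the remainder is a direct application of $\EV(\strong)$ and $\RVAL(\strong, \FFseq)$; the overall role of this claim, together with the earlier fact $\rval(e_r) = \mathit{ab}$ on $R_1$, will be to expose the temporary operation reordering that contradicts $\SO(\strong) \wedge \BEC(\strong, \FFseq)$.
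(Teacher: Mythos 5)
Your proof is correct and follows essentially the same route as the paper's: it invokes $\EV(\strong)$ (noting $e_b \rarb e''_{2k}$ for all $k$) to obtain some strong read $e''_x$ with $e_b \ravis e''_x$, then applies $\RVAL(\strong, \FFseq)$ and the definition of $\FFseq$ to conclude $\rval''(e''_x) = b$. The additional care you take in verifying that no trace of $\app(a)$ survives in $H''$ is a point the paper treats as immediate from the construction (``there are no other $\app$ operations in $A''$''), so it elaborates rather than alters the argument.
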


\begin{proof}
By our base assumption, there exists an abstract execution $A'' = (H'', \vis'', 
\ar'', \perc'')$, such that $A'' \models \SO(\strong) \wedge \BEC(\strong, 
\FFseq)$. Then, for each $k$, by $\RVAL(\strong, \FFseq)$, $\rval''(e''_{2k}) = 
\FFseq(\rd(), \context(A'', e''_{2k}))$. Moreover, because of $\EV(\strong)$, 
$e_b$ needs to be observed from some point on by every $e''_{2k}$. Thus, we let 
$e_b \raaa{\vis''} e''_x$. Since $e_b$ is the only $\app$ operation visible to 
$e''_x$ (there are no other $\app$ operations in $A''$), by definition of 
$\FFseq$, $\rval''(e''_x) = b$.
\end{proof}

Let us return to our main history $H$. Note that, when we restrict $H$ and 
$H''$ only to events on $R_2$, $H$ constitutes a prefix of $H''$. Moreover, the 
state of $R_2$ at the end of $H$ is the same as in $H''$ just before 
$\tobdeliver$ing messages from the set $\msgs_b^\mathit{TOB}$ (if any) and 
executing the first strong $\rd$ operation. We now extend $H$ by 
$\tobdeliver$ing the messages from the set $\msgs_b^\mathit{TOB}$ and then with 
steps executed on $R_2$ generated using the repeated procedure for $H''$, for 
$k$ from $1$ to $\frac{x}{2}$. We can freely omit the steps executed on $R_1$, 
since none of them influenced in any way $R_2$ ($R_2$ did not 
deliver any message from $R_1$).\footnote{With a typical TOB 
implementation, it might be impossible for $R_2$ to $\tobdeliver$ its own 
messages without the votes of $R_1$ to reach a quorum. However, as we 
have discussed earlier, we abstract away from the implementation details of the 
TOB mechanism. Crucially, no information was transferred from $R_1$ to $R_2$. 
Moreover, in a three replica system, $R_2$ could establish a majority with 
$R_3$ to finalize TOB.}
Thus, there exists an 
event $e_x \in E$ executed on $R_2$, an equivalent of the $e''_x$ event from 
$H''$, such that $\op(e_x) = \rd()$, $\lvl(e_x) = \strong$ and $\rval(e_x) = 
b$. 

Finally, we allow the network partition to heal. $R_2$ $\rbdeliver$s the 
messages from the set $\msgs_a^\mathit{RB}$, and $R_1$ $\rbdeliver$s and 
$\tobdeliver$s any outstanding messages generated by $R_2$ (naturally, $R_1$ 
$\tobdeliver$s messages in the same order as $R_2$ did). Then, we let the 
replicas reach a passive state, and in order to make our constructed execution 
fair, we extend it with infinitely many weak $\rd$ operations as we did with 
$H'$. By our base assumption, there exists an abstract execution $A = (H, \vis, 
\ar, \perc)$, such that $A \models \BEC(\weak, \FFseq) \wedge \SO(\strong) 
\wedge \BEC(\strong, \FFseq)$. There are only two $\app$ operations invoked in 
$A$ in the events $e_a$ and $e_b$. Since $\rval(e_r) = \mathit{ab}$ (which we 
have established after the Claim~\ref{claim:ab}), by $\RVAL(\weak, \FFseq)$ and 
the definition of $\FFseq$, it can be only that $e_a \raar e_b$. We also know 
that $\rval(e_x) = b$ ($e_x$ is a strong $\rd$ operation executed on $R_2$), 
which means that $e_b \ravis e_x \wedge e_a \not\ravis e_x$. By $\SO(\strong)$, 
$e_b \raar e_x \wedge e_a \not\raar e_x$, and thus $e_x \raar e_a$. Therefore, 
a cycle forms in the total order relation $\ar$: $e_a \raar e_b \raar e_x \raar 
e_a$, a contradiction. This concludes our proof for a system with two replicas.

We could easily extended our reasoning to account for any number of 
replicas $n > 2$: any additional replica $R_i$ performs an infinite number of  
read operations, in the same fashion as the replica $R_1$ or $R_2$, depending
on whether $R_i$ originally belonged to the same partition as $R_1$ or $R_2$.
\end{proof}

Since from Theorem~\ref{thm:impossibility} we know that there exists an ACT
specification $(\FF, \lvlmap)$ 
for which we cannot propose (even a specialized) 
implementation that satisfies $\BEC(\weak, \FF)$, we can formulate a more 
general result about generic ACTs:

\begin{corollary}
There does not exist a generic implementation that for arbitrary ACT 
specification $(\FF, \lvlmap)$ satisfies $\SO(\strong) \wedge 
\BEC(\strong, \FF)$ in stable runs, and $\BEC(\weak, \FF)$ 
both in asynchronous, and in stable runs.
\end{corollary}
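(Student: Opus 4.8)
The plan is to derive the corollary directly from Theorem~\ref{thm:impossibility} by contraposition. A \emph{generic} implementation is, by its very definition, one that furnishes a correct ACT for \emph{every} specification $(\FF, \lvlmap)$; in particular, instantiating it on any fixed specification must yield a (possibly specialized) implementation meeting the stated guarantees for that specification. First I would make this instantiation explicit: assume, toward a contradiction, that a generic implementation exists which, for arbitrary $(\FF, \lvlmap)$, satisfies $\SO(\strong) \wedge \BEC(\strong, \FF)$ in stable runs and $\BEC(\weak, \FF)$ in both asynchronous and stable runs.

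Next I would specialize this hypothetical generic implementation to the concrete specification $(\FFseq, \lvlmap_\mathit{seq})$ singled out in the proof of Theorem~\ref{thm:impossibility}. Since the generic implementation is required to work for arbitrary specifications, this specialization is itself an implementation of $(\FFseq, \lvlmap_\mathit{seq})$ that satisfies $\SO(\strong) \wedge \BEC(\strong, \FFseq)$ in stable runs and $\BEC(\weak, \FFseq)$ in both kinds of runs. But Theorem~\ref{thm:impossibility} asserts precisely that no such implementation---specialized or otherwise---exists for $(\FFseq, \lvlmap_\mathit{seq})$. This is the desired contradiction, so no generic implementation can exist, which establishes the corollary.

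The step I expect to require the most care is not a construction but a definitional point: the justification that a generic implementation genuinely induces a legitimate specialized implementation for each particular specification, i.e., that the class of generic implementations is a subset of the per-specification implementations quantified over in Theorem~\ref{thm:impossibility}. This hinges on the (intuitively clear but worth stating) fact that restricting a generic implementation's operation set and consistency-level map to those of $(\FFseq, \lvlmap_\mathit{seq})$ does not weaken any of the guarantees, so that every history the specialization can exhibit is already a history the generic implementation could exhibit, and is therefore covered by the impossibility. No fresh construction of executions is needed here; the entire force of the result is carried by Theorem~\ref{thm:impossibility}, and the corollary merely repackages the existential ``there is a bad specification'' into the universal ``no single implementation handles all specifications.''
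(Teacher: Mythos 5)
Your proposal is correct and takes essentially the same route as the paper: the corollary is obtained by instantiating the hypothetical generic implementation on the specific specification $(\FFseq, \lvlmap_\mathit{seq})$ from Theorem~\ref{thm:impossibility}, which the theorem rules out even for specialized implementations. Your extra remark that specializing does not weaken the guarantees is a reasonable explicit spelling-out of a step the paper leaves implicit (``even a specialized implementation''), but it introduces no new idea.
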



Theorem~\ref{thm:impossibility} shows that it is impossible to devise a system 
similar to \ACTBayou (for arbitrary $\FF$) 
that never admits temporary 
operation reordering (so satisfies $\BEC(\weak, \FF)$ instead of $\FEC(\weak, 
\FF)$). Hence, admitting temporary operation reordering is the inherent cost of 
mixing eventual and strong consistency when we make no assumptions about the 
semantics of $\FF$. Naturally, for certain replicated data types, such as 
$\FFnnc$, achieving both $\BEC(\weak, \FF)$ and $\LIN(\strong, \FF)$ is 
possible, as we show
with \ACTnnc.

In the next section we discuss several approaches that avoid temporary 
operation reordering, albeit at the cost of compromising fault-tolerance 
(e.g., by requiring all replicas to be operational), or sacrificing 
high availability (e.g., by forcing replicas to synchronize 
on weak operations).


\section{Related work} \label{sec:related_work}


\subsection{Symmetric 
models with strong operations blocking upon a single crash}

We start with \emph{symmetric} mixed-consistency models, in which all replicas 
can process both weak and strong operations and communicate directly with each 
other (thus enabling processing of weak operations within network partitions), 
but either do not enable fully-fledged strong operations (there is no 
stabilization of operation execution order) or require all replicas to 
synchronize in order for a strong operation to complete. In turn, the way these 
models bind the execution of weak and strong operations can be understood as an 
asymmetric (1--$n$) variant of quorum-based synchronization. Hence, unlike in 
ACTs, strong operations cannot complete if even a single replica cannot respond 
(due to a machine or network failure), which is a major limitation.

\emph{Lazy Replication} \cite{LLSG92} features three operation levels: 
causal, forced (totally ordered with respect to one another) and immediate 
(totally ordered with respect to all other operations). In this approach,
it is possible that two replicas execute a causal operation
$\opn_c$ and a forced operation $\opn_f$ in different orders. Since $\opn_c$ 
is required to commute with $\opn_f$, replicas will converge to the same 
state. However, the user is never certain that even after the completion of 
$\opn_f$, on some other replica no weaker operation $\opn'_c$ will be executed 
prior to $\opn_f$. Hence the guarantees provided by forced operations are 
inadequate for certain use cases, which require write stabilization, e.g., an 
auction system \cite{PBS18} (see also Section~\ref{sec:introduction}).
On the other hand, immediate operations offer stronger guarantees, but their 
implementation is based on three-phase commit \cite{S81}, and thus requires all 
replicas to vote in order to proceed.

\emph{RedBlue consistency} \cite{LPC+12} extends Lazy Replication 
(with \emph{blue} and \emph{red} operations corresponding to the causal and 
forced ones), by allowing operations to be split into (side-effect free) 
\emph{generator} and (globally commutative) \emph{shadow} operations. This 
greatly increases the number of operations which commute, but red 
operations still do not guarantee write stabilization. To overcome this 
limitation, RedBlue consistency was extended with programmer-defined 
\emph{partial order restrictions} over operations \cite{LPR18}. The proposed 
implementation, \emph{Olisipo}, relies on a 
counter-based system to synchronize conflicting operations. Synchronization can 
be either symmetric (all potentially conflicting pairs of operations must 
synchronize, which means that weak operations are not highly available any 
more) or asymmetric (all replicas must be operational for strong operations to 
complete).

The formal framework of \cite{GYF+16} can be used to express various 
consistency guarantees, including those of Lazy Replication and RedBlue 
consistency, but not as strong as, e.g., linearizability. Conflicts 
resulting from operations that do not commute are modelled through a set of 
tokens. On the other hand, in \emph{explicit consistency} \cite{BDFR+15}, 
stronger consistency guarantees are modelled through application-level 
invariants and can be achieved using multi-level locks (similar to 
readers-writer locks from shared memory).

All above models assume causal consistency (\CAC) as the base-line 
consistency criterion and thus do not account for weaker consistency 
guarantees, such as \FEC or \BEC, as our framework. \CAC is 
argued to be costly to ensure in real-life \cite{BFGH+12}, which makes our 
approach more general.

Finally, the model in \cite{BGY13} is similar to ours but treats strong 
operations as fences (barriers). It means that all replicas must vote in order 
for a strong operation to complete.

Temporary operation reordering is not possible in the models discussed above.
It is because they are either state-based (and thus their formalism
abstracts away from the operation return values which clients observe and 
interpret) and feature no write stabilization, or they require all replicas to 
vote in order to process strong operations.

\subsection{Symmetric Bayou-like 
models}

In Section~\ref{sec:bayou} we have already discussed the relationship between 
the seminal Bayou protocol \cite{TTPD+95} and ACTs. 

In \emph{eventually-serializable data service (ESDS)} \cite{FGL+96}, 
operations are executed speculatively before they are stabilized, similarly to 
Bayou. However, ESDS additionally allows a programmer to attach to an operation 
an arbitrary causal context that must be satisfied before the operation is 
executed. 
Zeno \cite{SFKR+09} is similar to Bayou 
but has been designed to tolerate Byzantine failures. 

All three systems (Bayou, ESDS, Zeno) are eventually consistent, but ensure 
that eventually there exists a single serialization of all operations, and the 
client may wait for a notification that certain operation was stabilized. 
Since these systems enable an execution of arbitrarily complex operations (as 
ACTs), they admit temporary operation reordering. 

Several researchers attempted a formal analysis of the guarantees provided by 
Bayou or systems similar to it. E.g., the authors of Zeno \cite{SFKR+09} 
describe its behaviour using I/O 
automata. In \cite{BB02} the authors analyse Bayou and explain it through a 
formal framework that is tailored to Bayou. Both of these approaches are not as 
general as ours and do not 
enable comparison of the guarantees provided by other systems.
Finally, the framework in \cite{BEH14} enables reasoning about eventually 
consistent systems that enable speculative executions and rollbacks and so also 
\ACTBayou. 
However, the framework does not formalize strong consistency models, which 
means it is not suitable for our purposes.

\subsection{Asymmetric 
models with cloud as a proxy}

Contrary to our approach, the work described below assumes an 
\emph{asymmetric} model in which external clients maintain local copies of 
primary objects that reside in a centralized (replicated) system, referred to 
as \emph{the cloud}. Clients perform weak operations on local copies and 
only synchronize with the cloud lazily or to complete strong operations. Since 
the cloud functions as a communication proxy between the clients, when it is is 
unavailable (e.g., due to failures of majority of replicas or a partition), 
clients cannot observe even each others new weak operations. Hence, this 
approach is less flexible than ours. However, since the cloud serves the role 
of a single \emph{source of truth}, conflicts between concurrent updates can be 
resolved before they are propagated to the clients, so temporary operation 
reordering is not possible.

In \emph{cloud types} \cite{BFLW12}, clients issue operations on replicated 
objects stored in the \emph{local revision} and occasionally synchronize with 
the \emph{main revision} stored in the cloud, in a way similar as in version 
control systems. The synchronization happens either eagerly or lazily, 
depending on the used mode of synchronization. The authors use \emph{revision 
consistency} \cite{BLFS12} as the target correctness criterion. In a subsequent 
work \cite{BLPF15} a \emph{global sequence protocol (GSP)} was introduced, which 
refines the programming model of cloud types, and replaces revision consistency 
with an abstract data model, as revisions and revision consistency were
deemed too complicated for non-expert users.
\emph{Global sequence consistency (GSC)} \cite{GB17} is a consistency model 
that generalizes GSP and a few other approaches that assume external clients 
that either eagerly or lazily push or pull data from the cloud.

\subsection{Asymmetric master-slave 
models}

There are systems which relax strong consistency by allowing clients 
to read stale data, either on demand (the client may forgo recency guarantees 
by choosing a weak consistency level for an operation), or depending on 
the replica localization (in a geo-replicated system the client accessing the 
nearest replica can read stale data that are pertinent to a different region). 
However, in such systems all updating operations (including the weak ones) 
must pass through the primary server designated for each particular data item. 
Thus, similarly to the \emph{asymmetric, cloud as a proxy} models, in this 
approach weak operations are not freely disseminated among the replicas. Since 
all updates (of a concrete data item) are serialized by the primary, temporary 
operation reordering is not possible.

Examples of systems which follow this design and allow users to select an 
appropriate consistency level include PNUTS \cite{CRSS+08}, Pileus 
\cite{TPKB+13}, and also the widely popular contemporary cloud data stores, 
such as AmazonDB \cite{AmazonDynamoDBConsistentReads} and CosmosDB 
\cite{CosmosDBConsistencyLevels}. Systems that guarantee strong consistency 
within a single site and causal consistency between sites include Walter 
\cite{SPAL11}, COPS \cite{LFKA11}, Eiger \cite{LFKA13} and Occult 
\cite{MLCA+17}.

\subsection{Other approaches}

Certain eventually consistent NoSQL data stores enable strongly consistent 
operations on-demand . E.g., Riak allows some data to be kept in \emph{strongly 
consistent buckets} \cite{RiakConsistencyLevels}, which is a namespace 
completely separate from the one used for data accessed in a regular, 
eventually-consistent way.
Apache Cassandra provides compare-and-set-like operations, called 
\emph{light-weight transactions (LWTs)} \cite{CassandraLWTs}, which can be 
executed on any data, but the user is forbidden from executing weakly 
consistent updates on that data at the same time. Concurrent updates and LWTs 
result in undefined behaviour \cite{CASS11000}, which means that 
mixed-consistency semantics of LWTs can be considered broken.

In Lynx \cite{ZPZS+13} and Salt \cite{XSKW+14} mixed-consistency transactions 
are translated into a chain of subtransactions, each committed at a 
different primary site. Thus such transactions can block or raise an error if 
a specific site is unavailable.


Recently some work has been published on the programming language perspective
of mixed-consistency semantics. Since this research is not directly related
to our work, we briefly discuss only a few papers. \emph{Correctables} 
\cite{GPS16} are abstractions similar to futures, that can be used to obtain 
multiple, incremental views on the operation return value (e.g., a result of a 
speculative execution of the operation and then the final return value). 
Correctables are used as an interface for the modified variants of Apache 
Cassandra and ZooKeeper \cite{ZooKeeper} (a strongly consistent system). In 
\emph{MixT} \cite{MM18} each data item is marked with a consistency level that 
will be used upon access. A transaction that accesses data marked with 
different consistency levels is split into multiple independently executed 
subtransactions, each corresponding to a concrete consistency level. The 
compilation-time code-level verification ensures that operations performed on 
data marked with weaker consistency levels do not influence the operations on 
data marked with stronger consistency levels. Understandably, the execution of 
a mixed-level transaction can be blocking. Finally, in \cite{GKNG+20} the 
authors advocate the use of the release-acquire semantics (adapted from 
low-level concurrent programming) and propose \emph{Kite}, a mixed-consistency 
key-value store utilizing this consistency model. In 
Kite weak read 
operations occasionally require inter-replica synchronization and 
block on network communication,
thus they are not highly available.




\section{Conclusions} \label{sec:conclusions}

In this paper we defined \emph{acute cloud types}, a class of replicated 
systems that aim at seamless mixing of eventual and strong consistency. ACTs 
are primarily designed to execute client-submitted operations in a 
highly available, eventually-consistent fashion, similarly to CRDTs. However, 
for tasks that cannot be performed in that way, ACTs at the same time support 
operations that require some form of distributed consensus-based 
synchronization.

We defined ACTs and the guarantees they provide in our novel framework which is 
suited for modeling mixed-consistency systems. We also proposed a new 
consistency criterion called \emph{fluctuating eventual consistency}, which 
captures a common trait of many ACTs, namely \emph{temporary operation 
reordering}. 
Interestingly, temporary operation reordering appears neither in systems 
that are purely eventually consistent (e.g., NoSQL data stores) nor purely 
strongly consistent (e.g., traditional DBMS). Moreover, it is not necessarily 
present in all ACTs, but as we formally prove, it cannot be avoided in ACTs 
that feature arbitrarily complex (but deterministic) semantics (e.g., arbitrary 
SQL transactions).


\bibliographystyle{IEEEtran}
\bibliography{bibliography}

%
%

\ifCLASSOPTIONcaptionsoff
  \newpage
\fi

\vspace{-1.225cm}
\begin{IEEEbiography}[{\includegraphics[width=1in,height=1.25in,clip,keepaspectratio]{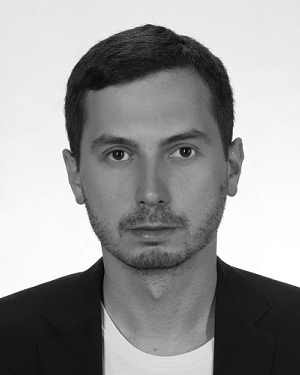}}]{Maciej Kokoci\'nski}
is currently pursuing a Ph.D. degree and working as a Research Assistant 
in the Institute of Computing Science, Poznan University of Technology, 
Poland.
In the past, he was a summer intern at 
Microsoft in Redmond and collaborated on research projects for Egnyte. His research interests 
include theory of distributed systems and transactional memory, and design of
efficient in-memory data stores.
\end{IEEEbiography}

\vspace{-1.0cm}
\begin{IEEEbiography}[{\includegraphics[width=1in,height=1.25in,clip,keepaspectratio]{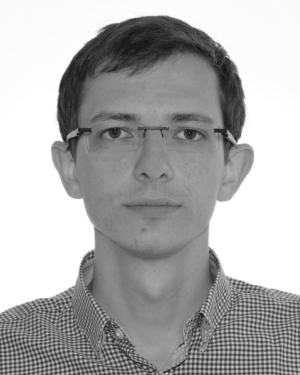}}]{Tadeusz Kobus}
obtained his Ph.D. in Computer Science from Poznan University of Technology,
Poland in 2017, where he is currently an Assistant Professor in the Institute 
of Computing Science. In the past, he was an intern at IBM T. J. Watson 
Research Center and worked on research projects for Egnyte. His research 
interests include fault tolerant distributed algorithms, concurrent 
data structures and, most recently, non-volatile memory.
\end{IEEEbiography}



\vspace{-1.15cm}
\begin{IEEEbiography}[{\includegraphics[width=1in,height=1.25in,clip,keepaspectratio]{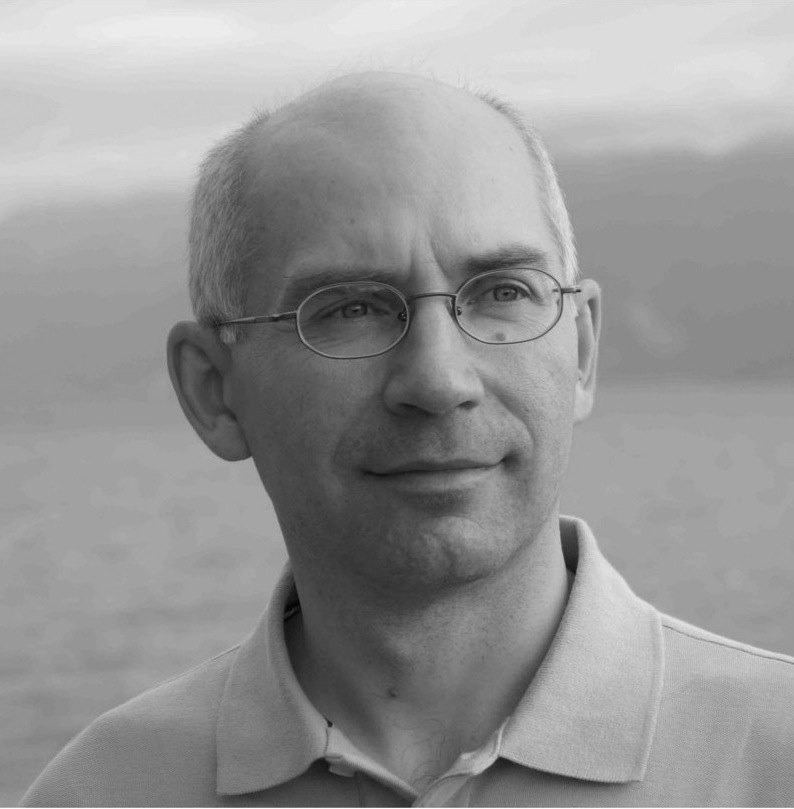}}]{Pawe{\l} T. Wojciechowski}
received the Habilitation degree from Poznan University of Technology,
Poland, in 2008, and the PhD degree in computer science from the
University of Cambridge, in 2000. He was a postdoctoral researcher
with the School of Computer and Communication Sciences, \'Ecole
Polytechnique F\'ed\'erale de Lausanne (EPFL), Switzerland, from 2001
to 2005. He is currently an associate professor with the Institute
of Computing Science, Poznan University of Technology. He has led many
research projects and coauthored dozens of papers. His research
interests span topics in concurrency, distributed computing, and programming
languages.
\end{IEEEbiography}

\balance

\clearpage

\appendices


\section{}

In this appendix we present additional material that could not be included in 
the article due to space considerations. In Section~\ref{sec:bayou:details} we
give a detailed description of the seminal Bayou protocol, and in 
Section~\ref{sec:bayou:liveness} we discuss its liveness guarantees. Next, in 
Section~\ref{sec:bayou_modified} we supplement the details on how the Bayou 
protocol can be improved to form the general-purpose ACT \ACTBayou. In 
Section~\ref{sec:state_object} we formalize the properties of the $\boxx$ 
object, the \emph{black box} component responsible for the semantics of the 
implemented data type in the algorithms of Bayou and \ACTBayou. Finally, in 
Section~\ref{sec:proofs} we provide the formal proofs of correcntess for 
\ACTnnc and \ACTBayou.

\subsection{Bayou--detailed description} \label{sec:bayou:details}

{\renewcommand\small{\footnotesize}%

\begin{algorithm*}
\caption{The Bayou protocol for replica $R_i$}
\label{alg:bayou}  
  \scriptsize
\vspace{-0.15cm}
\begin{multicols*}{2}
\begin{algorithmic}[1]
\State{\Struct $\Request$($\timestamp$ : int, $\dot$ : pair$\langle$int, int$\rangle$, \par
         \hskip 4.7em $\strongOp$ : boolean, $\opn$ : $\ops(\FF)$)}

\Operator{$<$}{$\rr : \Request$, $\rr' : \Request$}
    \State{\Return $(\rr.\timestamp, \rr.\dot) < (\rr'.\timestamp, \rr'.\dot)$}
\EndOperator

\State{\Var $\boxx$ : \StateObject} \label{alg:bayou:box}
\State{\Var $\currentEventNumber$ : int}
\State{\Var $\committed$, $\tentative$ : list$\langle\Request\rangle$}
\State{\Var $\executed$, $\toBeExecuted$, $\toBeRolledBack$ : list$\langle\Request\rangle$}
\State{\Var $\requestsAwaitingResponse$ : map$\langle\Request, \Response\rangle$}

\Upon{$\invoke$}{$\opn$ : $\ops(\FF)$, $\strongOp$ : boolean} \label{alg:bayou:invoke}
    \State{$\currentEventNumber = \currentEventNumber + 1$} \label{alg:bayou:currInc}
    \State{\Var $r = \Request(\currentTime, (i, \currentEventNumber), \strongOp, \opn)$}
    \State{$\rbcast (\ISSUE, r)$} \label{alg:bayou:rbcast}
    \State{\Call{$\adjustTentativeOrder$}{$r$}} \label{alg:bayou:adjustTentativeOrder_early}
    \State{$\requestsAwaitingResponse.\pput(r, \bot)$}
\EndUpon

\Procedure{$\adjustTentativeOrder$}{$r$ : $\Request$} \label{alg:bayou:adjustTentativeOrder}
    \State{\Var $\previous = [ x | x \in \tentative \wedge x < r ]$} \label{alg:bayou:rbdeliver_tentative_start}
    \State{\Var $\subsequent = [ x | x \in \tentative \wedge r < x ]$}    
    \State{$\tentative = \previous \cdot [ r ] \cdot \subsequent$} \label{alg:bayou:rbdeliver_tentative}
    \State{\Var $\newOrder = \committed \cdot \tentative$}
    \State{\Call{$\adjustExecution$}{$\newOrder$}}
\EndProcedure

\Upon{$\rbdeliver$}{$\ISSUE, r$ : $\Request$} \label{alg:bayou:rbdeliver}
    \If{$r.\dot.\mathit{first} = i$}\LineComment{$r$ issued locally}
        \State{\Return}
    \EndIf
    \If{$\rr \not\in \committed$}
        \State{\Call{$\adjustTentativeOrder$}{$r$}} \label{alg:bayou:adjustTentativeOrder_late}
    \EndIf
\EndUpon

\Upon{$\frbdeliver$}{$\COMMIT, r$ : $\Request$} \label{alg:bayou:tobdeliver}
     \If{$i = \primary$}
        \State{\Return}
     \EndIf
    \State{$\commit(r)$}
\EndUpon

\Procedure{$\commit$}{$r$ : $\Request$}
     \State{$\committed = \committed \cdot [ r ]$} \label{alg:bayou:tobdeliver_committed}
    \State{$\tentative = [ x | x \in \tentative \wedge x \neq r ]$} \label{alg:bayou:tobdeliver_tentative}
     \State{\Var $\newOrder = \committed \cdot \tentative$} \label{alg:bayou:tobdeliver_neworder}
     \State{\Call{$\adjustExecution$}{$\newOrder$}}
     
     \If{$\requestsAwaitingResponse.\ccontains(r) \wedge r \in \executed$}
         \State{return $\requestsAwaitingResponse.\gget(r)$ to client} 
         \State{$\requestsAwaitingResponse.\rremove(r)$}
     \EndIf
\EndProcedure

\Periodically{$\primaryCommit()$}{} \label{alg:bayou:periodically}
  \If{$i = \primary \wedge \tentative \neq []$}
    \State{\Var $[\head] \cdot \tail = \tentative$}
    \State{$\commit(\head)$}
    \State{$\frbcast (\COMMIT, \head)$}
  \EndIf
\EndPeriodically

\Procedure{$\adjustExecution$}{$\newOrder$ : list$\langle\Request\rangle$}
    \State{\Var $\inOrder =\ $longestCommonPrefix$(\executed, \newOrder)$}
    \State{\Var $\outOfOrder = [ x | x \in \executed \wedge x \not\in \inOrder ]$} \label{alg:bayou:outoforder}
    
        \State{$\executed = \inOrder$}
        \State{$\toBeExecuted = [ x | x \in \newOrder \wedge x \not\in \executed ]$} \label{alg:bayou:tobeexecuted}
        \State{$\toBeRolledBack = \toBeRolledBack \cdot \reverse(\outOfOrder)$} \label{alg:bayou:schedule_rollback}
\EndProcedure

\Upon{$\toBeRolledBack \neq []$}{}
    \State{\Var $[ \head ] \cdot \tail = \toBeRolledBack$}
    \State{$\boxx.\rollback(\head)$} \label{alg:bayou:rollback}
    \State{$\toBeRolledBack = \tail$}
\EndUpon
    
\Upon{$\toBeRolledBack = [] \wedge \toBeExecuted \neq []$}{}
    \State{\Var $[ \head ] \cdot \tail = \toBeExecuted$}
    \State{\Var $\response = \boxx.\execute(\head)$} \label{alg:bayou:execute}
    
    \If{$\requestsAwaitingResponse.\ccontains(\head)$}
        \If{$\neg \head.\strongOp \vee \head \in \committed$}
            \State{return $\response$ to client}
             \State{$\requestsAwaitingResponse.\rremove(\head)$}
        \Else
           \State{$\requestsAwaitingResponse.\pput(\head, \response)$}
        \EndIf
    \EndIf
    
    \State{$\executed = \executed \cdot [ \head ]$}
    \State{$\toBeExecuted = \tail$}
\EndUpon

\end{algorithmic}
\end{multicols*}
\vspace{-0.4cm}
\end{algorithm*}

    \begin{algorithm}[t] 
\caption{\StateObject}
\label{alg:box}  
  \footnotesize
\begin{algorithmic}[1]
\State{\Var $\db$ : map$\langle\Id, \Value\rangle$}
\State{\Var $\undoLog$ : map$\langle\Request$, map$\langle\Id, \Value\rangle\rangle$}

\Function{$\execute$}{$r$ : $\Request$}
    \State{\Var $\undoMap :\ $map$\langle\Id, \Value\rangle$}
    \For{$\instruction$ in $r.\op$}
    \State{evaluate $\instruction$}
    \If{$\instruction = \mathit{read}(\idd : \Id)$}
        \State{$read(\idd : \Id) = \db[\idd]$}
    \EndIf
    \If{$\instruction = \mathit{write}(\idd : \Id, v : \Value)$}
        \If{$\undoMap[\idd] = \bot$}
           \State{$\undoMap[\idd] = \db[\idd]$}
        \EndIf
        \State{$\db[\idd] = v$}
    \EndIf
    \If{$\instruction = \mathit{return}(\response : \Response)$}
    \State{$\undoLog[r] = \undoMap$}
    \State{\Return $\response$}
    \EndIf
    \EndFor
\EndFunction

\Function{$\rollback$}{$r$ : $\Request$}
    \State{\Var $\undoMap = \undoLog[r]$}
    \For{$(k,v) \in \undoMap$}
        \State{$\db[k] = v$}
        \State{$\undoLog = \undoLog \setminus (r, \undoMap)$}
    \EndFor
\EndFunction
\end{algorithmic}
\end{algorithm}

}

The pseudocode in Algorithm~\ref{alg:bayou} specifies the Bayou protocol for 
replica $R_i$. Replicas are independent and communicate solely by 
message passing. When a client submits an operation $\opn$ to a replica, $\opn$ 
is broadcast within a $\Request$ message using a gossip protocol. In our 
pseudocode, we use regular \emph{reliable broadcast, RB} (line 
\ref{alg:bayou:rbcast}; we say that $\opn$ has been $\rbcast$). Through the 
code in line \ref{alg:bayou:adjustTentativeOrder_early} we simulate immediate 
local $\rbdeliver$y of $\opn$. 

Each Bayou replica totally-orders all operations it knows about (executed 
locally or received through RB). In order to keep track of the total order, a 
replica maintains two lists of operations: $\committed$ and $\tentative$. The 
$\committed$ list encompasses the \emph{stabilized} operations, i.e., 
operations whose final execution order has been established by the primary. On 
the other hand, the $\tentative$ list encompasses operations whose final 
execution order has not yet been determined. The operations on the $\tentative$ 
list are sorted using the operations' timestamps (to resolve any ties, the 
replica identifiers and per replica sequence numbers are used). A timestamp is 
assigned to an operation as soon as a Bayou replica receives it from a client. 

A Bayou replica continually executes operations one by one in the order 
determined by the concatenation of the two lists: $\committed \cdot 
\tentative$ (line~\ref{alg:bayou:execute}). The replica keeps additional 
data structures, such as $\executed$ and $\toBeExecuted$, to keep track of its 
progress. An operation $\opn \in \committed$, once executed, will not be 
executed again as its final operation execution order is determined. 
On the other hand, an operation in the $\tentative$ list might be executed and 
rolled back multiple times. It is because a replica adds operations to the 
$\tentative$ list (rearranging it if necessary; lines 
\ref{alg:bayou:rbdeliver_tentative}-\ref{alg:bayou:rbdeliver_tentative_start})
as they are delivered by a gossip protocol. Hence, a replica might execute 
some operation $\opn$, and then, in order to maintain the proper execution 
order consistent with the modified $\tentative$ list, the replica might be 
forced to roll $\opn$ back (line~\ref{alg:bayou:rollback}), execute a just 
received operation $\opn'$ (which has lower timestamp than $\opn$), and execute 
$\opn$ again. 
We maintain the $\toBeRolledBack$ list of operations scheduled for rollback
(operations are kept in the order reverse to the one in which they were 
executed, line~\ref{alg:bayou:schedule_rollback}). An operation execution can 
proceed only once all the scheduled rollbacks have been performed.

One of the replicas, called the \emph{primary}, periodically \emph{commits} 
operations from its $\tentative$ list by moving them to the end of the 
$\committed$ list, thus establishing their final execution order 
(line~\ref{alg:bayou:periodically}). The primary announces the commit of 
operations by $\rbcast$ing commit messages, so that each replica can also commit 
the appropriate operations. Note that the primary uses the FIFO variant of RB to 
ensure that all replicas commit the same set of operations in the same order. 

Intuitively, the replicas converge to the same state, which is reflected by the 
$\committed \cdot \tentative$ list of operations. More precisely, when the 
stream of operations incoming to the system ceases and there are no network 
partitions (the replicas can communicate with the primary), the $\committed$ 
lists at all replicas will be the same, whereas the $\tentative$ lists will be 
empty. On the other hand, when there are partitions, some operations might not 
be successfully committed by the primary, but will be disseminated within a 
partition using RB. Then all replicas within the same partition will have the 
same $\committed$ and (non-empty) $\tentative$ lists.

Operations are executed on the $\boxx$ object (line~\ref{alg:bayou:box}), which 
encapsulates the state of the local database. At any moment, the value of 
$\boxx$ corresponds to a sequence $s$ of the already executed operations on a 
replica given, where $s$ is a prefix of $\committed \cdot \tentative$. Note
that $\boxx$ allows us to easily rollback a suffix of $s$ 
(line~\ref{alg:bayou:rollback}). We discuss the properties of the $\boxx$ 
object in more detail in Section~\ref{sec:state_object}.


Algorithm~\ref{alg:box} shows a pseudocode of a referential implementation of 
the \StateObject for arbitrary operations of any sequential data type (a 
specialized one can be used to take advantage of specific data type's 
characteristics or to enable non-sequential semantics for certain replicated 
data types which expose concurrency to the client). We assume that each 
operation can be specified as a composition of read and write operations 
on registers (objects) together with some local computation. The assumption is 
sensible, as the operations are executed locally, in a sequential manner, and 
thus no stronger primitives than registers (such as CAS, fetch-and-add, etc.) 
are necessary. The \StateObject keeps an undo log which allows it to revoke the 
effects of any operation executed so far (the log can be truncated to include 
only the operations on the $\tentative$ list).

\subsection{Liveness guarantees in Bayou} \label{sec:bayou:liveness}

Eventually consistent systems are aimed at providing high availability. It  
means that a replica is supposed to respond to a request even in the presence 
of network partitions in the system. This requirement can be differently 
formalized. In the model considered by Brewer \cite{B00}, a network partition 
can last infinitely. Then, high availability can be formalized as wait-freedom 
\cite{H91}, which means that each request is eventually processed by the system 
and the response is returned to the client. In the more commonly assumed model 
that admits only temporary network partitions (we also adopt this model, 
similarly to, e.g., \cite{BGY13} \cite{AEM15}), that requirement is not strong 
enough, since a replica could trivially just wait until the partitions are 
repaired before executing a request and responding to the client. Therefore,
in such a model the requirement of high availability must be formulated 
differently. It can be done as follows: a system is highly available if it 
executes each request in a finite number of steps even when no messages are 
exchanged between the replicas (the replica cannot indefinitely postpone 
execution of a request or returning the response to the client, see 
Section~\ref{sec:act:restrictions} for a formal definition). In this sense, 
Bayou is highly available. However, this definition of high availability
does not preclude situations in which, e.g., the number of steps the execution 
of each request takes grows over time and thus is unbounded. Hence, one could 
formulate a slightly stronger requirement, i.e., \emph{bounded} wait-freedom 
\cite{H91}, which states that there is a possibly unknown but bounded number 
of protocol steps that the replica takes before a response is returned to the 
client upon invocation of an operation. Interestingly, unlike many popular 
NoSQL data stores, such as \cite{DHJK+07} or \cite{LM10}, Bayou does not 
guarantee bounded wait-freedom even for weak operations, as we 
now demonstrate.

Consider a Bayou system with $n$ replicas, one of which, $R_s$, processes 
requests slower compared to all other replicas. Assume also that every fixed 
period of time $\Delta t$ there are $n$ new weak requests issued, one on each 
replica, and the processing capabilities of all replicas are saturated. In 
every $\Delta t$, $R_s$ should process all $n$ requests (as do other replicas), 
but it starts to lag behind, with its backlog constantly growing. Intuitively, 
every new operation invoked on $R_s$ will be scheduled for execution after all 
operations in the backlog, as they were issued with lower timestamps. Hence the 
response time will increase with every new invocation on $R_s$. One could try to 
overcome the problem of the increasing latency on $R_s$ by artificially slowing 
the clock on $R_s$, thus giving unfair priority to the operations issued on 
$R_s$, compared to operations issued on other replicas. But then any operation 
invoked on $R_s$ would appear on other replicas as an operation from a distant 
past. In turn, any such operation would cause a growing number of rollbacks on 
the other replicas.

Strong operations cannot be (bounded) wait-free simply because in order for 
them to complete, the primary must be operational, which cannot be guaranteed 
in a fault-prone environment.

Interestingly, in \ACTBayou (see Sections~\ref{sec:bayou:modified} and 
\ref{sec:bayou_modified}) the execution of weak operations is trivially bounded 
wait-free, as they are executed immediately upon their invocations.


\subsection{Bayou improved} \label{sec:bayou_modified}

As we discussed in Section~\ref{sec:bayou:modified}, we can improve the
Bayou protocol to make it more fault-tolerant and free of cicular causality,
and thus obtain \ACTBayou. In Algorithm~\ref{alg:bayou_modified} we present the 
modifications to the Algorithm~\ref{alg:bayou}, which give us \ACTBayou. Note 
that, in accordance with the ACT restrictions (see 
Section~\ref{sec:act:restrictions}) we also improve the execution of weak 
read-only (\RO) operations (since any \RO operation $\opn$ does not change the 
logical state of the $\boxx$, $\opn$ can be executed only locally\footnote{We 
assume that \StateObject features an overloaded $\execute$ function which takes 
a plain operation as an argument, instead of a $\Request$ record, when 
executing \RO operations.}).

Firstly, we use TOB in place of the primary to establish the final operation 
execution order. More precisely, every (weak, updating) operation is broadcast 
using RB (as before) as well as TOB (lines 
\ref{alg:bayou_modified:rb}--\ref{alg:bayou_modified:tob}). When a replica 
$\tobdeliver$s an operation $\opn$ (line~\ref{alg:bayou_modified:tobdeliver}), 
it stabilizes $\opn$. Since TOB guarantees that all replicas $\tobdeliver$ the 
same set of messages in the same order, all replicas will stabilize the same 
set of operations in the same order. As we have argued, TOB can be implemented 
in a way that avoids a single point of failure \cite{Lam98}.

Further changes are aimed at eliminating circular causality in Bayou
as well as improving the response time for weak operations. 
To this end (1) any strong operation is broadcast using TOB only 
(line~\ref{alg:bayou_modified:tob2}), and (2) upon being submitted, any weak 
operation is executed immediately on the current state, and then rolled back 
(lines \ref{alg:bayou_modified:execute} and \ref{alg:bayou_modified:rollback}).
It is easy to see that the modification (2) means the incoming stream of 
weak operations from other replicas cannot delay the execution of weak 
operations submitted locally. Below we argue why the two above modifications 
allow us to avoid circular causality in Bayou. 

{\renewcommand\small{\footnotesize}%

\begin{algorithm}[t]
\caption{Modifications to the Bayou protocol that produce \ACTBayou}
\label{alg:bayou_modified}  
  \scriptsize
\begin{algorithmic}[1]
\item[] \emph{// redefines the $\Request$ record}
\State{\Struct $\Request$($\timestamp$ : int, $\dot$ : pair$\langle$int, int$\rangle$, $\ctxsf$ : set$\langle$pair$\langle$int, int$\rangle\rangle$ \par
         \hskip 4.7em $\strongOp$ : boolean, $\opn$ : $\ops(\FF)$)}
         
\item[] \emph{// replaces $\invoke(...)$ and $\primaryCommit()$}
\Upon{$\invoke$}{$\opn$ : $\ops(\FF)$, $\strongOp$ : boolean} 
    \If{$\neg\strongOp \wedge \opn \in \readonlyops(\FF)$}
        \State{\Var $\response = \boxx.\execute(\opn)$} \label{alg:bayou_modified:ro_execute}
        \State{return $\response$ to client}
    \Else
        \State{$\currentEventNumber = \currentEventNumber + 1$}
        \State{\Var $\ctxsf = (\bigcup_{r \in \executed} r.\dot) \cup (\bigcup_{r \in \toBeRolledBack} r.\dot)$}
        \State{\Var $r = \Request(\currentTime, (i, \currentEventNumber), \ctxsf, \strongOp, \opn)$} \label{alg:bayou_modified:new_request}
        \If{$\neg\strongOp$}
            \State{\Var $\response = \boxx.\execute(r)$} \label{alg:bayou_modified:execute}
            \State{$\boxx.\rollback(r)$} \label{alg:bayou_modified:rollback}
            \State{$\rbcast (\ISSUE, r)$} \label{alg:bayou_modified:rb}
            \State{$\tobcast (\COMMIT, r)$} \label{alg:bayou_modified:tob}
            \State{\Call{$\adjustTentativeOrder$}{$r$}}
            \State{return $\response$ to client} \label{alg:bayou_modified:return_to_client}
        \Else
            \State{$\requestsAwaitingResponse.\pput(r, \bot)$}
            \State{$\tobcast (\COMMIT, r)$} \label{alg:bayou_modified:tob2}
        \EndIf
    \EndIf
\EndUpon

\item[] \emph{// replaces} upon $\frbdeliver(\COMMIT, r : \Request)$
\Upon{$\tobdeliver$}{$\COMMIT, r$ : $\Request$} \label{alg:bayou_modified:tobdeliver} 
    \State{$\commit(r)$} \label{alg:bayou_modified:commit}
\EndUpon
\end{algorithmic}
\end{algorithm}

}

The change (1) means that for any pair of a strong $s$ and a weak 
operation $w$, if the return value of any operation $e$ depends 
on both $s$ and $w$ ($e$ \emph{observes} $s$ and $w$), they will be observed 
in an order consistent with the final operation execution order. We prove 
it through the following observations: 
\begin{enumerate}
\item for $e$ to observe $s$, $s$ must be committed (in the modified algorithm 
$s$ never appears on the $\tentative$ list),
\item if $e$ is a strong operation, then $w$ must also be committed, 
because upon execution strong operations do not observe operations on the 
$\tentative$ list; hence both operations are observed according to their
final execution order,
\item otherwise ($e$ is a weak operation):
\begin{enumerate}
\item $w$ is updating (not \RO), because otherwise it would not logically 
impact the return value of $e$,  
\item if $w$ is already committed, it is similar to case 2,
\item if $w$ is not yet committed, $e$ will observe the operations in 
the order $s,w$; on the other hand, once $w$ is delivered by TOB and committed, 
it will appear on the $\committed$ list after $s$, and so $e$ also observes $s$ 
and $w$ in the same order $s,w$.
\end{enumerate}
\end{enumerate}

The change (2) is necessary to prevent circular causality between two (or more) 
weak operations (the case depicted in Figure~\ref{fig:tor}. It is because the 
modified algorithm executes a weak (updating) operation $\opn$ without waiting 
for the $\rbcast$/$\tobcast$ message to arrive. It means that no concurrent 
operation $\op'$ will be executed prior to the first execution of $\opn$, whose 
return value observes the client. Otherwise $\opn$ could observe $\opn'$ even 
though the final execution order is $\opn, \opn'$.

Finally, we redefine the $\Request$ record to include the \emph{execution 
context} $\ctxsf$, i.e., the identifiers of requests already executed upon the
invocation of the current operation and which have influenced the $\boxx$ 
object (those on the $\executed$ list and those on the $\toBeRolledBack$ list). 
Note that in practice such identifiers can be efficiently represented using 
\emph{Dotted Version Vectors} \cite{PBA+10}. With the augmented $\Request$ 
record the implementation of \StateObject can take advantage of the relative 
visibility between operations to achieve the non-sequential semantics of such 
replicated data types as MVRs or ORsets.

\subsection{\StateObject properties} \label{sec:state_object}

Although in Algorithm~\ref{alg:box} we present a referential implementation of 
\StateObject, in general we treat the $\boxx$ object as a black box with 
unknown implementation. The corretness of \ACTBayou depends on the properties 
of the $\boxx$ object which we formalize below.

Take the list of requests that were executed on the $\boxx$, and remove the 
requests which were rolled back; we call the resulting sequence $\alpha$ the 
current trace of the $\boxx$.\footnote{We omit weak RO operations executed in 
Algorithm~\ref{alg:bayou_modified} line~\ref{alg:bayou_modified:ro_execute}, 
which are not associated with any $\Request$ record.} Since the $\boxx$ 
encapsulates the state of the system after locally executing and revoking 
requests, we require that the $\boxx$'s responses are consistent with a 
deterministic serial execution of $\alpha$ as specified by the type 
specification $\FF$ when taking into account the relative visibility between 
requests encoded in the $\ctxsf$ field of the $\Request$ record. In case of 
any strong operation $\opn$ (in a request $r$), we assume that all requests 
$r' \in \alpha$ prior to $r$ are visible to $r$ (regardless of $\ctxsf$). 
This is because $\opn$ is executed only once $r$ is on the $\committed$ list 
and thus its position relative to all other operations is fixed and corresponds
to the TOB order.

More precisely, for any given trace $\alpha$, the $\boxx$ object 
deterministically holds the state $S_\alpha$, and for any operation 
$\opn \in \ops(\FF)$, the response of the $\boxx.\execute$ function invoked on 
the $\boxx$ object in state $S_\alpha$ equals $\FF(\opn, C_\alpha)$, where 
$C_\alpha = (E_\alpha, \op_\alpha, \vis_\alpha, \ar_\alpha)$ is a context such 
that:
\begin{itemize}
\item $E_\alpha$ consists of all the requests in $\alpha$,
\item $\op_\alpha(r) = r.\opn$, for any request $r \in E_\alpha$,
\item $\vis_\alpha$ is the visibility relation based on the $\ctxsf$ fields of 
the $\Request$ record for the weak operations and on the order in $\alpha$ for 
strong operations, i.e. for any $r,r' \in E_\alpha$ such that $r 
\raaa{\vis_\alpha} r'$:
\begin{itemize}
\item if $r'.\strongOp = \false$, then $r.\dot \in r'.\ctxsf$; 
\item if $r'.\strongOp = \true$, then $r \raaa{\ar_\alpha} r'$;
\end{itemize}
\item $\ar_\alpha$ is the enumeration of requests in $E_\alpha$ according to 
their position in $\alpha$.
\end{itemize}

In \ACTBayou, $\alpha = \executed \cdot \reverse(\toBeRolledBack)$, because:
\begin{itemize}
\item requests are executed only if $\toBeRolledBack$ is empty,\footnote{Weak 
requests are also executed in the invoke block, independently of the 
$\toBeExecuted$ and $\toBeRolledBack$ lists, but they are immediately 
afterwards rolled back, so they do not influence the trace.}
\item whenever a request is executed it is added to the $\executed$ list,
thus it is appended to the end of $\alpha$,
\item in the $\adjustExecution$ function, some requests 
move from the 
$\executed$ list to the end of the $\toBeRolledBack$ list, thus not changing 
their position in $\alpha$,
\item whenever a request is rolled back, it is removed from the head of the 
$\toBeRolledBack$ list, and thus removed from the end of $\alpha$, consistently 
with the definition of a trace.
\end{itemize}

\subsection{Proofs of correctness} \label{sec:proofs}



In this section we provide the formal proofs of correcntess for \ACTnnc and 
\ACTBayou anticipated in Section~\ref{sec:guarantees:bayou}. We start with an 
overview of proofs' structures. 

In order to prove correctness of either protocol, we take a single arbitrary 
execution of the protocol, and without making any specific assumptions about 
it, we show how the visibility and arbitration relations can be defined so that 
the appropriate correctness guarantees can be proven. Below we briefly outline 
our approach.

In both \ACTnnc and \ACTBayou, strong operations are disseminated solely by 
TOB, and weak updating operations are sent using both RB and TOB. On the other 
hand weak RO operations are executed completely locally and do not involve any 
network communication (strong RO operations are present only in \ACTBayou and 
are treated as regular strong operations). Thus, in the proofs, for the purpose 
of constructing the arbitration relation ($\ar$), we order all updating (strong 
or weak) operations based on the order of the delivery of their respective 
messages broadcast using TOB. In the case of updating operations whose messages 
were not $\tobdeliver$ed (which can happen in the asynchronous runs), we order 
them in $\ar$ after all the operations whose messages were $\tobdeliver$ed. 
Their relative order can be arbitrary in \ACTnnc, and in \ACTBayou it has to 
conform to the order imposed by the $\Request$ records. Finally, 
for completeness, $\ar$ needs to include also weak RO operations. We carefully 
interleave them with updating operations in such a way to guarantee no circular 
causality as well as equivalence between visibility and arbitration for strong 
operations.

We construct the visibility relation ($\vis$) by choosing for any two events 
$e, e'$ whether one should be observed by the other. We include an edge $e 
\ravis e'$ under two, broad conditions: the edge is \emph{essential}, i.e., $e$ 
could have influenced the return value of $e'$, or the edge is 
\emph{non-essential}, i.e., $e$ could not have influenced the return value of 
$e'$ (because, e.g., $e$ is an RO operation), but $e$ occurs before $e'$ in 
real-time or arbitration. Non-essential edges are important to guarantee 
eventual visibility for all events.

Now let us make some observations regarding network properties during 
synchronous and asynchronous runs. Since we consider infinite fair executions, 
in both types of runs each message $\rbcast$ is guaranteed to be $\rbdeliver$ed 
by each replica. On the other hand, the same delivery guarantee, but for 
messages $\tobcast$, holds only in the stable runs, and in the asynchronous 
runs, some messages can be $\tobdeliver$ed while others may remain pending. 
However, asynchronous runs still obey other guarantees, which means that, 
crucially, no messages $\tobcast$ will be $\tobdeliver$ed by any replica out of 
order. Moreover, if some message was $\tobdeliver$ed by one replica, then it 
will be $\tobdeliver$ed by all replicas. Also, if one replica manages to 
$\tobcast$ infinitely many messages which are then $\tobdeliver$ed, then each 
replica can succesfully $\tobcast$ and $\tobdeliver$ its messages. Thus, in the 
asynchronous runs, we expect a finite number of $\tobcast$ messages to be 
$\tobdeliver$ed, while all other to remain pending.

For each event $e$ let us denote by $\msg_\TOB(e)$ and $\msg_\RB(e)$, 
respectively, the message $\tobcast$ in the event $e$ and the message $\rbcast$ 
in the event $e$ (both $\msg_\TOB(e)$ and $\msg_\RB(e)$ can be undefined for a 
given event $e$, denoted $\msg_\TOB(e) = \bot$ or $\msg_\RB(e) = \bot$). For 
any two events $e, e'$, such that $\msg_\TOB(e) = m$, $\msg_\TOB(e') = m'$ and 
$\tobNo(m) < \tobNo(m')$ we introduce the following notation: $e \raaa{\tobNo} 
e'$, which defines the \emph{$\tobNo$ order} (based on the $\tobNo$ function). 
Additionally, for any two events $e, e'$, such that $\msg_\TOB(e) = m$ (or 
respectively $\msg_\RB(e) = m$), we write $e \raaa{\TOBdel} e'$ ($e 
\raaa{\RBdel} e'$), if $e'$ executes on a replica that has $\tobdeliver$ed 
($\rbdeliver$ed) $m$ prior to its execution.

Finally, let us observe that we model replicas as deterministic state machines 
(as discussed in Section~\ref{sec:act:model}), whose specification we give 
through pseudocode. The variables declared in the algorithms of \ACTnnc and 
\ACTBayou represent the state of the replicas, while the code blocks represent 
atomic steps that transition the replicas from one state to another.
It means that each such block executes completely before any of its effects 
become visible. This allows us to infere the following rule (in both \ACTnnc 
and \ACTBayou) for weak operations which execute in one atomic transition in 
some event $e$, which is either in the $\TOBdel$ or $\RBdel$ relation with any 
other event $e'$: $\lvl(e) = \weak \wedge (e \raaa{\TOBdel} e' \vee e 
\raaa{\RBdel} e') \Rightarrow e \rarb e'$ ($e$ \emph{returns before} $e'$).

\subsubsection{ANNC correcntess proofs}

Let us proceed with the proof of the guarantees offered by \ACTnnc in the 
stable runs.

\actnncstable*

\begin{proof}
For any given arbitrary stable run of \ACTnnc represented by a history $H = 
(E, \op, \rval, \rb, \sss, \lvl)$ we have to find suitable $\vis$, $\ar$ and 
$\perc$, such that $A = (H, \vis, \ar, \perc)$ is such that $A \models 
\BEC(\weak, \FFnnc) \wedge \LIN(\strong, \FFnnc)$.

\noindent \textbf{Additional observations.}
Note that each $\subtractit$ operation executed in some event $e$ finishes when 
the replica $\tobdeliver$s the message $m = \msg_\TOB(e)$. It means that for 
every operation executed in event $e'$, such that $e \rarb e'$, if 
$\msg_\TOB(e') = m'$ ($m' \neq \bot$), then $\tobNo(m) < \tobNo(m')$.

\noindent \textbf{Arbitration.}
We construct the total order relation $\ar$ by sorting all updating events 
(additions and subtractions) based on the order in which their respective 
$\tobcast$ messages are $\tobdeliver$ed, i.e., respecting the $\tobNo$ order.

Next, we interleave the updating events with RO events (gets) in the following 
way: each such an RO event $e$ occurs in $\ar$ after the last subtract event 
$e'$ such that $e \not\rarb e'$. Thus, for each subtract event $e'$ the 
following holds $e \raar e' \Rightarrow e \rarb e'$. The relative order of RO 
operations is irrelevant.

As \ACTnnc does not feature operation reordering, for each event $e$ we simply 
let $\perc(e) = \ar$. 

\noindent \textbf{Visibility.}
For any two events $e, e' \in E$, we include an edge $e \ravis e'$ in our 
construction of $\vis$, if:
\begin{enumerate}
\item \label{enum:actvis:tob}
$\op(e) = \addit(v)$ or $\op(e) = \subtractit(v)$, $\op(e') = \subtractit(v')$ 
and $e \raaa{\tobNo} e'$,
\item \label{enum:actvis:subget}
$\op(e) = \subtractit(v)$, $\op(e') = \getit$ and $e \raaa{\TOBdel} e'$,
\item \label{enum:actvis:addgettob}
$\op(e) = \addit(v)$, $\op(e') = \getit$, and $e \raaa{\TOBdel} e'$,
\item \label{enum:actvis:addgetrb}
$\op(e) = \addit(v)$, $\op(e') = \getit$, and $e \raaa{\RBdel} e'$,
\item \label{enum:actvis:getget}
$\op(e) = \getit$, $\op(e') = \getit$ and $e \rarb e'$,
\item \label{enum:actvis:getsub}
$\op(e) = \getit$, $\op(e') = \subtractit(v')$ and $e \raar e'$,
\item \label{enum:actvis:add}
$\op(e') = \addit(v')$ and $e \rarb e'$,
\end{enumerate}
(for some $v,v' \in \mathbb{N}$).

The edges~\ref{enum:actvis:tob}-\ref{enum:actvis:addgetrb} are essential, while
the edges~\ref{enum:actvis:getget}-\ref{enum:actvis:add} are non-essential.
The updates that are visible to a $\subtractit$ operation depends solely on the 
$\tobNo$ order, while in case of a $\getit$ operation, the $\TOBdel$ and 
$\RBdel$ relations play a role. It does not matter which updates are visible to 
an $\addit$ operation because it always responds with a simple $\ok$ 
acknowledgment, hence the edge~\ref{enum:actvis:add} is non-essential.

Note that in case of 
edges~\ref{enum:actvis:addgettob}-\ref{enum:actvis:addgetrb}, $e \rarb e'$ is 
implied (see the general observations in Section~\ref{sec:proofs}), and in case 
of the edge~\ref{enum:actvis:getsub}, $e \rarb e'$ follows directly from the 
construction of $\ar$. Thus, for all 
edges~\ref{enum:actvis:addgettob}-\ref{enum:actvis:add}, $e \rarb e'$.

Having defined $A$ (through $\vis$, $\ar$ and $\perc$), it now remains to show
that $A \models \BEC(\weak, \FFnnc) \wedge \LIN(\strong, \FFnnc)$, or more 
specifically $A \models \EV(\weak) \wedge \EV(\strong) \wedge \NCC(\weak) 
\wedge \NCC(\strong) \wedge \RVAL(\weak) \wedge \RVAL(\strong) \wedge 
\SO(\strong) \wedge \RT(\strong)$.

\noindent \textbf{Eventual visibility.}
We prove now that eventual visibility is satisfied for all events:
\begin{itemize}
\item each $\addit$ or $\subtractit$ event $e$ is visible to all subsequent 
$\subtractit$ events from some point, because there is only a finite number of 
updating events $e'$ such that $e \not\raaa{\tobNo} e'$ 
(\ref{enum:actvis:tob}),
\item each $\addit$ or $\subtractit$ event $e$ is visible to all subsequent 
$\getit$ events from some point, because both $\msg_\RB(e)$ and $\msg_\TOB(e)$ 
are eventually delivered on all replicas, (\ref{enum:actvis:subget}, 
\ref{enum:actvis:addgettob} and \ref{enum:actvis:addgetrb}),
\item each $\getit$ event $e$ is visible to all subsequent $\getit$ events from 
some point (\ref{enum:actvis:getget}),
\item each $\getit$ event $e$ is visible to all subsequent $\subtractit$ events
from some point, because by construction of $\ar$ there is only a finite number 
of events $e'$ such that $e \not\raar e'$ (\ref{enum:actvis:getsub}),
\item each event is visible to all subsequent $\addit$ events from some point
(\ref{enum:actvis:add}).
\end{itemize}

\noindent \textbf{No circular causality.}
We need to show that $\acyclic(\hb \cap (W \times W))$ and $\acyclic(\hb \cap 
(S \times S))$, where $W \subseteq E, S \subseteq E$, are, respectively, the 
sets of all weak, and strong events. We elect to prove a more general case of 
$\acyclic(\hb)$.

Recall that $\hb = (\vis \cup \so)^+$. If $\acyclic(\vis \cup \so)$, then 
$\acyclic(\hb)$, because transitive edges cannot introduce cycles. Thus, we 
have eight types of edges to consider: 
edges~\ref{enum:actvis:tob}-\ref{enum:actvis:add} from $\vis$ and the eight
edge $e \raso e'$. We divide them into two groups: the first one consists of
edges~\ref{enum:actvis:tob}-\ref{enum:actvis:subget}, while the second one
consists of edges~\ref{enum:actvis:addgettob}-8. Note that for the second group
$e \rarb e'$ always holds.

There can be no cycles when we restrict the edges only to the ones from the 
first group, as edge~\ref{enum:actvis:tob} is constrained by the $\tobNo$ 
order, and edge~\ref{enum:actvis:subget} leads to a $\getit$ event which cannot 
be followed using only edges from the first group.

Also, there can be no cycles when we restrict the edges only to the ones from 
the second group, as all the edges are constrained by the $\rb$ relation, which 
is naturally acyclic.

Thus, a potential cycle could only form when we mix edges from both groups.
Let us assume that the cycle contains the following chain of edges: $a 
\raaa{\hb} b \raaa{\hb} ... \raaa{\hb} c \raaa{\hb} ... \raaa{\hb} d$, where 
$a,b,c,d \in E$, all the edges between $b$ and $c$ belong to the second group, 
while the other ones belong to the first group. Notice that $b \rarb c$, and
that $\op(a), \op(c) \in \{\addit(v) : v \in \mathbb{N}\} \cup \{\subtractit(v) 
: v \in \mathbb{N}\}$ while $\op(b), \op(d) \in \{\subtractit(v) : v \in 
\mathbb{N}\} \cup \{\getit\}$. Thus, the chain consists of a series of edges 
from the first group and a series of edges from the second group. The whole 
cycle can be combined from multiple such chains, but for simplicity, let us 
assume that it contains only one such chain and that $d = a$ (the same 
reasoning as below can be applied iteratively for multiple interleavings of 
edges from the two groups).

If $\op(b) = \subtractit(v)$, for some $v \in \mathbb{N}$, then $a 
\raaa{\tobNo} b$ (edge~\ref{enum:actvis:tob}), and since $b \rarb c$, also $b 
\raaa{\tobNo} c$ (see the additional observations in the begining of the 
proof). A contradiction: $a \raaa{\tobNo} b \raaa{\tobNo} c \raaa{\tobNo} a$.

If $\op(b) = \getit$, then $\op(a) = \subtractit(v)$, for some $v \in 
\mathbb{N}$, and $a \raaa{\TOBdel} b$ (edge~\ref{enum:actvis:subget}). Either 
$a \raaa{\tobNo} c$, or $c \raaa{\tobNo} a$. In the former case we end up with 
a similar contradiction as above: $a \raaa{\tobNo} c \raaa{\tobNo} a$. In the 
latter case, since $c \raaa{\tobNo} a$, also $c \raaa{\TOBdel} b$ (the message 
$\msg_\TOB(c)$ is $\tobdeliver$ed before the message $\msg_\TOB(a)$). However, 
$b \rarb c$, which means that the message $\msg_\TOB(c)$ was not even 
$\tobcast$ yet when $b$ executed. A contradiction.

\noindent \textbf{Return value consistency.}
We need to show that for each event $e \in E$: $\rval(e) = \FFnnc(\op(e), 
\context(A, e))$. We base our reasoning below on essential $\vis$ edges and 
$\ar$ order.

Trivially, the condition is satisfied for all $\addit$ events, which always 
return $\ok$. For all $\subtractit$ and $\getit$ events, we can exclude from
$\context(A, e)$ all $\getit$ events which by the definition of an RO 
operation are irrelevant for the computation of $\FFnnc$.

In case of a $\subtractit(v)$ operation, for some $v \in \mathbb{N}$, executed 
in some event $e$, $\context(A, e)$ includes all the $\addit$ and $\subtractit$ 
events that precede $e$ in the $\tobNo$ order. When applying the $\foldr$ 
function from the definition of $\FFnnc$, these $\addit$ and $\subtractit$ 
operations are processed one by one, in the order of their $\tobdeliver$y (by 
construction of $\ar$). Each $\addit(v)$ operation increases the accumulator by 
$v$, and each $\subtractit(v)$ operation decreases the accumulator by $v$, but 
only if it is greater or equal $v$. This matches the pseudocode 
(lines~\ref{alg:actpc:tobinc} and 
\ref{alg:actpc:tobdecstart}-\ref{alg:actpc:tobdecend}) with the accumulator 
corresponding to the difference between $\si$ and $\sd$ variables. Thus, the 
computed value of the $\foldr$ function corresponds to the difference between 
$\si$ and $\sd$ variables at the time the response to $e$ is computed in 
line~\ref{alg:actpc:subcalcstart}. If that value is greater or equal $v$ then 
$\true$ is returned, which matches the pseudocode's behaviour.

In case of a $\getit$ operation executed in some event $e$, $\context(A, e)$
includes all the $\addit$ and $\subtractit$ events that were $\tobdeliver$ed 
before the execution of $e$, as well as, (possibly) some $\addit$ events which 
were not $\tobdeliver$ed, but only $\rbdeliver$ed before the execution of $e$. 
Note that all the latter $\addit$ events are ordered according to $\ar$, after 
all the former $\addit$ and $\subtractit$ events (have they had been ordered 
earlier due to lower $\tobNo$ value of their respective $\tobcast$ message, 
they would also be $\tobdeliver$ed). When processing the $\foldr$ function up 
to the last $\tobdeliver$ed event, the value of the accumulator corresponds, 
similarly as in case of $\subtractit$ events above, to the difference between 
$\si$ and $\sd$ variables. Then, when processing the remaining $\addit$ events 
the final computed value of the $\foldr$ function grows by an amount $V$, which 
is equal to the sum of all these $\addit$ operations' arguments. Due to the 
fact that each $\tobdeliver$ed message is first $\rbdeliver$ed or is processed 
as if it were $\rbdeliver$ed
(lines~\ref{alg:actpc:tobrbinvstart}-\ref{alg:actpc:tobrbinvend}), the value of 
$\wi$ is always greater or equal $\si$. The difference between $\wi$ and $\si$ 
variables corresponds exactly to $V$, because it includes events which were 
$\rbdeliver$ed, but not $\tobdeliver$ed. Thus, the computed value of 
$\FFnnc(\getit, \context(A, e))$ equals $\si - \sd + V = \si - \sd + \wi - \si 
= \wi - \sd$ at the time of executing $e$, which matches $\rval(e)$.

\noindent \textbf{Single order.}
Since there are no pending $\subtractit$ operations (because eventually every 
message is $\tobdeliver$ed and the operations finish), we have to simply prove 
that $\vis \cap (E \times S) = \ar \cap (E \times S)$, where $S = \{e : \lvl(e) 
= \strong\}$. In other words, for any two events $e \in E, e' \in S$: $e \ravis 
e' \iff e \raar e'$.

Let us begin with $e \ravis e' \Rightarrow e \raar e'$. Either $e 
\raaa{\tobNo} e'$ (edge~\ref{enum:actvis:tob}), or $\op(e) = \getit$ 
(edge~\ref{enum:actvis:getsub}). In both cases $e \raar e'$.

Now let us consider $e \raar e' \Rightarrow e \ravis e'$. Either $\op(e) \in 
\{\addit(v) : v \in \mathbb{N}\} \cup \{\subtractit(v) : v \in \mathbb{N}\}$, 
or $\op(e) = \getit$. In the former case, $e \raaa{\tobNo} e'$, and thus $e 
\ravis e'$ (edge~\ref{enum:actvis:tob}). In the latter case, $e \rarb e'$ (by 
construction of $\ar$), and thus $e \ravis e'$ (edge~\ref{enum:actvis:getsub}).

\noindent \textbf{Real-time order.}
We need to show that arbitration order respects the real-time order of strong 
operations, i.e., $\rb \cap (S \times S) \subseteq \ar$. In other words, for 
any two $e,e' \in S$: $e \rarb e' \Rightarrow e \raar e'$.

Clearly, if $e \rarb e'$, then $e \raaa{\tobNo} e'$ (see the additional 
observations in the begining of the proof). Thus, $e \raar e'$ (by construction 
of $\ar$).
\end{proof}

Now, let us continue with the proof of the guarantees offered by \ACTnnc in the 
asynchronous runs.

\actnncasynchronous*

\begin{proof}
To show the inability of \ACTnnc to satisfy $\LIN(\strong, \FFnnc)$ in 
asynchronous runs, it is sufficient to observe that due to some of the 
$\tobcast$ messages not being $\tobdeliver$ed, some of the $\subtractit$ 
operations remain pending. A pending operation's return value equals $\nabla$ 
which is unreconcilable with the requirements of the predicate $\RVAL(\FFnnc)$.

The proof regarding the guarantees of the weak operations is similar to the one 
for the stable runs, thus we rely on it and focus only on differences between 
stable and asynchronous runs that need to be addressed. Now for any given
arbitrary asynchronous run of \ACTnnc represented by a history $H = (E, \op, 
\rval, \rb, \sss, \lvl)$ we have to find suitable $\vis$, $\ar$ and $\perc$, 
such that $A = (H, \vis, \ar, \perc)$ is such that $A \models \BEC(\weak, 
\FFnnc)$.

\noindent \textbf{Arbitration.}
We construct the total order relation $\ar$ by sorting all updating events 
(additions and subtractions) based on the order in which their respective 
$\tobcast$ messages are $\tobdeliver$ed, i.e., respecting the $\tobNo$ order. 
Updating events whose messages are not $\tobdeliver$ed are ordered after those 
whose messages are $\tobdeliver$ed.

Next, we interleave the updating events with RO events (gets) in the following 
way: each such an RO event $e$ occurs in $\ar$ after the last 
\emph{non-pending} subtract event $e'$ such that $e \not\rarb e'$. Thus, for 
each non-pending subtract event $e'$ the following holds $e \raar e' 
\Rightarrow e \rarb e'$. The relative order of RO operations is irrelevant.

As \ACTnnc does not feature operation reordering, for each event $e$ we simply 
let $\perc(e) = \ar$. 

\noindent \textbf{Visibility.}
We construct the visibility relation in the same way as in the stable 
runs case. However, we remove edges to and from pending $\subtractit$ events. 
Since pending operations do not provide a return value, no edge to a pending 
event is essential. Also, as we guarantee only eventual visibility for weak 
events, edges to $\subtractit$ events are not necessary to satisfy 
$\EV(\weak)$. Moreover, edges \emph{from} pending events are not needed either, 
because by definition a pending event is never followed in $\rb$ by any other 
event (which is a requirement to fail the test for $\EV$). Again, for all 
edges~\ref{enum:actvis:addgettob}-\ref{enum:actvis:add}, $e \rarb e'$.

Having defined $A$ (through $\vis$, $\ar$ and $\perc$), it now remains to show
that $A \models \BEC(\weak, \FFnnc)$, or more specifically $A \models 
\EV(\weak) \wedge \NCC(\weak) \wedge \RVAL(\weak)$.

\noindent \textbf{Eventual visibility.}
We prove now that eventual visibility is satisfied for all \emph{weak} events:
\begin{itemize}
\item each $\addit$ or \emph{non-pending} $\subtractit$ event $e$ is visible to 
all subsequent $\getit$ events from some point, because $\msg_\RB(e)$ or 
$\msg_\TOB(e)$ are eventually delivered on all replicas 
(\ref{enum:actvis:subget}, \ref{enum:actvis:addgettob} and 
\ref{enum:actvis:addgetrb}),
\item each $\getit$ event $e$ is visible to all subsequent $\getit$ events from 
some point (\ref{enum:actvis:getget}),
\item each \emph{non-pending} event is visible to all subsequent $\addit$ 
events from some point (\ref{enum:actvis:add}).
\end{itemize}

\noindent \textbf{No circular causality.}
We use exactly the same reasoning as in the stable runs case to show that 
$\acyclic(\hb)$ holds true.

\noindent \textbf{Return value consistency.}
Again, we use exactly the same reasoning as in the stable runs case to show
that for each \emph{weak} event $e \in E$: $\rval(e) = \FFnnc(\op(e), 
\context(A, e))$. Although this time we only need to prove return value 
consistency for $\addit$ and $\getit$ operations, it can be shown that it also 
holds for \emph{non-pending} subtract events.

\end{proof}

\subsubsection{\ACTBayou correcntess proofs}

The proofs for \ACTBayou are analogous to those for \ACTnnc, but are slighly 
more complex due to operation reordering and the more general nature of 
\ACTBayou with unconstrained operations' semantics (in contrast \ACTnnc 
features weak updating operations that always return $\ok$). Because we strive 
in this section for self-contained proofs we do not refer to the proofs for 
\ACTnnc even when doing so would allow us to omit some repetitions.

We begin with the proof of guarantees offered by \ACTBayou in the stable runs.

\bayoustable*

\begin{proof}
For any given arbitrary stable run of \ACTBayou represented by a history $H = 
(E, \op, \rval, \rb, \sss, \lvl)$ we have to find suitable $\vis$, $\ar$ and 
$\perc$, such that $A = (H, \vis, \ar, \perc)$ is such that $A \models 
\FEC(\weak, \FF) \wedge \LIN(\strong, \FF)$.

\noindent \textbf{Additional observations.}
All events besides weak RO ones, have an associated unique $\Request$ record 
which is disseminated using $\rbcast$ and $\tobcast$; let us denote by 
$\req(e)$ the $\Request$ record of the event $e$.\footnote{Thus a trace of the 
$\boxx$ object, which consists of such records, can be translated into a 
sequence of events.} Since, the handling of weak RO events, which are 
\emph{local} to a replica, differ significantly from other events, which are 
\emph{shared}, we divide the set of all events $E$ into two subsets: $\Psi 
\subseteq E$, consisting of weak updating, strong updating and strong RO 
events; and $\Omega \subseteq E$, consisting of weak RO events. We also further
divide $\Psi$ into subsets $\Psi_w$ and $\Psi_s$, consisting of, respectively, 
weak and strong events.


Upon $\tobdeliver$y of a $\COMMIT$ message, the receiced request $r$ is 
\emph{committed}
(Algorithm~\ref{alg:bayou_modified} line~\ref{alg:bayou_modified:commit}), i.e.,
it is appended at the end of the $\committed$ list, and removed from the 
$\tentative$ list (if present there). Note that the position of $r$ established 
on the $\committed$ list never changes as the list is only appended to. Once 
the request is committed, the operation associated with the request is 
eventually executed (unless the request was already executed in the order 
consistent with the commit order) and then the request is never rolled back. 
This is so, because:
\begin{itemize}
\item the $\committed$ list is included in the $\newOrder$ list as a prefix in 
the $\commit$ procedure
(Algorithm~\ref{alg:bayou} line~\ref{alg:bayou:tobdeliver_neworder}), 
\item until the request $r$ executes it has to feature on the list 
$\toBeExecuted$ (Algorithm~\ref{alg:bayou} line~\ref{alg:bayou:tobeexecuted}) 
and there can be only a finite number of items preceding it on that list, 
\item the $\toBeRolledBack$ list cannot grow indefinitely without executing 
some of the requests from the $\toBeExecuted$ list, which means that $r$ is
eventually executed (Algorithm~\ref{alg:bayou} line~\ref{alg:bayou:execute}),
\item and finally a request which is included in both the $\committed$ and 
$\executed$ lists is never part of the $\outOfOrder$ list 
(Algorithm~\ref{alg:bayou} line~\ref{alg:bayou:outoforder}), which means it 
will not be scheduled for rollback. 
\end{itemize}

Weak operations execute atomically in the invoke code block where the 
response is always returned immediately to the client.\footnote{If due to 
operation reexecutions multiple responses are returned to the client we discard 
the additional ones.} For a given weak event $e$ the response is computed on 
the $\boxx$ object in some state $S_\alpha$, where $\alpha$ is the current 
trace of the $\boxx$ object at the time of the operation's invocation. We let 
$\trace(e)$ denote the trace $\alpha$.

On the other hand, strong operations follow a more complicated route. For a 
strong event $e$: firstly the $\COMMIT$ message is $\tobcast$, then upon 
its $\tobdeliver$y the request $r = \req(e)$ is committed. Since $r$ is not 
disseminated using $\rbcast$, it is never included in the $\tentative$ list, and 
so it executes for the first time after its commit. Thus, each strong operation 
is executed on each replica exactly once, on a $\boxx$ object in some state 
$S_\alpha$, where $\alpha$ is the current trace of the $\boxx$ object at the 
time of the execution. Note that the trace $\alpha$ is exactly the same on each 
replica and it consists exactly of all the requests preceding $\req(e)$ in the 
$\committed$ list (which due to the properties of $\tobdeliver$y has the same 
value on each replica upon $r$'s commit). Again, as in case of weak events, we 
let $\trace(e)$ denote the trace $\alpha$.

Note that each strong operation executed in some event $e$ finishes only after 
the replica $\tobdeliver$s the message $m = \msg_\TOB(e)$. It means that for 
every operation executed in event $e'$, such that $e \rarb e'$, if 
$\msg_\TOB(e') = m'$ ($m' \neq \bot$), then $\tobNo(m) < \tobNo(m')$.

\noindent \textbf{Arbitration.}
We construct the total order relation $\ar$ by sorting all shared events based 
on the order in which their respective $\tobcast$ messages are $\tobdeliver$ed, 
i.e., respecting the $\tobNo$ order.

Next, we interleave the shared events with local events in the following way: 
each local event $e$ occurs in $\ar$ after the last shared event $e'$ such that 
$e \not\rarb e'$. Thus, for each shared event $e'$ the following holds $e \raar 
e' \Rightarrow e \rarb e'$. The relative order of local operations is 
irrelevant.

We construct the perceived arbitration order $\perc(e)$, for each event $e$, 
using the trace $\alpha = \trace(e)$. More precisely, we add all the events 
whose requests appear in $\alpha$ in the order of occurence, next we add all 
the remaining shared events according to their order in $\ar$. Finally, we 
interleave the constructed sequence with local events in a similar way as in 
case of $\ar$, i.e., for each local event $f$ and each shared event $g$, the 
following holds $f \rapar{e} g \Rightarrow f \rarb g$.

Note that for a strong event $e$, $\perc(e) = \ar$. This is because $e$ 
executes once $\req(e)$ is on the $\committed$ list, and its position on the 
list is determined by the $\tobNo$ order, which means that the trace $\alpha$ 
contains exactly all the shared events preceding $e$ in $\ar$.

\noindent \textbf{Visibility.}
For any two events $e, e' \in E$, such that $\trace(e') = \alpha$, we include 
an edge $e \ravis e'$ in our construction of $\vis$, if:
\begin{enumerate}
\item \label{enum:bayouvis:wustrongstrong}
$e \in \Psi$, $e' \in \Psi_s$, and $\req(e) \in \alpha$,
\item \label{enum:bayouvis:strongwu}
$e \in \Psi_s$, $e' \in \Psi_w$, and $\req(e) \in \alpha$,
\item \label{enum:bayouvis:strongwro}
$e \in \Psi_s$, $e' \in \Omega$, and $\req(e) \in \alpha$,
\item \label{enum:bayouvis:wuweak}
$e \in \Psi_w$, $e' \in \Psi_w \cup \Omega$, and $\req(e) \in \alpha$,
\item \label{enum:bayouvis:wrowro}
$e, e' \in \Omega$, and $e \rarb e'$,
\item \label{enum:bayouvis:wrowustrong}
$e \in \Omega$, $e' \in \Psi$, and $e \raar e'$.
\end{enumerate}

The edges~\ref{enum:bayouvis:wustrongstrong}-\ref{enum:bayouvis:wuweak} are 
essential, while the 
edges~\ref{enum:bayouvis:wrowro}-\ref{enum:bayouvis:wrowustrong} are 
non-essential.

Note that in case of edge~\ref{enum:bayouvis:wuweak}, either $e \raaa{\TOBdel} 
e'$, or $e \raaa{\RBdel} e'$, and thus $e \rarb e'$ is implied (see the general 
observations in Section~\ref{sec:proofs}). Thus, for all 
edges~\ref{enum:bayouvis:wuweak}-\ref{enum:bayouvis:wrowustrong}, $e \rarb e'$.

Additionally, observe that in case of edge~\ref{enum:bayouvis:wustrongstrong}, 
$e \raaa{\TOBdel} e'$, because $\alpha$ contains only requests on the 
$\committed$ list (see the additional observations in the beginning of the 
proof), and thus $e \raaa{\tobNo} e'$. Similarly, in case of 
edges~\ref{enum:bayouvis:strongwu} and \ref{enum:bayouvis:strongwro}, $e 
\raaa{\TOBdel} e'$, because $\msg_\RB(e) = \bot$ and thus $\req(e)$ can appear 
in $\alpha$ only if it was $\tobdeliver$ed by the replica executing $e'$. Also 
in case of edge~\ref{enum:bayouvis:strongwu}, $e \raaa{\tobNo} e'$.

Having defined $A$ (through $\vis$, $\ar$ and $\perc$), it now remains to show
that $A \models \FEC(\weak, \FF) \wedge \LIN(\strong, \FF)$, or more 
specifically $A \models \EV(\weak) \wedge \EV(\strong) \wedge \NCC(\weak) 
\wedge \NCC(\strong) \wedge \FRVAL(\weak) \wedge \RVAL(\strong) \wedge 
\CPERC(\weak) \wedge \SO(\strong) \wedge \RT(\strong)$.

\noindent \textbf{Eventual visibility.}
We prove now that eventual visibility is satisfied for all events:
\begin{itemize}
\item each shared event $e$ is visible to all subsequent events from some 
point, because $\msg_\TOB(e)$ is eventually $\tobdeliver$ed and $r = \req(e)$ 
is placed on the $\committed$ list on each replica, thus $r$ is eventually 
executed and never rolled back, and is included in the trace of the $\boxx$ 
object from some point (\ref{enum:bayouvis:wustrongstrong}, 
\ref{enum:bayouvis:strongwu}, \ref{enum:bayouvis:strongwro} and 
\ref{enum:bayouvis:wuweak}),
\item each local event $e$ is visible to all subsequent local events from 
some point (\ref{enum:bayouvis:wrowro}),
\item each local event $e$ is visible to all subsequent shared events
from some point, because by construction of $\ar$ there is only a finite number 
of events $e'$ such that $e \not\raar e'$ (\ref{enum:bayouvis:wrowustrong}).
\end{itemize}

\noindent \textbf{No circular causality.}
We need to show that $\acyclic(\hb \cap (W \times W))$ and $\acyclic(\hb \cap 
(S \times S))$, where $W \subseteq E, S \subseteq E$, are, respectively, the 
sets of all weak, and strong events. We elect to prove a more general case of 
$\acyclic(\hb)$.

Recall that $\hb = (\vis \cup \so)^+$. If $\acyclic(\vis \cup \so)$, then 
$\acyclic(\hb)$, because transitive edges cannot introduce cycles. Thus, we 
have six types of edges to consider: 
edges~\ref{enum:bayouvis:wustrongstrong}-\ref{enum:bayouvis:wrowustrong} from 
$\vis$ and the seventh edge $e \raso e'$. We divide them into two groups: the 
first one consists of 
edges~\ref{enum:bayouvis:wustrongstrong}-\ref{enum:bayouvis:strongwro}, 
while the second one consists of edges~\ref{enum:bayouvis:wuweak}-7. Note that 
for the second group $e \rarb e'$ always holds.

There can be no cycles when we restrict the edges only to the ones from the 
first group, as the edges~\ref{enum:bayouvis:wustrongstrong} and 
\ref{enum:bayouvis:strongwu} are constrained by the $\tobNo$ order, and 
edge~\ref{enum:bayouvis:strongwro} leads to a local event which cannot be 
followed using only edges from the first group.

Also, there can be no cycles when we restrict the edges only to the ones from 
the second group, as all the edges are constrained by the $\rb$ relation, which 
is naturally acyclic.

Thus, a potential cycle could only form when we mix edges from both groups.
Let us assume that the cycle contains the following chain of edges: $a 
\raaa{\hb} b \raaa{\hb} ... \raaa{\hb} c \raaa{\hb} ... \raaa{\hb} d$, where 
$a,b,c,d \in E$, all the edges between $b$ and $c$ belong to the second group, 
while the other ones belong to the first group. Notice that $b \rarb c$, and
that $a,c \in \Psi$. Thus, the chain consists of a series of edges from the 
first group and a series of edges from the second group. The whole cycle can be 
combined from multiple such chains, but for simplicity, let 
us assume that it contains only one such chain and that $d = a$ (the same 
reasoning as below can be applied iteratively for multiple interleavings of 
edges from the two groups).

If $b \in \Psi$, then $a \raaa{\tobNo} b$ 
(edges~\ref{enum:bayouvis:wustrongstrong} and \ref{enum:bayouvis:strongwu}), 
and since $b \rarb c$, also $b \raaa{\tobNo} c$ (see the additional 
observations in the beginning of the proof). A contradiction: $a \raaa{\tobNo} 
b \raaa{\tobNo} c \raaa{\tobNo} a$.

If $b \in \Omega$, then $a \in \Psi_s$, and $a \raaa{\TOBdel} b$ 
(edge~\ref{enum:bayouvis:strongwro}). Either $a \raaa{\tobNo} c$, or $c 
\raaa{\tobNo} a$. In the former case we end up with a similar contradiction as 
above: $a \raaa{\tobNo} c \raaa{\tobNo} a$. In the latter case, since $c 
\raaa{\tobNo} a$, also $c \raaa{\TOBdel} b$ (the message $\msg_\TOB(c)$ is 
$\tobdeliver$ed before the message $\msg_\TOB(a)$). However, $b \rarb c$, which 
means that the message $\msg_\TOB(c)$ was not even $\tobcast$ yet when $b$ 
executed. A contradiction.

\noindent \textbf{Single order.}
Since there are no pending strong operations (because eventually every 
message is $\tobdeliver$ed and the operations finish), we have to simply prove 
that $\vis \cap (E \times S) = \ar \cap (E \times S)$, where $S = \{e : \lvl(e) 
= \strong\}$. In other words, for any two events $e \in E, e' \in S$: $e \ravis 
e' \iff e \raar e'$.

Let us begin with $e \ravis e' \Rightarrow e \raar e'$. Either $e \in \Psi$, 
and thus $e \raaa{\tobNo} e'$ (edge~\ref{enum:bayouvis:wustrongstrong}), or 
$e \in \Omega$ (edge~\ref{enum:bayouvis:wrowustrong}). In both cases $e \raar 
e'$.

Now let us consider $e \raar e' \Rightarrow e \ravis e'$. Either $e \in \Psi$, 
or $e \in \Omega$. In the former case, $e \raaa{\tobNo} e'$, and thus $e$ must 
be included in $\trace(e')$, which means that $e \ravis e'$ 
(edge~\ref{enum:bayouvis:wustrongstrong}). In the latter case, $e \rarb e'$ 
(by construction of $\ar$), and thus also $e \ravis e'$ 
(edge~\ref{enum:bayouvis:wrowustrong}).

\noindent \textbf{Return value consistency.}
Since for a strong event $e$, $\perc(e) = \ar$ and $\fcontext(A, e) = 
\context(A, e)$. Thus, for each event $e \in E$, we need to show that: 
$\rval(e) = \FF(\op(e), \fcontext(A, e))$. We base our reasoning below on 
essential $\vis$ edges and $\perc(e)$ order.

Firstly, observe that we can exclude from $\fcontext(A, e)$ all local events 
which by the definition of an RO operation are irrelevant for the computation 
of $\FF$. Thus, let $C = (E_C, \op, \vis, \perc(e))$, where $E_C = \{e' \in 
\Psi : e' \ravis e \}$.

Then, recall that $\rval(e)$ is obtained by calling $\boxx.\execute$ on the 
$\boxx$ object in state $S_\alpha$, where $\alpha = \trace(e)$, and that 
$\rval(e) = \FF(\opn, C_\alpha)$, where $C_\alpha = (E_\alpha, \op_\alpha, 
\vis_\alpha, \ar_\alpha)$ is a context constructed from $\alpha$ as defined in 
Section~\ref{sec:state_object}. It suffices to show that the context $C$ is 
isomorphic with $C_\alpha$, which we do below.

Clearly, by construction of $\vis$, if $e' \ravis e$ and $e' \in E_C$, then
$\req(e') \in \alpha$. Thus, $E_\alpha$ consists of the $\Request$ records of 
the events in $E_C$. By the way how $\Request$ records are constructed 
(Algorithm~\ref{alg:bayou_modified} line~\ref{alg:bayou_modified:new_request}), 
for any given event $e \in E_C$, $\op_\alpha(\req(e))$ equals $\op(e)$. Also,
for any two events $f,g \in E_C$, $f \rapar{e} g \iff \req(f) \raaa{\ar_\alpha} 
\req(g)$, which follows trivially from the construction of $\perc(e)$. It 
remains to show that for any two events $f,g \in E_C$, $f \ravis g \iff \req(f) 
\raaa{\vis_\alpha} \req(g)$.

If $g \in \Psi_w$ and $f \ravis g$, then $\req(f) \in \trace(g)$, and thus
$\req(f).\dot \in \req(g).\ctxsf$, which implies $\req(f) \raaa{\vis_\alpha} 
\req(g)$.

If $g \in \Psi_w$ and $\req(f) \raaa{\vis_\alpha} \req(g)$, then 
$\req(f).\dot \in \req(g).\ctxsf$, and thus $\req(f) \in \trace(g)$, which 
implies $f \ravis g$.

If $g \in \Psi_s$ and $f \ravis g$, then $f \raar g$ (by Single Order), and 
thus $f \raaa{\tobNo} g$. Since $\req(g)$ is committed at the time of $e$'s 
execution ($\req(g) \in \alpha$ and $\lvl(g) = \strong$), so is $\req(f)$ but 
its position on the $\committed$ list is earlier ($f \raaa{\tobNo} g$). Because 
the order of requests in the trace is based on the $\executed$ list, whose 
order is consistent with the order of the $\committed$ list, $\req(f)$ precedes 
$\req(g)$ in $\alpha$, which implies $\req(f) \raaa{\ar_\alpha} \req(g)$. Then, 
by construction of $C_\alpha$, $\req(f) \raaa{\vis_\alpha} \req(g)$.

If $g \in \Psi_s$ and $\req(f) \raaa{\vis_\alpha} \req(g)$, then $\req(f) 
\raaa{\ar_\alpha} \req(g)$, and thus $\req(f)$ precedes $\req(g)$ in $\alpha$.
Since $\req(g)$ is committed at the time of $e$'s execution, both $\req(f)$ and
$\req(g)$ belong to the $\committed$ list during the $e$'s execution, which 
implies that $f \raaa{\tobNo} g$. Thus, $f \raar g$, and by Single Order, $f 
\ravis g$.

Thus, $C$ is isomorphic with $C_\alpha$.

\noindent \textbf{Convergent perceived arbitration.}
We now show, that for each event $e \in E$ there exist only a finite number of 
weak events $e'$, such that the prefixes of $\perc(e')$ and $\ar$ up to the 
event $e$ differ, which is a sufficient condition to prove $\CPERC(\weak)$.

If $e \in \Psi$, then eventually on each replica $\msg_\TOB(e)$ is 
$\tobdeliver$ed, and $\req(e)$ is committed and executed. Thus, from some 
point, the trace of each subsequent event $e'$ contains $\req(e)$, preceded by 
requests of events $e''$ committed earlier, such that $e'' \raaa{\tobNo} e$. 
Both $\ar$ and $\perc(e')$ are constructed by first ordering shared events and 
then interleaving them with local events using the same procedure. In both 
$\ar$ and $\perc(e')$, $e$ is preceded by the same shared events $e''$, such 
that $e'' \raaa{\tobNo} e$. Then, it is also preceded by the same local events, 
which means the prefixes of $\perc(e')$ and $\ar$ up to $e$ are equal. 

If $e \in \Omega$, then eventually the requests of all shared events $e''$, 
such that $e'' \raar e$, are committed and executed on each replica. Then, from 
some point, the trace of each subsequent event $e'$ contains the requests of 
events $e''$, ordered by $\tobNo$. Thus, $e$ is preceded in both $\ar$ and 
$\perc(e')$ by the same shared events $e''$. Because both $\ar$ and $\perc(e')$ 
are interleaved with local events using the same procedure, $e$ is also 
preceded in both $\ar$ and $\perc(e')$ by the same local events, which means 
the prefixes of $\perc(e')$ and $\ar$ up to $e$ are equal.

\noindent \textbf{Real-time order.}
We need to show that arbitration order respects real-time order of strong 
operations, i.e., $\rb \cap (S \times S) \subseteq \ar$, where $S = \{e : 
\lvl(e) = \strong\}$. In other words, for any two $e,e' \in S$: $e \rarb e' 
\Rightarrow e \raar e'$.

Clearly, if $e \rarb e'$, then $e \raaa{\tobNo} e'$ (see the additional 
observations in the beginning of the proof). Thus, $e \raar e'$ (by 
construction of $\ar$).
\end{proof}

Now, let us continue with the proof of the guarantees offered by \ACTBayou in 
the asynchronous runs.

\bayouasynchronous*

\begin{proof}
To show the inability of \ACTBayou to satisfy $\LIN(\strong, \FF)$ in 
asynchronous runs, it is sufficient to observe that due to some of the 
$\tobcast$ messages not being $\tobdeliver$ed, some of the strong operations 
remain pending. A pending operation's return value equals $\nabla$ 
which is unreconcilable with the requirements of the predicate $\RVAL(\FF)$.

The proof regarding the guarantees of the weak operations is similar to the one 
for the stable runs, thus we rely on it and focus only on differences between 
stable and asynchronous runs that need to be addressed. Now for any given
arbitrary asynchronous run of \ACTBayou represented by a history $H = (E, \op, 
\rval, \rb, \sss, \lvl)$ we have to find suitable $\vis$, $\ar$ and $\perc$, 
such that $A = (H, \vis, \ar, \perc)$ is such that $A \models \FEC(\weak, \FF)$.

\noindent \textbf{Additional observations.}
The same observations apply as in case of stable runs, with the only 
distinction that some strong events $e$ remain pending due to the lack of 
$\tobdeliver$y of $\msg_\TOB(e)$. In such cases $\trace(e)$ is undefined.

Now let us make one more observation: the request of a weak updating event $e$ 
whose $\msg_\TOB(e)$ is never $\tobdeliver$ed, even though it never commits,  
eventually \emph{settles}, i.e. it is eventually executed and is never 
rolled back after that execution. It is so, because after $r = \req(e)$ is 
$\rbdeliver$ed by each replica and placed on the $\tentative$ list, only a 
finite number of other requests can commit (due to the properties of TOB in 
asynchronous runs), and also only a finite number of other requests can have a 
lesser $\Request$ record (as defined by the operator $<$ in 
Algorithm~\ref{alg:bayou}) and thus precede $r$ in the $\tentative$ list (due 
to monotonically increasing clocks on each replica). Thus, once $r$ is placed 
on the $\toBeExecuted$ list, it eventually executes, and when executed $r$ can 
be rolled back at most a finite number of times, due to a commit of other 
request, or a lesser $\Request$ being inserted into the $\tentative$ list.

\noindent \textbf{Arbitration.}
We construct the total order relation $\ar$ by sorting all shared events 
based on the order in which their respective $\tobcast$ messages are 
$\tobdeliver$ed, i.e., respecting the $\tobNo$ order. Shared events whose 
messages are not $\tobdeliver$ed are ordered after those whose messages are 
$\tobdeliver$ed, with weak updating events appearing first, ordered relatively 
based on their $\Request$ records, followed by pending strong events.

Next, we interleave the shared events with local events in the following 
way: each local event $e$ occurs in $\ar$ after the last \emph{non-pending} 
shared event $e'$ such that $e \not\rarb e'$. Thus, for each non-pending shared 
event $e'$ the following holds $e \raar e' \Rightarrow e \rarb e'$. The 
relative order of local events is irrelevant.

We construct the perceived arbitration order $\perc(e)$ for each event $e$, in 
the same way as in case of stable runs, i.e. using $\trace(e)$, the 
remaining shared events from $\ar$, and finally interleaving the constructed 
sequence with local events as in case of $\ar$ (so that for each local event 
$e'$ and each non-pending shared event $e''$, the following holds $e' \rapar{e} 
e'' \Rightarrow e' \rarb e''$.

For a pending strong event $e$, which was not executed at all, we let 
$\perc(e) = \ar$.

Note that for a non-pending strong event $e$, $\perc(e) = \ar$. This is because 
$e$ executes once $\req(e)$ is on the $\committed$ list, and its position on 
the list is determined by the $\tobNo$ order, which means that its trace will 
contain exactly all the shared events preceding $e$ in $\ar$.

\noindent \textbf{Visibility.}
We construct the visibility relation in the same way as in the stable 
runs case. However, we remove edges to and from pending strong events. 
Since pending operations do not provide a return value, no edge to a pending 
event is essential. Also, as we guarantee only eventual visibility for weak 
events, edges to strong events are not necessary to satisfy $\EV(\weak)$. 
Moreover, edges \emph{from} pending events are not needed either, because by 
definition a pending event is never followed in $\rb$ by any other event (which 
is a requirement to fail the test for $\EV$). Again, for all 
edges~\ref{enum:bayouvis:wuweak}-\ref{enum:bayouvis:wrowustrong}, $e \rarb e'$.

Having defined $A$ (through $\vis$, $\ar$ and $\perc$), it now remains to show
that $A \models \FEC(\weak, \FFnnc)$, or more specifically $A \models 
\EV(\weak) \wedge \NCC(\weak) \wedge \FRVAL(\weak) \wedge \CPERC(\weak)$.

\noindent \textbf{Eventual visibility.}
We prove now that eventual visibility is satisfied for all \emph{weak} events:
\begin{itemize}
\item each non-pending shared event $e$, such that $\msg_\TOB(e)$ is 
eventually $\tobdeliver$ed on each replica, is visible to all subsequent 
non-pending events from some point, because $r = \req(e)$ is placed on the 
$\committed$ list on each replica, thus $r$ is eventually executed and never 
rolled back, and is included in the trace of the $\boxx$ object from some point 
(\ref{enum:bayouvis:strongwu}, \ref{enum:bayouvis:strongwro} and 
\ref{enum:bayouvis:wuweak}),
\item each weak updating event $e$, such that $\msg_\TOB(e)$ is \emph{not} 
eventually $\tobdeliver$ed on each replica, is visible to all subsequent 
non-pending events from some point, because it settles (see the additional 
observations in the beginning of the proof) and is included in the trace of the 
$\boxx$ object on each replica from some point (\ref{enum:bayouvis:wuweak}),
\item each local event $e$ is visible to all subsequent local events from 
some point (\ref{enum:bayouvis:wrowro}),
\item each local event $e$ is visible to all subsequent non-pending 
shared events from some point, because by construction of $\ar$ there is 
only a finite number of events $e'$ such that $e \not\raar e'$ 
(\ref{enum:bayouvis:wrowustrong}).
\end{itemize}

\noindent \textbf{No circular causality.}
We use exactly the same reasoning as in the stable runs case to show that 
$\acyclic(\hb)$ holds true.

\noindent \textbf{Return value consistency.}
Again, we use exactly the same reasoning as in the stable runs case to show
that for each \emph{weak} event $e \in E$: $\rval(e) = \FFnnc(\op(e), 
\fcontext(A, e))$. Although this time we only need to prove return value 
consistency for weak operations, it can be shown that it also holds for 
\emph{non-pending} strong events.

\noindent \textbf{Convergent perceived arbitration.}
We now show, that for each non-pending\footnote{We can exclude pending events, 
because according to the construction of $\vis$ they are not visible to any 
other event, and thus automatically satisfy the requirements of the $\CPERC$ 
predicate.} event $e \in E$ there exist only a finite number of weak events 
$e'$, such that the prefixes of $\perc(e')$ and $\ar$ up to the event $e$ 
differ, which is a sufficient condition to prove $\CPERC(\weak)$.

If $e \in \Psi$ and $\msg_\TOB(e)$ is eventually $\tobdeliver$ed, then the same 
logic can be applied as in case of stable runs to show that from some point for 
each subsequent event $e'$ the prefixes of $\perc(e')$ and $\ar$ up to $e$ are 
equal. 

If $e \in \Psi_w$ and $\msg_\TOB(e)$ is never $\tobdeliver$ed, then it 
eventually settles (see the additional observations in the beginning of the 
proof) and thus also the same logic can be applied as in case of stable runs, 
with the distinction that $e$ is preceded in $\ar$ and $\perc(e')$ not only by 
events $e''$ whose requests are committed, but also by events $e''$, such that 
$\req(e'') < \req(e)$.

If $e \in \Omega$, then eventually the requests of all shared events $e''$, 
such that $e'' \raar e$ (none of which are pending by the construction of 
$\ar$), are either committed, or settled, and executed on each replica. 
Then, from some point, the trace of each subsequent event $e'$ contains the 
requests of events $e''$, ordered by both $\tobNo$, and based on their 
$\Request$ records. Thus, $e$ is preceded in both $\ar$ and $\perc(e')$ by the 
same shared events $e''$. Because both $\ar$ and $\perc(e')$ are interleaved 
with local events using the same procedure, $e$ is also preceded in both $\ar$ 
and $\perc(e')$ by the same local events, which means the prefixes of 
$\perc(e')$ and $\ar$ up to $e$ are equal. 
\end{proof}


\end{document}